\documentclass[11pt]{article}

\usepackage[a4paper,left=25mm,right=20mm,top=25mm,bottom=20mm]{geometry}
\usepackage[T1]{fontenc}
\usepackage[utf8]{inputenc}
\usepackage{lmodern}
\usepackage{setspace}
\usepackage{amsmath,amssymb,amsthm,bm}
\usepackage{booktabs}
\usepackage{makecell}
\usepackage{multirow}
\usepackage{subcaption}
\usepackage{algorithm}
\usepackage{algpseudocode}
\usepackage{tikz}
\usepackage{graphicx}
\usepackage{float}
\usepackage{url}
\usepackage{natbib}
\usepackage[hidelinks]{hyperref}
\newtheorem{theorem}{Theorem}
\newtheorem{lemma}[theorem]{Lemma}
\newtheorem{corollary}[theorem]{Corollary}
\theoremstyle{remark}
\newtheorem{remark}[theorem]{Remark}

\providecommand{\BIBand}{and}
\bibpunct[, ]{(}{)}{,}{a}{}{,}

\title{Solving Hard Instances from Knapsack and Bounded Knapsack Problems: A new state-of-the-art solver}

\author{
Renan F. F. da Silva\\
Institute of Computing, University of Campinas\\
\texttt{renan.silva@students.ic.unicamp.br}
\and
Thiago Alves de Queiroz\\
Instituto de Matem\'atica e Estat\'istica, Universidade Federal de Goi\'as\\
\texttt{taq@ufg.br}
\and
Rafael C. S. Schouery\\
Institute of Computing, University of Campinas\\
\texttt{rafael@ic.unicamp.br}
}

\date{}

\begin{document}

\maketitle

\begin{abstract}
The \emph{Knapsack Problem} (KP) and its generalization, the \emph{Bounded Knapsack Problem} (BKP), are classical NP-hard problems with numerous practical applications. Although the solvers \texttt{COMBO} and \texttt{BOUKNAP} were introduced more than 25 years ago, they remain the state-of-the-art for KP and BKP, respectively, due to their highly optimized implementations and sophisticated bounding techniques. In this work, we present \texttt{RECORD} (\emph{\underline{RE}fined \underline{COR}e-based \underline{D}ynamic programming}), a new solver for both problems that builds upon key components of \texttt{COMBO}---including core- and state-based dynamic programming, weak upper bounds, and surrogate relaxation with cardinality constraints---while introducing novel strategies to address its limitations. Specifically, we propose \emph{multiplicity reduction} to limit the number of distinct item types, combined with on-the-fly item aggregation, refined fixing-by-dominance techniques, and a new divisibility bound that strengthens item fixing and symmetry breaking. These enhancements enable our solver to retain \texttt{COMBO}'s near-linear-time behavior on most instances while achieving substantial speedups on more challenging cases. Computational experiments demonstrate that \texttt{RECORD} consistently outperforms \texttt{COMBO} and \texttt{BOUKNAP} on difficult benchmark sets, including recent instances from the literature, often by several orders of magnitude, thereby establishing a new state-of-the-art solver for both problems.
\end{abstract}

\noindent\textbf{Keywords:} Knapsack Problem, Dynamic Programming, Fixing-by-Dominance, Reductions, Combinatorial Branch-and-Bound

\noindent\textbf{Funding:} This research was supported by the State of Goi\'as Research Foundation (FAPEG); the National Council for Scientific and Technological Development (CNPq), grant numbers 408722/2023-1, 315555/2023-8, 444679/2024-3, 407005/2024-2, 312345/2023-2, and 404779/2025-5; and the S\~ao Paulo Research Foundation (FAPESP), grants \#2023/17964-4 and \#2022/05803-3.

\section{Introduction}
In the \emph{Knapsack Problem} (KP), also known as the \emph{Binary Knapsack Problem}, we are given a set of items \( I = \{1, \ldots, n\} \), where each item \( i \in I \) is characterized by a profit \( p_i \), a weight \( w_i \), and an efficiency \( p_i / w_i \), together with a knapsack of capacity \( W \). The objective is to select a subset \( I' \subseteq I \) such that the total weight of the selected items does not exceed the knapsack capacity, i.e., \( \sum_{i \in I'} w_i \leq W \), while maximizing the total profit \( \sum_{i \in I'} p_i \). In the \emph{Bounded Knapsack Problem} (BKP), each item \( i \in I \) is additionally associated with an availability \( d_i \), indicating that up to \( d_i \) copies of item \( i \) may be selected.

The KP is a classical and widely studied NP-hard problem in combinatorial optimization. Despite its simple binary formulation with a single linear constraint, it appears as a subproblem in numerous important applications. For instance, given $D$ dimensions, in the \emph{multidimensional (or vector) knapsack problem}, each item has a weight \( w_i^d \) in each dimension $d = 1,\ldots, D$, which makes it a general integer linear program with binary variables and positive coefficients \citep{Cacchiani_2022}. In the \emph{multiple knapsack problem} \citep{Pisinger_1999, Lalonde_2022}, instead of a single knapsack, several knapsacks are given, potentially with different capacities. The \emph{bin packing problem} (BPP) considers a set of items with associated weights and an unlimited number of bins of identical capacity, with the objective of minimizing the number of bins required to pack all items \citep{Wei_2020, Baldacci_2024, Silva_2025}. The set partitioning formulation solved via a branch-and-price algorithm is a successful approach for solving the BPP, in which the KP arises as the pricing problem. The KP is also a fundamental subproblem in \emph{capacitated vehicle routing problems}, in which vehicles with limited capacity must serve clients with heterogeneous demands \citep{Pessoa_2020}.
A relevant extension of the KP is the \emph{Knapsack Problem with Conflicts} (KPC) \citep{Coniglio_2021}, in which compatibility constraints prohibit certain pairs of items from being selected simultaneously. When such conflict constraints are imposed within the BPP, the resulting problem is the \emph{Bin Packing Problem with Conflicts} \citep{Sadykov_2013, Wei_2020}, which constitutes a generalization of the classical \emph{Vertex Coloring Problem} \citep{Morrison_2014}.
Additional classes of constraints have been investigated in the BPP literature, including precedence constraints, where bins are indexed by time slots, and constraints are imposed on pairs of items to enforce minimum and/or maximum temporal separations \citep{Letelier_2022}. Within branch-and-price frameworks, these precedence constraints give rise to pricing problems that generalize the KPC.

Two books have been dedicated exclusively to the KP and its variants. The first, by~\citet{Martello_1990}, presents around 200 results obtained over the preceding thirty years. The second, by~\citet{Kellerer_2004}, provides a comprehensive overview of the substantial advances achieved during the subsequent thirteen years, citing more than 500 references. In addition,~\citet{Cacchiani_2022} presented a recent survey on knapsack problems, reviewing the latest developments on the classical and related single-knapsack variants. The authors discussed decades of progress and highlighted the continued relevance of this simple yet challenging problem.

Although numerous studies have been published over the past few decades, the fastest known exact solvers for the KP and BKP remain \texttt{COMBO}~\citep{Martello_1999} and \texttt{BOUKNAP}~\citep{Pisinger_2000}, respectively. Remarkably, these solvers were proposed more than 25 years ago, and no improvements have been reported since then. This longevity can be attributed to the solvers' meticulous design and highly optimized implementations, which allow them to solve certain instances in linear time. In particular, \texttt{COMBO} incorporates several low time-complexity features, making many large-scale instances relatively easy to solve. Moreover, the implementation of \texttt{COMBO} goes well beyond what is typically found in optimization software, where the primary focus is often on demonstrating ideas. \texttt{COMBO} places a strong emphasis on low-level optimization, minimizing memory allocation, optimizing cache usage, and maintaining small constant factors across its components. As a result, both the strength of its features and the quality of its implementation leave little room for further improvement. Any new solver aiming to surpass \texttt{COMBO} must therefore adhere to an equally high standard of code efficiency, as even a mediocre implementation of features with the same asymptotic complexity can lead to computing times that are orders of magnitude slower. This explains, in part, the lack of substantial progress over the past 25 years.

Several techniques employed by these solvers were proposed by previous works. A predecessor of \texttt{COMBO} is the \texttt{MINKNAP} algorithm~\citep{Pisinger_1997} that applies a core-based dynamic programming~(DP) approach combined with linear programming~(LP) bounds to prune states and fix items. \texttt{COMBO} builds upon \texttt{MINKNAP} by integrating additional techniques from the literature, such as surrogate relaxation with cardinality constraints (a similar Lagrangian approach was introduced previously by~\citet{Martello_1997}), a divisibility bound that tightens the capacity using the greatest common divisor of item weights, and a pairing heuristic for generating incumbent solutions. In contrast, \texttt{BOUKNAP} extends \texttt{MINKNAP} to the BKP by dealing with item availabilities, which makes some fixing techniques more efficient, although it does not include all the enhancements implemented in \texttt{COMBO}.

Furthermore, both \texttt{BOUKNAP} and \texttt{COMBO} are so efficient that, for most benchmark instances, the most time-consuming step is often the sorting procedure, which runs in \(O(n\log n)\) and sorts items by non-increasing efficiency. In some cases, they are even faster, as both solvers employ a partial-sorting strategy based on a median-finding method that partitions items into subintervals sorted relative to each other, performing full sorting only on the necessary intervals. This optimization yields near-linear runtimes on many instances.

Nonetheless, some instances remain challenging, forcing the algorithms to operate close to their worst-case time complexities of
\( O(nW) \) and \( O\!\left(W \sum_{i \in I} \log_2 d_i\right) \), respectively.
These bounds stem from the DP nature of both algorithms, which may require the enumeration of up to that many distinct states. We next discuss such instances, which, although originally proposed for the KP, can be adapted to the BKP by grouping identical items, without significantly reducing their computational difficulty.

Several difficult instances were proposed by \citet{Pisinger_2005}, with the number of items ranging from \(20\) to \(10^4\).
The Pisinger benchmark combines classes from earlier works, extended with larger coefficients (i.e., weights, profits, and capacities), as well as new classes with distinct structural properties that remain challenging even for small coefficients.
While most instances are relatively easy, a small subset poses difficulties for \texttt{COMBO}, requiring several seconds to solve, and the hardest instances may take several minutes.

In particular, we observe that Pisinger instances with large coefficients for which \texttt{COMBO} exhibits long running times often satisfy $p_i = w_i + C$ for some fixed $C \in \mathbb{Z}$. A closer examination of these instances indicates that the difficulty arises when the solution process reduces to a challenging subset-sum problem, either in the original formulation or in integer surrogate relaxations.

On the other hand, Pisinger instances with small coefficients and long runtimes exhibit two main structures. The first consists of classes generated from a \emph{span} of items, where a small base set generates all others via integer multiples, introducing strong symmetry as many subsets share identical total weights and profits. The second corresponds to the \emph{profit ceiling} class, in which weights are random and profits are defined as \( p_i = d \lceil w_i / d \rceil \) for a given integer \( d \). These instances can be difficult when linear programming bounds are weak due to divisibility issues.

More recently, \citet{Miles_2021} proposed a more diverse benchmark using instance-space analysis, while preserving a difficulty level comparable to the Pisinger benchmark. Furthermore, \citet{Jooken_2022} introduced a large benchmark of 3{,}240 hard KP instances with fewer items, ranging from 400 to 1{,}200. These instances induce extremely high runtime and memory requirements for \texttt{COMBO}, with memory consumption often reaching several tens of gigabytes. In these instances, item efficiencies are close to one with small noise, and items are designed to generate a very large number of near-equivalent solutions with slightly different profits and weights. As a result, we observed that the current solvers must maintain approximately \( W/10 \) non-dominated states, leading to extremely long runtimes for large knapsack capacities such as \( W = 10^{10} \). Indeed, in the experiments of \citet{Jooken_2022}, all 263 instances not solved within the two-hour time limit have this knapsack capacity. This contrasts with the Pisinger benchmark, where large capacities also occur, but the number of non-dominated states is typically only a small fraction of \( W \), preventing runtimes from reaching several hours.

In this work, we present a new exact solver based on DP for both the KP and BKP. Our approach leverages key features from \texttt{COMBO}, including its core DP technique, LP upper bounds, and surrogate relaxation with cardinality constraints. An interesting observation is that our algorithm, as well as several prior methods such as \texttt{COMBO} and \texttt{BOUKNAP}, can be viewed as combinatorial branch-and-bound methods, where the DP states correspond to branch-and-bound nodes. These nodes can be pruned either by LP bounds computed via combinatorial techniques or by fixing-by-dominance mechanisms arising from the DP substructure.

The short runtimes of \texttt{COMBO} on many instances can be attributed to the above-mentioned features, which either allow most items to be fixed through bounding techniques or keep the set of non-dominated states very small. However, \texttt{COMBO} struggles on several instances due to symmetries induced by items with identical efficiency values, as well as the presence of dominated or highly similar items that cannot be eliminated by bounds. These characteristics lead to a large number of non-dominated states and, consequently, long runtimes. In addition, weak LP bounds caused by divisibility issues further degrade performance.

To overcome these limitations and address the most challenging instances, our solver incorporates several new techniques. These include \emph{multiplicity reduction}, which decreases the number of distinct items; on-the-fly item aggregation; refined fixing-by-dominance rules; and a new divisibility bound. Collectively, these techniques enable more aggressive item fixing, reduce the size of the state space, and improve the effectiveness of fixing strategies by breaking symmetries. In addition, we introduce new heuristic procedures designed to rapidly obtain high-quality solutions for hard instances.

By integrating these reduction and bounding techniques, our solver preserves the near-linear efficiency of \texttt{COMBO} on most instances while significantly extending its effectiveness to more difficult instances, particularly those proposed by~\citet{Pisinger_2005}. Although our solver exhibits slightly worse performance on some easy instances due to the overhead introduced by the new features, computational experiments show that it is several times faster than \texttt{COMBO} on hard instances, including the recent benchmark set of~\citet{Jooken_2022}.

All proofs are provided in the appendices. The remainder of the paper is organized as follows. Sections~\ref{sec::core} and~\ref{sec::proposal} present the core-based DP algorithm and an overview of our approach, respectively. Section~\ref{sec::literature_reductions} reviews the upper-bounding techniques from the literature incorporated into our algorithm, including the linear relaxation, weak upper bounds, the state-based LP bound, and relaxations with cardinality constraints. Section~\ref{sec::our_reductions} introduces bounds on the maximum capacity and minimum profit of states, a guarded DP extension, our multiplicity reduction combined with on-the-fly item aggregation, a divisibility bound, fixing-by-dominance techniques, and our heuristics. Section~\ref{sec::experiments} reports the computational experiments, and Section~\ref{sec::conclusion} concludes the paper and discusses future research directions.

\section{The Core-based Dynamic Programming}~\label{sec::core}
We adopt the core-based DP approach implemented in~\texttt{BOUKNAP}. Below, we reinterpret its recurrence so that it resembles the classical recursive formulation of the BKP, which is more intuitive, and we use this reinterpretation throughout the description of our algorithm.

Without loss of generality, assume that items are sorted in non-increasing order of efficiency $e_i = p_i / w_i$. Let $b$ denote the break item, i.e., the index such that
$\sum_{i=1}^{b-1} d_i w_i \le W$ and $\sum_{i=1}^{b} d_i w_i > W$. Define modified profits and weights as $\overline{p}_i = -p_i$ and $\overline{w}_i = -w_i$ for $i < b$, and $\overline{p}_i = p_i$, $\overline{w}_i = w_i$ otherwise. We call the \emph{break solution} the solution defined by the first $b -1$ items, including all their availabilities, whose value and weight are $\widehat{p} = \sum_{i=1}^{b-1} d_i p_i$ and $\widehat{w} = \sum_{i=1}^{b-1} d_i w_i$. We may assume that $\widehat{w} < W$; otherwise, the knapsack is completely filled with the most efficient items, implying optimality.

As observed by~\citet{Pisinger_1995}, optimal solutions can often be obtained by removing a small number of \emph{left items} ($i < b$) and adding a small number of \emph{right items} ($i \ge b$). Motivated by this observation, consider the following recurrence that uses the break solution as a pre-assigned solution and is defined over an arbitrary permutation $A = \{A_1,\ldots, A_n\}$ of $\{1,\ldots,n\}$:
\[
dp_A(i,w) =
\max_{\substack{0 \le k \le d_{A[i]} \\ 0 \le w - k\,\overline{w}_{A[i]} \le 2W}}
\left\{
dp_A(i-1,\, w - k\,\overline{w}_{A[i]}) + k\,\overline{p}_{A[i]}
\right\}.
\]
The base case is $dp_A(0,w) = \widehat{p}$ for $\widehat{w} \le w \le 2W$, and $dp_A(0,w) = -\infty$ for $0 \le w < \widehat{w}$. The recurrence can be solved in time $O\!\left(\sum_{i=1}^n d_i W\right)$, and the optimal value is $dp_A(n,W)$. Observe that if instead we take the pre-assigned solution as empty (i.e., $\widehat{p}=\widehat{w}=0$), the recurrence reduces to the classical textbook formulation of the BKP.

Several upper-bounding and item-fixing techniques (see, e.g.,~\citet{Pisinger_1997,Pisinger_2000}) achieve their best performance when the permutation $A$ is structured so that left items ($i<b$) appear in non-increasing order of efficiency and right items ($i\ge b$) in non-decreasing order; we refer to any permutation satisfying these properties as a \textbf{good permutation}. A canonical example is $A' = \{b-1,\, b,\, b-2,\, b+1,\, b-3,\, b+2,\, \ldots\}$. When our recurrence is evaluated under this canonical permutation, it becomes mathematically equivalent to the \emph{core-based dynamic programming} approach described in~\citet{Pisinger_1995,Pisinger_2000,Martello_1999}. For a position $j$ in $A'$, define its \textbf{core} as the interval $[l_j,r_j]$, where $l_j$ is the smallest left index and $r_j$ the largest right index appearing in the prefix $A'[1,j]$; if no left (resp.\ right) item appears, set $l_j=b$ (resp.\ $r_j=b-1$). The name of the algorithm stems from the fact that solving the recurrence for indices $1,\ldots,j$ is equivalent to solving the instance restricted to the core $[l_j,r_j]$, under the assumption that all items with $i<l_j$ are included and all items with $i>r_j$ are excluded.

As noted by~\citet{Pisinger_1995}, fully sorting items by efficiency may be a bottleneck for many easy instances, since optimality is often achieved by modifying only a few items of the break solution. Therefore, state-of-the-art algorithms adopt \emph{lazy sorting}~\citep{Pisinger_1995,Pisinger_1997,Pisinger_2000,Martello_1999}, constructing the permutation $A^*$ on the fly as $A^* = \{\, l^1,\, r^1,\, d^1,\, l^2,\, d^2,\, r^2,\, \ldots \,\}$, where $l^k < l^{k+1}$, $r^k > r^{k+1}$, and $d^k$ denotes items excluded from enumeration by fixing techniques. The construction of $A^*$ in our algorithm is detailed in Appendix~\ref{sec::permutation}.

The recurrence \( dp_A \) can be solved by enumerating every cell \( (i, w) \); however, better performance can be achieved through a state-based enumeration. A state \( s \) is characterized by a profit \( s_p \), a weight \( s_w \), and auxiliary information that enables the recovery of the solution represented by the state (e.g., recording that \( s \) was obtained from another state \( s' \) by adding a copy of item \( i \)). For simplicity, we henceforth represent a state \( s \) solely by the pair \( (s_p, s_w) \), except when explicitly stated otherwise, as in Appendix~\ref{sec::recover}, which describes our solution recovery procedure. Since \( s_w \) may exceed \( W \), such a state does not necessarily correspond to a feasible solution. In any case, it is important to note that the enumeration of certain infeasible states \( (s_p, s_w) \) with \( W < s_w \leq W + \widehat{w} \) may be necessary to obtain an optimal solution.

We solve the recurrence \( dp_A \) using a state-based bottom-up approach combined with several fixing techniques. For each index \( j \) from 1 to \( n \), we compute the states \( S_j \) by enumerating all items in the subarray \( A[1, j] \). Specifically, we consider the states derived from the base state \( (\widehat{p}, \widehat{w}) \) by evaluating all possibilities of either removing left items or including right items with indices in the interval \(A[1, j] \), according to their respective available number of copies. A state \( (s_p, s_w) \in S_j \) is said to be dominated if there exists another state \( (s_p', s_w') \in S_j \) such that \( s_p' \geq s_p \) and \( s_w' \leq s_w \). Since every state derived from \( (s_p, s_w) \) or \( (s_p', s_w') \) is obtained by enumerating the subarray \( A[j+1, n] \), all states derived from \( (s_p, s_w) \) are dominated by those derived from \( (s_p', s_w') \). Therefore, at iteration $j$, it suffices to compute the set of non-dominated states $S_j$ obtained by enumerating the subarray $A[1,j]$. If two states mutually dominate each other, one must still be retained.

\begin{remark}
The above process can be interpreted as a branch-and-bound algorithm in which each state \( s \in S_j \) corresponds to a node where all variables associated with items in \( A[1,j] \) are fixed to specific values, while others associated with items in \( A[j+1,n] \) remain unfixed. This interpretation enables the use of combinatorial upper bounds to prune nodes, that is, to eliminate states from \( S_j \) that cannot yield a solution better than the incumbent, i.e., the best-known feasible solution. Moreover, these upper bounds can indicate which items in \( A[j+1,n] \) may or may not contribute to transforming a state \( s \in S_j \) into an improved solution.
\end{remark}

We keep the states in $S_j$ sorted by increasing weight, and we start with \( S_0 = \{(\widehat{p}, \widehat{w})\} \). If the item \( A[j] \) is discarded by the fixing technique, then \( S_j = S_{j-1} \). On the other hand, to extend \( S_{j-1} \) to \( S_j \), we perform up to \( k \leq d_{A[j]} \) iterations. Initially, we set \( S_j^0 = S_{j-1} \). At iteration \( k \), for each state \( (s_p, s_w) \in S_j^k \), we create a new state \( (s_p + \overline{p}_{A[j]}, s_w + \overline{w}_{A[j]}) \), provided that \( 0 \leq s_w + \overline{w}_{A[j]} \leq 2W \). The newly generated states can be merged with \( S_j^k \) in time complexity \( O(|S_j^k|) \), discarding states pruned by dominance or bound, and forming a new set \( S_j^{k+1} \) ordered by increasing weight. After \( k \) iterations, the resulting non-dominated states \( S_j^{k+1} \) correspond to \( S_j \).

\begin{remark}~\label{rmk::decomposition}
The latter procedure may be enhanced by reducing the \( d_{A[j]} \) iterations to a logarithmic number through the use of \emph{binary decomposition}. Notice that the \( d_{A[j]} \) copies of item \( A[j] \) can be represented as a set of buckets with multiplicities \( \mathcal{M} = \{1, 2, 4, \ldots, 2^h, a \}\), where \( h \) is the largest integer such that \( \sum_{i = 0}^h 2^i \leq d_{A[j]} \) and $a = d_{A[j]} - \sum_{i = 0}^h 2^i$. The last bucket can be discarded when $a = 0$. This way, any integer quantity \( 0 \leq k \leq d_{A[j]} \) can be represented as a subset sum of these buckets. By iterating once for each bucket, we perform \( \lceil \log_2(d_{A[j]} + 1) \rceil \) iterations. This procedure results in a DP algorithm with worst-case time complexity \( \mathcal{O}\left( W \sum_{i \in I} \log_2 d_i \right) \), the same as that found in~\texttt{BOUKNAP}.
\end{remark}

\section{Proposal Overview}~\label{sec::proposal}
As mentioned in the previous section, our algorithm adopts a core-based DP approach. We enhance this framework by incorporating several additional strategies: (i) heuristics that identify high-quality incumbent solutions at an early stage, thereby reducing the number of nodes explored; (ii) fixing procedures based on LP relaxations and divisibility bounds; (iii) fixing-by-dominance mechanisms that allow items to be skipped during enumeration or limit the number of copies that need to be considered; (iv) item aggregation and multiplicity reduction, which group identical items and reduce the number of distinct items, respectively, thereby mitigating symmetries in the branch-and-bound algorithm; and (v) surrogate relaxations used to strengthen upper bounds.

Algorithm~\ref{main_algorithm} provides an overview of the proposed method. In lines~1 and~2, the algorithm begins by applying lazy sorting to determine \( b \), computing the break solution, and executing the initial heuristics. Line~3 initializes the first set of states \( S_0 \), which contains only the break solution, as well as the flags \( flag_{\text{LSR}} \) and \( flag_{\text{AMR}} \), indicating whether the linear surrogate relaxation and the item aggregation/multiplicity reduction are still pending (i.e., not yet applied). Line~4 initializes the current index \( j \) of the permutation vector \( A \) to \( 0 \) (note that index \( 0 \) is invalid since \( A \) is 1-indexed), together with the counters \( c_{\text{LSR}} \), \( c_{\text{HP}} \), and \( c_{\text{DIV}} \), which control the execution of the linear surrogate relaxation, the heavy primal heuristics, and the divisibility bounds, respectively. In addition, line~4 computes the corresponding thresholds \( T_{\text{LSR}} \), \( T_{\text{HP}} \), \( T_{\text{DIV}} \), and \( P_{\text{SR}} \); the first three determine when each mechanism is triggered, while \( P_{\text{SR}} \) triggers both the linear and integer surrogate relaxations.

The permutation \( A \), or more specifically \( A^* \), is generated on the fly through repeated calls to lazy sorting, and our pseudocode handles it implicitly by requesting the next unfixed item index greater than a given index. An item may have its availability partially or completely fixed by the fixing techniques, resulting in a reduced unfixed availability \( u_i \), where \( u_i = 0 \) if the item is completely fixed and \( u_i = d_i \) if none of its copies is fixed.

\algtext*{EndIf}
\algtext*{EndFor}
\algtext*{EndWhile}
\algtext*{EndRepeat}
\algtext*{EndLoop}
\algtext*{EndProcedure}
\algtext*{EndFunction}
\algrenewcommand{\algorithmicrequire}{\textbf{Input:}}
\begin{algorithm}[H]
\small
\setlength{\baselineskip}{0.6\baselineskip}
\caption{RECORD}
\begin{algorithmic}[1]
\Require items $I$ and capacity $W$
\State Find break item $b$ by using lazy sorting and compute break solution $(\widehat{p},\widehat{w})$
\State Call initial heuristics.
\State $S_0 \gets \{(\widehat{p},\widehat{w})\}$; \;  $flag_{LSR}, flag_{AMR} \gets true$
\State $j, c_{LSR}, c_{HP}, c_{DIV} \gets 0$ and compute the thresholds $T_{LSR}, T_{HP}, T_{DIV}, P_{SR}$
\While{$j \le |A|$}
    \State $j' \gets j$
    \State $j \gets$ next $j''>j$ with $A[j'']$ not fixed
    \If{$j > |A|$} \textbf{break} \EndIf
    \State $i \gets A[j]$; \; $S_j^0 \gets S_{j'}$
    \State Obtain unfixed availability $u_i \le d_i$; decompose $u_i$ into buckets $\mathcal{M} = \{m_0,\ldots,m_t\}$
    \For{$k \gets 0$ to $t$}
        \State Extend $S_j^k \to S_j^{k+1}$ using $m_k$ copies of $i$
        \If{dominance allows skip subsequent iterations} \textbf{break} \EndIf
    \EndFor
    \State $S_j \gets S_j^{k+1}$ and update counters $c_{LSR}, c_{HP}, c_{DIV}$ by adding $|S_j|$
    \If{incumbent improved} try to enhance it via the greedy completion heuristic \EndIf
    \If{$|S_j| \geq n$} apply fixing-by-dominance to find items that can be fixed \EndIf
    \State{Run sampling pairing heuristic}

    \If{$flag_{LSR}$ and ($c_{LSR}\!\ge\!T_{LSR}$ or $|S_j|>P_{SR}$)} call linear surrogate relaxation; $flag_{LSR}\!\gets\!false$ \EndIf
    \If{$|S_j|>P_{SR}$ and there are candidates} call integer surrogate relaxation; \EndIf
    \If{$c_{HP}\!\ge\!T_{HP}$} run heavy primal heuristics; $c_{HP}\!\gets\!0$; $T_{HP}\!\gets\! 2 T_{HP}$ \EndIf
    \If{$c_{DIV}\!\ge\!T_{DIV}$}
        \State Call divisibility bounds; $c_{DIV}\!\gets\!0$; $T_{DIV}\!\gets\! 2 T_{DIV}$
        \If{$flag_{AMR}$}
            \State Sort remaining items; $flag_{AMR}\!\gets\!false$
            \State Call item aggregation and multiplicity reduction
        \EndIf
    \EndIf
\EndWhile
\State \Return incumbent (optimal solution)
\end{algorithmic}~\label{main_algorithm}
\end{algorithm}

While the current index \( j \le |A| \), the algorithm proceeds as follows. Line~6 stores the previous index \( j \) as \( j' \), and line~7 identifies the next index \( j \) corresponding to a non-fixed item; the procedure terminates in line~8 if no such index exists. Otherwise, the current item \( i = A[j] \) is selected, and the initial state set \( S_j^0 \) is inherited from the previous iteration ($S_{j'})$. Lines~11--13 iterate over the buckets of the binary decomposition of \( u_i \), applying fixing-by-dominance to possibly skip the remaining iterations for item \( i \).

In line~14, the resulting set \( S_j^{k+1} \) becomes the new state set \( S_j \), and all counters are incremented by \( |S_j| \). Lines~15--16 attempt to improve the incumbent solution if it has been recently updated and apply additional dominance tests to skip unprocessed items whenever \( |S_j| \ge n \). Line~17 executes the sampling pairing heuristic, while lines~18--19 trigger the linear and integer surrogate relaxations, respectively. The integer surrogate relaxations correspond to modified BKP instances generated by the linear version (referred to as ``candidates'') and are solved by a variant of \texttt{RECORD} for which surrogate relaxations and multiplicity reductions are disabled. Lines~20--22 execute the heavy primal heuristics and divisibility bounds, with their thresholds doubled after each execution. Additionally, after the first application of the divisibility bounds, lines~23--25 fully sort the remaining items, thereby completely constructing \( A \), and apply item aggregation and multiplicity reduction.

\section{Upper Bounds}~\label{sec::literature_reductions}
We present several upper bounds from the literature that are incorporated into our algorithm, including the LP and surrogate relaxations, weak LP upper bounds, and the state-based LP bound.

\subsection{Linear Relaxation}
An integer formulation for the BKP is as follows:
\begin{align}
    \text{maximize} \quad & \sum_{i = 1}^{n} p_i x_i &&~\label{eq:bkp-obj} \\
    \text{subject to} \quad & \sum_{i = 1}^{n} w_i x_i \leq W &&~\label{eq:bkp-cap} \\
    & x_i \leq d_i && \text{for all } i \in \{1, \ldots, n\} &&~\label{eq:bkp-bounds} \\
    & x_i \in \mathbb{Z}_+ && \text{for all } i \in \{1, \ldots, n\} &&~\label{eq:bkp-int}
\end{align}
By relaxing the integrality constraints, we obtain a formulation of the \emph{Fractional Bounded Knapsack Problem} (FBKP), which provides an LP relaxation of the BKP. Since items are sorted in non-increasing order of efficiency \( e_i \), an optimal FBKP solution $x^{\text{LP}}$ takes the form:
\begin{equation}
\label{eq:optimal_fract}
x_i^{\text{LP}} = d_i \quad \text{for } i < b, \quad
x_i^{\text{LP}} = 0 \quad \text{for } i > b, \quad
x_b^{\text{LP}} = \left( W - \sum_{i=1}^{b-1} d_i w_i \right) / w_b.
\end{equation}

Given an incumbent solution value $z$, LP relaxations help with upper bounds and may identify items that must be fixed for any better solution (i.e., with value at least $z+1$). We define an \emph{unfixed availability vector} \( u \in \mathbb{Z}_+^n \), representing the range of copies of each item that an improved solution can use. Specifically, for $i < b$, an improved solution contains between $d_i - u_i$ and $d_i$ copies of item $i$; while for $i \ge b$, $u_i$ denotes the maximum number of copies of $i$ that can appear in an improved solution (possibly $u_i < d_i$).

\subsection{Weak Upper Bound}~\label{sec::weak_upper}
We use the \emph{weak upper bound} of~\citet{Dembo_1980} to compute tight \emph{unfixed availabilities} \(u_i\). For each item \(i\), the bound
\[
WB(i,e)=\widehat{p}+e\overline{p}_i+\left(W-\widehat{w}-e\overline{w}_i\right)\frac{p_b}{w_b}
\]
gives an upper bound on the FBKP value when \(e\) copies of item \(i\) are removed (\(i<b\)) or added (\(i\ge b\)). Given the incumbent value $z$, if \(WB(i,e)<z+1\), no improving solution exists removing/adding $e$ copies of $i$. Since \(WB(i,e)\) is monotone in \(e\), the largest feasible \(e\) determines a tight availability \(u_i\). Using determinant notation \(\det(a_1,a_2,a_3,a_4)=a_1a_3-a_2a_4\), \citet{Pisinger_2000} showed that \(u_i\) can be computed in constant time as
\begin{equation}
u_i = \Bigl\lfloor \frac{\det(z+1 - \widehat{p}, W - \widehat{w}, p_b, w_b)} {\Delta_i} \Bigr\rfloor, \quad\text{where}\quad \Delta_i =
\begin{cases}
-\det(p_i, w_i, p_b, w_b), & i < b,\\
\det(p_i, w_i, p_b, w_b), & i \ge b.
\end{cases} \label{eq:reduction} \end{equation}
If \(p_i/w_i=p_b/w_b\) or \(u_i>d_i\), we set \(u_i=d_i\). The resulting vector \(u\) replaces \(d\) in the core-based DP, significantly reducing the search space.

\subsection{The State-based LP bound}
The state-based LP bound is defined as follows. After performing the \( k \)-th iteration for item \( A[j] \), we check, for each state \( s \in S_j^{k+1} \), whether it can be pruned using the LP bound \eqref{eq:state_bound}. Let \( nl \) and \( nr \) denote the next items to the left and to the right that will be processed in the next iteration of a left and right item, respectively. If \( A[j] \) is a left item and \( k \) is not the last iteration for \( j \), then \( nl = A[j] \). Otherwise, \(nl\) is the next unfixed left item in the subarray \(A[j+1, n]\), which can be found by a call to the sorting algorithm. The item \( nr \) is defined analogously. If there is no left or right item, we define \( nl = 0 \) and \( nr = n+1 \), respectively, and set \( p_0 = \infty \), \( p_{n+1} = 0 \), and \( w_0 = w_{n+1} = 1 \).

The LP bound for a state \(s = (s_p, s_w)\) is defined as:
\begin{equation}
\label{eq:state_bound}
B(s) =
\begin{cases}
s_p + (W - s_w)\dfrac{p_{nr}}{w_{nr}}, & \text{if } s_w \le W,\\[6pt]
s_p + (W - s_w)\dfrac{p_{nl}}{w_{nl}}, & \text{otherwise}.
\end{cases}
\end{equation}
Notice that $B(s)$ provides an upper bound on the best integer solution value reachable from state $s$ in subsequent iterations. This holds because all items in the range $[nl + 1, nr - 1]$ have already been fully enumerated, and we may now remove items with index $i \le nl$ or add items with index $i \ge nr$. Since the items are sorted by efficiency, \(B(s)\) represents an upper bound on the optimal FBKP solution, and consequently also on the best integer solution value derived from state \(s\).

It is important to mention that if there is no left item, then any state \( s \) with \( s_w > W \) is infeasible, and \( B(s) = -\infty \). Furthermore, if \( B(s) < z + 1 \), then state \( s \) can be pruned, since no improved solution can be derived from \( s \).

\subsection{Surrogate Relaxation with Cardinality Constraints}~\label{sec::sr}
Although the state-based LP bound is typically tight for the BKP, certain instances exhibit weak gaps, particularly when every optimal integer solution contains exactly $K$ items while the LP relaxation selects a fractional number $K' \in (K, K+1)$.

Following~\citet{Martello_1997}, we incorporate cardinality information by introducing lower and upper bounds $N_{\min}$ and $N_{\max}$ on the number of items in any improving solution. Since enforcing these constraints directly within the DP would result in a multi-constraint knapsack problem, we instead strengthen the formulation via \emph{Surrogate Relaxation} (SR). We provide a brief overview below. A detailed exposition is presented in Appendix~\ref{ap::sr}, including a rationale for preferring SR over the more widely studied \emph{Lagrangian relaxation} in this setting.

In the SR, the capacity and cardinality constraints are aggregated using a multiplier $\mu$, yielding a Surrogate Bounded Knapsack Problem (SBKP). Relaxing integrality leads to a Surrogate Fractional Bounded Knapsack Problem (SFBKP), for which the multiplier producing the tightest upper bound $UB^{SF}$ can be computed via binary search. If $UB^{SF} \geq z + 1$, we solve the corresponding SBKP with the best integer multiplier. In practice, integer multipliers are sufficient to obtain bounds nearly identical to those produced by optimal real multipliers. To avoid numerical overflow in large instances while preserving strong bounds, we adopt a binary search with expanding intervals.

Additional refinements are applied in special cases. For instances satisfying $p_i = w_i + C$, for some fixed $C \in \mathbb{Z}$, the optimal multiplier is often $C$, which can be verified in linear time. In such cases, we observe that SFBKP typically provides a bound as strong as SBKP; thus, its execution is skipped.

Moreover, we notice that the FBKP solution~\eqref{eq:optimal_fract} may result in \(x^{\text{LP}}\) selecting
\[
    \Gamma^{\text{LP}} = \sum_{i=1}^n x^{\text{LP}}_i,
\]
with $N_{\min} \le \Gamma^{\text{LP}} \le N_{\max}$. Recall that \( \Gamma^{\text{LP}} \) is noninteger, otherwise, \( x^{\text{LP}} \) is an optimal integer solution. In the above case, adding cardinality constraints with $L = N_{\min}$ or $K = N_{\max}$ does not strengthen the bound. According to~\citet{Martello_1999}, we can instead solve two SBKPs obtained from enforcing $L = \lceil \Gamma^{\text{LP}} \rceil$ and
$K = \lfloor \Gamma^{\text{LP}} \rfloor$, resulting in two bounds $UB^{SF}_L$ and $UB^{SF}_K$.
The overall upper bound is the maximum of these two values, and an SBKP is generated whenever each bound exceeds the current incumbent $z$. In \texttt{COMBO}, this idea is applied when $\Gamma^{\text{LP}}$ differs from either $N_{\min}$ or $N_{\max}$ by less than~$1$, i.e.,
\[
    N_{\min} < \Gamma^{\text{LP}} < N_{\min} + 1
    \quad \text{or} \quad
    N_{\max} - 1 < \Gamma^{\text{LP}} < N_{\max}.
\]
In the preliminary computation experiments, we observed that it is also beneficial when the deviation of $\Gamma^{\text{LP}}$ from $N_{\min}$ or $N_{\max}$ is less than~$2$.

\section{Proposed Techniques and New Features}~\label{sec::our_reductions}
While the previous sections described how we build upon existing contributions and address known limitations, this section presents additional techniques that further improve the state-of-the-art. In particular, we introduce bounds on the maximum capacity and minimum profit of maintained states, a guarded DP expansion mechanism, a reduction based on multiplicity, a new divisibility bound, and fixing-by-dominance techniques.

\subsection{Maximum State Capacity and Minimum State Profit}\label{sec::max_cap}
We must maintain states with a weight of at most $W_{\max} = W + \widehat{w} < 2W$. Additionally, $W_{\max}$ can be reduced as left items are processed or fixed in the knapsack. In particular, if $k$ copies of item $i$ ($i < b$) are processed, or if $u_i$ is reduced by $k$, then $W_{\max}$ can be decreased by $k w_i$. Hence, a skipped iteration can be interpreted as a processed one, allowing a corresponding reduction of $W_{\max}$.

Analogously, we impose a minimum state profit and keep a state $s$ only if $s_p + p_{\text{right}} > z$, where $z$ is the incumbent value and $p_{\text{right}}$ is the sum of the profits of the unfixed availabilities of all unprocessed right items.

\subsection{Guarded DP-extension}
In some instances, long sequences of DP extensions generate no new states, although the LP upper bounds indicate that improvements may be possible. These iterations simply copy the current state set (possibly applying pruning), leading to unnecessary memory traffic. To mitigate this effect, we adopt a guarded expansion strategy. We maintain separate counters for consecutive left and right iterations that generate no new states. When a counter exceeds a threshold (initially 10), we execute a read-only check to verify whether the current item can generate at least one feasible new state before executing the expansion routine. Because this procedure does not perform memory writing, it is significantly more cache-friendly than an expansion call. If no state can be generated, the expansion is skipped.

If the check detects that a new state can be generated, then running the checker was a waste of time, since the expansion must be executed anyway. To reduce the likelihood of such wasted work, the threshold is doubled. Additionally, an expansion is enforced after 40 consecutive skips to allow state-pruning mechanisms to act periodically. In practice, this strategy significantly accelerates instances where the optimal solution has already been found but cannot yet be certified by bounding.

\subsection{Divisibility Bounds}~\label{sec::divisibility}
If divisibility issues are not explicitly handled by the algorithms, even simple instances, such as a subset-sum problem in which all items have even weights while the knapsack capacity is odd, may result in worst-case algorithmic behavior. A first technique to address this kind of issue is the \emph{trivial divisibility bound} incorporated by \texttt{COMBO} for the KP, in which the remaining knapsack capacity, denoted by \(\tilde{W}\), is obtained by subtracting from $W$ the weights of all item copies already fixed inside the solution. The remaining capacity \(\tilde{W}\) is then divided by the greatest common divisor (gcd) of the weights of all unfixed items. We propose a similar bound for the BKP as follows in Lemma~\ref{lm::trivial_div}.

\begin{lemma}~\label{lm::trivial_div}
    Let \(I_\text{left}\) be the subset of left items in \(I\), and let \(I_\text{res}\) be the subset of \(I\) with \(u_i > 0\). If an improved solution exists, then for each item \(i \in I_\text{left}\), \(d_i - u_i\) copies of \(i\) belong to it, and it has a total weight at most
    \[
    \overline{W} = \sum_{i \in I_\text{left}} (d_i - u_i) w_i + \left\lfloor \frac{\tilde{W}}{\gcd(I_\text{res})} \right\rfloor \gcd(I_\text{res}),
    \]
    where \(\tilde{W} = W - \sum_{i \in I_\text{left}} (d_i - u_i) w_i\) and \(\gcd(I_\text{res})\) denotes the greatest common divisor of the weights of all items in \(I_\text{res}\). Therefore, we can set the knapsack capacity $W$ equal to \(\overline{W}\).
\end{lemma}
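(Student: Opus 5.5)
The plan is to use the semantics of the unfixed availability vector $u$ from the Linear Relaxation subsection, and then a divisibility argument on the weight contributed by the non-fixed part of an improved solution.

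\textbf{Step 1: the forced copies.} Let $x$ be any improved solution, i.e., one with value at least $z+1$. By definition of $u$, for each $i\in I_\text{left}$ we have $d_i-u_i\le x_i\le d_i$, so $x$ contains at least $d_i-u_i$ copies of every left item. For right items, $x_i\le u_i$. In particular, $u_i=0$ forces $x_i=d_i$ for a left item and $x_i=0$ for a right item. This gives the first claim.

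\textbf{Step 2: decompose the weight.} Write
\[
\sum_i w_i x_i = F + R, \qquad F=\sum_{i\in I_\text{left}}(d_i-u_i)w_i,
\]
where $R=\sum_{i\in I_\text{left}} w_i\,(x_i-d_i+u_i)+\sum_{i\notin I_\text{left}} w_i x_i$. Each coefficient $x_i-d_i+u_i$ (for left items) or $x_i$ (for right items) is a nonnegative integer. Such a coefficient can be positive only when $u_i>0$, i.e., $i\in I_\text{res}$; if $u_i=0$ it vanishes by Step 1. Hence $R$ is a nonnegative integer combination of weights of items in $I_\text{res}$, so $g=\gcd(I_\text{res})$ divides $R$. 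If $I_\text{res}$ is empty, then $R=0$ and we adopt the convention $\lfloor \tilde W/g\rfloor g=0$, with $\tilde W\ge 0$ as shown in Step 3.

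\textbf{Step 3: bound $R$.} Feasibility gives $F+R\le W$, so $R\le \tilde W$. This also shows $\tilde W\ge 0$ whenever an improved solution exists. Since $R$ is a multiple of $g$ and is at most $\tilde W$, we get $R\le \lfloor \tilde W/g\rfloor g$, and therefore $\sum_i w_i x_i\le \overline W$. Since every improved solution fits in capacity $\overline W\le W$, replacing $W$ by $\overline W$ removes no improved solution, while any solution feasible for $\overline W$ is also feasible for $W$.

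\textbf{Main obstacle.} The only delicate point is Step 2: checking that the partially fixed left items contribute exactly $(d_i-u_i)w_i$ to the fixed part, and that the remainder uses only items with $u_i>0$. This is a careful bookkeeping of the definition of $u$, including the case $u_i=d_i$, which simply contributes zero to $F$.
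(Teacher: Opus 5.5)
Your proof is correct and follows the paper's argument. The paper likewise fixes the $d_i-u_i$ forced copies of each left item, treats the remainder as a residual knapsack over $I_\text{res}$ with capacity $\tilde W$, and rounds that capacity down to a multiple of $\gcd(I_\text{res})$; your decomposition into $F+R$ with $\gcd(I_\text{res})\mid R$ just spells out this bookkeeping, and also covers the empty-$I_\text{res}$ case that the paper leaves implicit.
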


Nevertheless, more sophisticated instances with divisibility issues still challenge \texttt{COMBO}, even when the bound in Lemma~\ref{lm::trivial_div} is applied, leading to large runtimes. We overcome such a situation by proposing a second divisibility bound in Theorem~\ref{th::div_rule2}, thereby saving computing time.

\begin{theorem}~\label{th::div_rule2}
Let \(I^{1}_{\text{left}} \subseteq I_{\text{left}}\) be the set of items \(i\) such that \(u_i = 1\).
Assume that every improved solution contains at least \(|I^{1}_{\text{left}}| - 1\) unfixed copies of items in \(I^{1}_{\text{left}}\).
If there exists an item \(h \in I^{1}_{\text{left}}\) such that, after reducing its availability by one, the optimal value of the resulting FBKP relaxation is strictly smaller than \(z + 1\), then we can set \(u_h = 0\) in the original instance without loss of optimality.
\end{theorem}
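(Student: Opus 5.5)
We argue by contradiction on any improved solution, that is, a feasible solution $x$ with value at least $z+1$. For each $i\in I^{1}_{\text{left}}$ we have $u_i=1$, so by the definition of the unfixed availability vector any improved solution contains either $d_i-1$ or $d_i$ copies of $i$; exactly one copy of $i$ is ``unfixed''. The hypothesis says that at most one item of $I^{1}_{\text{left}}$ loses its unfixed copy. Setting $u_h=0$ means forcing $x_h=d_h$, so it suffices to show that no improved solution has $x_h=d_h-1$. Equivalently, every improved solution, if one exists, takes all $d_h$ copies of $h$. Then restricting the search to $u_h=0$ loses no improved solution, and the incumbent remains optimal when none exists.

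The key step is as follows. Suppose some improved solution $x$ has $x_h=d_h-1$. Then $x$ is a feasible integer solution of the modified instance in which the availability of $h$ is reduced by one. Its value is therefore at most the optimum of that instance's FBKP relaxation, and by assumption that optimum is strictly smaller than $z+1$. This contradicts $\sum p_ix_i\ge z+1$. So every improved solution has $x_h=d_h$, and the fixing $u_h=0$ is valid.

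The step needing care is pinning down what ``the resulting FBKP relaxation'' includes. If it is the FBKP of the original instance with only $d_h$ decreased, the inclusion of $x$ as a feasible point is immediate and the argument above is complete. If instead the authors intend a stronger relaxation that also encodes the current fixings (the constraints $d_i-u_i\le x_i$ for left items, $x_i\le u_i$ for right items, and possibly the cardinality-type consequence of the assumption), then I must check that these constraints are valid for every improved solution. This holds by the definition of $u$ and by the hypothesis. Under that hypothesis, once $h$ drops its unfixed copy, every other $i\in I^{1}_{\text{left}}$ must keep its copy, so $x_i=d_i$ for $i\in I^{1}_{\text{left}}\setminus\{h\}$. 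This is presumably why the assumption appears: it lets the FBKP for the candidate $h$ also fix the other items of $I^{1}_{\text{left}}$ in full, giving a tighter bound. I would state the relaxation with these extra fixings, verify that $x$ satisfies them, and conclude with the same contradiction. I expect the main obstacle to be this bookkeeping: making sure every constraint added to the relaxation is implied for all improved solutions, especially the use of the hypothesis to fix the remaining $I^{1}_{\text{left}}$ items. The LP comparison itself is routine.
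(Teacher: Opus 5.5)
Your contradiction argument is sound and has the same skeleton as the paper's proof. An improved solution with $x_h=d_h-1$ is an improved solution of the modified instance, so a valid relaxation of that instance with value below $z+1$ rules it out. As written, though, you prove only the trivial reading of the theorem, in which the hypothesis on $I^{1}_{\text{left}}$ is never used. Your explanation of why the hypothesis is there is also incorrect. Fixing the items of $I^{1}_{\text{left}}\setminus\{h\}$ at $x_i=d_i$ does not tighten an FBKP with capacity $W$ at all. After one copy of $h$ is removed, the left items weigh $\widehat{w}-w_h<W$, so the greedy LP optimum already takes every left item in full, and those lower bounds are slack.

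In the paper, ``the resulting FBKP relaxation'' is the one with the divisibility-tightened capacity $\overline{W}_h$ from Lemma~\ref{lm::trivial_div}; this is also the relaxation that $\mathrm{UB}_h$ in the subsequent corollary bounds. The hypothesis is used to enlarge the fixed part of the solution. Since $u_h=1$, an improved solution without the unfixed copy of $h$ has $x_h=d_h-1$. The hypothesis then forces the unfixed copy of every other $i\in I^{1}_{\text{left}}$ into that solution. So all of $I^{1}_{\text{left}}$ leaves the residual set $I_{\text{res}}$, $\gcd(I_{\text{res}})$ can grow, and $\overline{W}_h$ can drop. That drop is where the strength of the bound comes from.

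The step you are missing is showing that $x$ is feasible for this reduced capacity. Write $\tilde W_h=W-\widehat{w}_h$, where $\widehat{w}_h$ is the weight of the fixed copies. The weight $x$ places on the residual items is a nonnegative integer combination of their weights, hence a multiple of $\gcd(I_{\text{res}})$, and it is at most $\tilde W_h$. Therefore $\sum_i w_i x_i\le \overline{W}_h$. Once you add this verification, your contradiction goes through for the bound the algorithm actually uses.
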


The technique described in Theorem~\ref{th::div_rule2} handles the major divisibility issues observed in the benchmark instances proposed by~\citet{Pisinger_2005}.
At the same time, we notice two additional challenges that must be addressed to make it effective in practice.
The first one is that computing the exact optimal FBKP value for each item \(h \in I^{1}_{\text{left}}\) can be expensive. We instead compute an upper bound on such a value.

\begin{corollary}
Consider the modified BKP instance in which the demand for item \(h\) is decreased by one, and let \((\widehat{p}_h, \widehat{w}_h)\) denote the total profit and weight of item copies fixed inside the knapsack. For such an instance, let \(\overline{W}_h\) be the knapsack capacity given by Lemma~\ref{lm::trivial_div}, \(I_{\text{res}}\) be the set of remaining items with unfixed copies, and \(g\) be the item in \(I_{\text{res}}\) with the highest efficiency. The following is a valid upper bound for the modified BKP:
\[
    \mathrm{UB}_h = \widehat{p}_h + \bigl(\overline{W}_h - \widehat{w}_h \bigr)e_g.
\]
\end{corollary}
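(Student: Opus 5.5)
We will show that $\mathrm{UB}_h$ bounds the optimal value of the FBKP relaxation of the modified instance after its capacity has been tightened to $\overline{W}_h$. Since the capacity reduction of Lemma~\ref{lm::trivial_div} removes no improved solution, this LP value in turn bounds every improved integer solution of the modified BKP. That is exactly the use made of the bound in Theorem~\ref{th::div_rule2}: if $\mathrm{UB}_h < z+1$, the hypothesis of the theorem holds.

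\textbf{Decomposition of a solution.} Consider the modified instance, in which the demand of $h$ is reduced by one, and apply Lemma~\ref{lm::trivial_div} to it. Any improved solution $x$ then has the following properties.
\begin{itemize}
\item It contains all fixed copies: $d_i-u_i$ copies of each left item, with the counts taken for the modified instance.
\item Its total weight is at most $\overline{W}_h$.
\end{itemize}
Write $x = x^{\text{fix}} + y$, where $x^{\text{fix}}$ collects the fixed copies and $y$ ranges only over unfixed copies, i.e.\ $0 \le y_i \le u_i$ for $i \in I_{\text{res}}$. By definition, $x^{\text{fix}}$ has profit $\widehat{p}_h$ and weight $\widehat{w}_h$. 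Hence $y$ satisfies
\[
\sum_{i\in I_{\text{res}}} w_i y_i \le \overline{W}_h - \widehat{w}_h .
\]
For a right item $i \ge b$, an unfixed copy means a copy that may be added (at most $u_i$). For a left item, it means one of the $u_i$ copies whose inclusion is still undecided. Either way, these are exactly the items of $I_{\text{res}}$.

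\textbf{The bound.} The profit of $x$ equals $\widehat{p}_h + \sum_{i\in I_{\text{res}}} p_i y_i$. Since $p_i = e_i w_i \le e_g w_i$ for every $i \in I_{\text{res}}$, where $g$ is the most efficient item of $I_{\text{res}}$, we get
\[
\sum_{i\in I_{\text{res}}} p_i y_i \le e_g \sum_{i\in I_{\text{res}}} w_i y_i \le e_g\,(\overline{W}_h - \widehat{w}_h).
\]
This yields $\mathrm{UB}_h$. The same chain of inequalities applies to any fractional $y$, so $\mathrm{UB}_h$ also bounds the FBKP relaxation of the modified instance with capacity $\overline{W}_h$. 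In effect, it is the classical Dantzig-type bound with the availability constraints dropped.

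\textbf{Main obstacle.} The delicate point is bookkeeping rather than the inequality itself. We must check the following.
\begin{itemize}
\item $\overline{W}_h - \widehat{w}_h \ge 0$. Otherwise no improved solution exists and the bound holds vacuously.
\item Decreasing the demand of $h$ removes its fixed copy consistently from both $\widehat{p}_h,\widehat{w}_h$ and from $I_{\text{res}}$.
\item The fixing conditions used by Lemma~\ref{lm::trivial_div} (via the unfixed availabilities $u$) remain valid for the modified instance. This holds because every improved solution of the modified instance is also an improved solution of the original instance, so the original fixings still apply.
\end{itemize}
With these points in place, the argument is a two-line inequality.
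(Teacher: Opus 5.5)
Your argument is correct and follows the reasoning the paper relies on. The paper gives no separate proof: it treats $\mathrm{UB}_h$ as the obvious relaxation of the residual FBKP in Theorem~\ref{th::div_rule2}, charging every unit of the residual capacity $\overline{W}_h-\widehat{w}_h$ at the best residual efficiency $e_g$, exactly as your decomposition $x=x^{\text{fix}}+y$ does. One bookkeeping correction: the copy of $h$ that disappears in the modified instance is its single unfixed copy, so $h$ leaves $I_{\text{res}}$ while its $d_h-1$ fixed copies stay in $\widehat{p}_h,\widehat{w}_h$, and the fixed part also absorbs the now-forced copies of $I^{1}_{\text{left}}\setminus\{h\}$. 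Neither point affects your inequality, which only needs the fixings to be valid for improved solutions and kept as lower bounds in the relaxation.
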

The upper bound $\mathrm{UB}_h$ is particularly strong for the instances under study, as it is often the case that \(e_g = e_b\). Moreover, it is much faster to compute, as we explain below. Whenever \(\mathrm{UB}_h < z + 1\), we set \(u_h = 0\) according to Theorem~\ref{th::div_rule2}. The second challenge is to identify when \(I^{1}_{\text{left}}\) satisfies the conditions of Theorem~\ref{th::div_rule2}, which we address in Theorem~\ref{thm:I2-pairwise}.

\begin{theorem}~\label{thm:I2-pairwise}
Suppose that for every item \(i \in I^{1}_{\text{left}}\),
\[
WB(i,1) \ge z + 1
\quad \text{and} \quad
WB(i,2) < z + 1~\text{hold}.
\]
Therefore, no two distinct items \(i, i' \in I^{1}_{\text{left}}\) can be removed simultaneously while maintaining a weak upper bound of at least \(z + 1\).
\end{theorem}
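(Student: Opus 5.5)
The plan is to write the weak upper bound as a linear function of the changes made to the break solution and read off the pairwise claim from the hypotheses. For a left item \(i<b\), removing \(e\) copies gives
\[
WB(i,e)=\widehat{p}+(W-\widehat{w})e_b - e\,\delta_i, \qquad \delta_i := p_i - w_i e_b \ge 0,
\]
using \(\overline{p}_i=-p_i\) and \(\overline{w}_i=-w_i\). So \(WB(i,e)=U-e\delta_i\), where \(U:=\widehat{p}+(W-\widehat{w})e_b\) is the LP value. The same computation gives the weak upper bound for simultaneously removing one copy of \(i\) and one copy of \(i'\). It is the bound obtained by subtracting both items from the break solution and filling the freed capacity fractionally at efficiency \(e_b\), which equals
\[
WB(\{i,i'\}) = U-\delta_i-\delta_{i'}.
\]
I would state this as the natural multi-item extension of \(WB\), and this is the notion of ``removed simultaneously'' used in the statement.

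Next I would translate the hypotheses. The condition \(WB(i,2)<z+1\) means \(U-2\delta_i<z+1\). The condition \(WB(i,1)\ge z+1\) means \(\delta_i \le U-(z+1)\), and in particular \(U\ge z+1\). Set \(G:=U-(z+1)\ge 0\). Then for every \(i\in I^{1}_{\text{left}}\) we have \(2\delta_i>G\), that is, \(\delta_i>G/2\).

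The key step is to take distinct \(i,i'\in I^{1}_{\text{left}}\) and add the two strict inequalities: \(\delta_i+\delta_{i'}>G\). Hence \(WB(\{i,i'\})=U-\delta_i-\delta_{i'}<U-G=z+1\). So removing both items, one copy each, already drives the weak bound below \(z+1\). Removing any further left copies or adding right items at efficiency at most \(e_b\) cannot raise the bound, since each such change contributes a nonpositive term. Therefore no improved solution can omit two distinct items of \(I^{1}_{\text{left}}\).

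The only real obstacle is justifying the additive formula for the simultaneous-removal bound and its validity as an upper bound. This is the same argument that makes \(WB(i,e)\) valid, namely the LP relaxation with items sorted by efficiency and the freed capacity filled at rate at most \(e_b\). Because \(u_i=1\), each item can lose at most one copy, so pairwise exclusion is exactly the hypothesis needed in Theorem~\ref{th::div_rule2}.
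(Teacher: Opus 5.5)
Your proof is correct and follows essentially the same route as the paper: both write the weak bound as \(U - e\,\delta_i\) and the joint removal bound as \(U-\delta_i-\delta_{i'}\). The only difference is the last step, which the paper argues by contradiction using \(\delta(i)\le\delta(i')\), so that \(\mathrm{WB}(i,2)=U-2\delta(i)\ge U-\delta(i)-\delta(i')\), whereas you add the two strict inequalities \(\delta_i>G/2\) directly; your derivation of the linear form from \(\overline{p}_i=-p_i\), \(\overline{w}_i=-w_i\) also makes explicit what the paper leaves implicit.
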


We refer to the technique in Theorem~\ref{th::div_rule2} as the \emph{Enhanced Divisibility Bound}. The two divisibility bounds run in time complexity \(O(n \log w_{\max})\), where \(w_{\max}\) denotes the maximum item weight. For the enhanced divisibility bound, the set \(I_{\text{res}}\) is the same across all modified BKP instances. The greatest common divisor of all items in \(I\) or \(I_{\text{res}}\) can be computed using \(O(n)\) calls to the Euclidean algorithm, each requiring time complexity \(O(\log w_{\max})\). After this preprocessing step, the capacities \(\overline{W}\) and \(\overline{W}_h\) can be computed in constant time, and consequently, the value \(\mathrm{UB}_h\) can also be computed in constant time for each item \(h\).

The trivial divisibility bound is triggered whenever \(|S_j| > 1000\), following the same strategy used in \texttt{COMBO}.
The enhanced divisibility bound is activated once a threshold \(T_{\text{DIV}}\) is exceeded, with initial value set to \(5 n \log n\), according to our preliminary experiments, and can be applied whenever all items in \(I^{1}_{\text{left}}\) satisfy the conditions of Theorem~\ref{thm:I2-pairwise}.

\subsection{Item Aggregation and Multiplicity Reduction}
When items are sorted by decreasing efficiency with ties broken by larger weights, identical items appear consecutively. Such equal items can therefore be aggregated into a single item by summing their availabilities. To avoid unnecessary overhead, particularly on easy instances, we do not fully sort the items at the beginning. On harder instances, however, the time spent by the algorithm after enumerating a given core \( [l_j, r_j] \) may become much larger than the time required for full sorting. In such cases, full sorting is performed on the ranges \( [1, l_j - 1] \) and \( [r_j + 1, n] \) after the first call of the enhanced divisibility bound. At this stage, we consider applying item aggregation to these intervals and the multiplicity reduction presented in Theorem~\ref{th:mr}.

\begin{theorem}\label{th:mr}
Consider items \(i\) and \(i'\) such that:
\[
p_{i'} = 2p_i,\qquad w_{i'} = 2w_i,
\]
and let \(d_i,d_{i'}\ge 1\) be their availabilities.
Let \(\tilde I\) be the instance obtained from the original instance \(I\) by removing item \(i'\) and replacing its availability by two additional copies of item \(i\), i.e.
\[
\tilde I = I \setminus \{i'\}
\quad\text{and for each } j \in \tilde I,\quad
d_j^{\tilde I} =
\begin{cases}
d_j^{I}, & j \neq i,\\
d_i^{I} + 2 d_{i'}^{I}, & \text{otherwise}.
\end{cases}
\]
Every solution for \(I\) can be transformed into a solution for \(\tilde I\) with the same objective value, and vice versa, i.e., instances $I$ and $\tilde{I}$ are equivalent.
\end{theorem}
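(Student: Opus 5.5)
The plan is to give explicit maps in both directions between feasible solutions of \(I\) and \(\tilde I\), and to check that each map preserves feasibility (capacity and availability) and objective value. A solution of \(I\) is an integer vector \(x\) with \(0\le x_j\le d_j^{I}\) and \(\sum_j w_j x_j\le W\). A solution of \(\tilde I\) is an integer vector \(y\) indexed by \(I\setminus\{i'\}\), with \(0\le y_j\le d_j^{\tilde I}\) and the same capacity constraint.

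\emph{Forward direction.} Given a solution \(x\) of \(I\), define \(y_j=x_j\) for \(j\neq i\) and \(y_i=x_i+2x_{i'}\).
\begin{itemize}
\item Availability: \(0\le y_i\le d_i^{I}+2d_{i'}^{I}=d_i^{\tilde I}\).
\item Weight: since \(w_{i'}=2w_i\), we have \(w_i y_i = w_i x_i + w_{i'} x_{i'}\), so the total weight is unchanged.
\item Profit: since \(p_{i'}=2p_i\), we have \(p_i y_i = p_i x_i + p_{i'} x_{i'}\), so the objective is unchanged.
\end{itemize}

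\emph{Backward direction.} Given a solution \(y\) of \(\tilde I\), we must split \(y_i\) into \(x_i+2x_{i'}\) with \(0\le x_i\le d_i^{I}\) and \(0\le x_{i'}\le d_{i'}^{I}\). This decomposition is the only nontrivial step. Take
\[
x_{i'}=\min\bigl(d_{i'}^{I},\lfloor y_i/2\rfloor\bigr),\qquad x_i=y_i-2x_{i'}.
\]
Clearly \(x_{i'}\) lies in \([0,d_{i'}^{I}]\) and \(x_i\ge0\). To bound \(x_i\), split into two cases.
\begin{itemize}
\item If \(x_{i'}=\lfloor y_i/2\rfloor\), then \(x_i=y_i \bmod 2\le 1\le d_i^{I}\). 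This uses the hypothesis \(d_i\ge1\).
\item Otherwise \(x_{i'}=d_{i'}^{I}\), so \(x_i=y_i-2d_{i'}^{I}\le d_i^{\tilde I}-2d_{i'}^{I}=d_i^{I}\).
\end{itemize}
Set \(x_j=y_j\) for all other \(j\). By the same identities \(w_{i'}=2w_i\) and \(p_{i'}=2p_i\), the weight and profit are preserved.

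The two maps therefore send feasible solutions to feasible solutions with equal value. Hence the optimal values of \(I\) and \(\tilde I\) coincide, and an optimal solution of one converts directly into an optimal solution of the other. The only point needing care is the parity case in the backward direction, which is exactly where \(d_i\ge1\) is needed.
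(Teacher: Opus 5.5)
Your proof is correct and follows the same route as the paper: explicit maps in both directions that preserve weight and profit, with a forward map $y_i = x_i + 2x_{i'}$ identical to the paper's. The only difference is the backward split. You use as many copies of $i'$ as possible, $x_{i'}=\min\bigl(d_{i'}^{I},\lfloor y_i/2\rfloor\bigr)$, whereas the paper uses as few as possible, $x_{i'}=\max\{0,\lceil(\tilde x_i-d_i)/2\rceil\}$; both choices are valid, and your case analysis states explicitly that $d_i\ge 1$ is needed for the odd case, which the paper's proof uses without saying so.
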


Let's suppose items are sorted by decreasing efficiency, with ties broken by larger weight. For items $i$ and $i'$ satisfying Theorem~\ref{th:mr}, we refer to $i$ as the half multiplier of $i'$. Under this ordering, the half multiplier of an item $i$ can only appear to its right, and the half multiplier of item $i+1$ can only appear to the right of the half multiplier of item $i$ (assuming identical items are grouped). This monotonicity property allows the reduction described in Theorem~\ref{th:mr} to be implemented in linear time on each interval $[1, l-1]$ and $[r+1, n]$ using a two-pointer technique. Specifically, we scan the items from left to right using an index $i'$ and maintain a second index $i$, initially set to $i'+1$, which represents the leftmost candidate for being the half multiplier of item $i'$. While $i$ is a valid index for the interval, we compare item $i'$ with an artificial item $j$, defined by $p_j = 2 p_{i}$ and $w_j = 2 w_{i}$, using the sorting comparator. If $j$ is ordered before item $i'$, then the half multiplier of item $i'$ must be to the right of index $i$ (if it exists), so $i$ is incremented. Otherwise, the scan terminates, and index $i$ points to the half multiplier of item $i'$ if it belongs to the item set $I$. Each time a pair of items satisfying Theorem~\ref{th:mr} is found, the reduction is applied. Since both indices $i$ and $i'$ move monotonically to the right, the reduction runs in linear time over each interval.

Once an optimal solution to the reduced instance \(\tilde I\) is found, it can be converted back to a solution for the original instance in linear time using reverse processing together with a copy of the original items $I$.

The preliminary experiments showed that the item aggregation and the multiplicity reduction techniques can drastically reduce the computing time to solve the hardest KP instances, as the number of distinct items in the reduced instance $\tilde I$ may be much smaller than in the original one. By breaking symmetries, they also strengthen the reduction based on weak upper bounds. For example, for a given instance $I$, consider two items $i, i' \ge b$ with $d_i, d_{i'} \geq 3$ and satisfying the conditions of Theorem \ref{th:mr}, and suppose that the weak upper bounds result in $u_i = 2$ and $u_{i'} = 1$. In this case, we can include either two copies of $i$ or one copy of $i'$, but no improved solution combines them. On the other hand, in the instance $\tilde I$, we obtain $u_i = 2$, which is equivalent to fixing all copies of $i'$. Moreover, after applying reduction, if \(\tilde d_i > \lfloor W / w_i \rfloor\), then \(\tilde d_i - \lfloor W / w_i \rfloor\) copies can be fixed.

\begin{remark}
It may appear counterintuitive to apply a reduction that increases the total number of items in the instance. However, since the algorithm proceeds by iterating over a binary decomposition of the item availabilities, even if no items are fixed by bounding techniques after the reduction, the total number of iterations does not increase.
\end{remark}

\begin{corollary}\label{cor:mr-generalized}
The multiplicity reduction can be naturally generalized to arbitrary integer multiplicities.
\end{corollary}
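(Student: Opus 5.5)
Generalizing Theorem~\ref{th:mr} starts with making the statement precise. Let $m \ge 2$ be an integer and let $i, i'$ be items with
\[
p_{i'} = m p_i, \qquad w_{i'} = m w_i .
\]
Define $\tilde I = I \setminus \{i'\}$ with $d_i^{\tilde I} = d_i^{I} + m\, d_{i'}^{I}$, all other availabilities unchanged. The claim is that $I$ and $\tilde I$ are equivalent: there are objective-preserving maps between their feasible solutions in both directions.

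The forward map is direct. Given a solution $x$ of $I$, set $\tilde x_i = x_i + m x_{i'}$ and $\tilde x_j = x_j$ for every other $j \in \tilde I$. The changed contributions satisfy
\[
\tilde x_i w_i = x_i w_i + x_{i'} w_{i'}, \qquad \tilde x_i p_i = x_i p_i + x_{i'} p_{i'},
\]
so weight and profit are unchanged. Moreover $\tilde x_i \le d_i + m d_{i'} = d_i^{\tilde I}$, so $\tilde x$ is feasible for $\tilde I$.

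The reverse map is the only step needing care, because $\tilde x_i$ must be split back into two feasible availabilities. Given $\tilde x$ feasible for $\tilde I$, choose
\[
x_{i'} = \min\bigl(d_{i'}, \lfloor \tilde x_i / m \rfloor\bigr), \qquad x_i = \tilde x_i - m x_{i'} .
\]
Then $x_{i'} \le d_{i'}$ and $x_i \ge 0$ hold by construction. It remains to check $x_i \le d_i$, splitting on which term attains the minimum.
\begin{itemize}
\item If $x_{i'} = d_{i'}$, then $x_i = \tilde x_i - m d_{i'} \le d_i$, since $\tilde x_i \le d_i + m d_{i'}$.
\item Otherwise $x_{i'} = \lfloor \tilde x_i / m \rfloor$, so $x_i = \tilde x_i \bmod m \le m - 1$. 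Here $x_i \le d_i$ is not automatic when $d_i < m - 1$.
\end{itemize}
The second case is the real obstacle, and it shows that the naive generalization needs a hypothesis. The fix I would commit to is to require $d_i \ge m - 1$. This reduces to the hypothesis $d_i \ge 1$ of Theorem~\ref{th:mr} when $m = 2$.

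Under $d_i \ge m-1$, weight and profit are preserved in the reverse direction by the same identity as before. Hence the optimal values of $I$ and $\tilde I$ coincide and optimal solutions correspond in both directions. The constructive reverse map also gives the linear-time recovery described after Theorem~\ref{th:mr}.

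If $d_i < m - 1$, I would instead chain reductions through intermediate multiplicities. Without such a condition, residues modulo $m$ that exceed $d_i$ would be reachable in $\tilde I$ but not in $I$, so I would state the corollary with the condition $d_i \ge m - 1$. Repeated application of the reduction, including to several multiples $m_1 i, m_2 i, \dots$ of the same base item, then follows by induction on the number of merged items, applying the single-step argument each time with the updated availability of $i$.
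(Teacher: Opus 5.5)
Your proposal is correct and follows the paper's argument: the paper also generalizes Theorem~\ref{th:mr} by replacing each copy of $i'$ with $k$ copies of $i$, under exactly the extra hypothesis you identified, $d_i \ge k-1$ (together with $d_{i'} \ge 1$). Your explicit reverse map $x_{i'} = \min\bigl(d_{i'}, \lfloor \tilde x_i/m \rfloor\bigr)$, with its case check, fills in the back-conversion that the paper's one-line proof leaves implicit.
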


Notice that identifying beneficial values of \(k\) and performing the reverse mapping from the reduced optimal solution to the original instance when multiple values of \(k\) are used are not straightforward. This is why we restrict the reduction to the case \(k = 2\).

After all, the resulting instance after item aggregation and multiplicity reduction is referred to as \(I\) for simplicity.

\subsection{Fixing-by-Dominance}~\label{sec::dominance}
After processing item \(A[j]\), our proposed algorithm keeps in \(S_j\) only the states that are non-dominated and not pruned by the state-based LP bound. Lemma~\ref{lm::first_iter} and Theorem~\ref{th::k_iter} introduce fixing-by-dominance criteria to skip some iterations for a given item \(A[j]\), while Lemmas \ref{lm::dominance_right}, \ref{lm::dominance_left}, and \ref{lm::dominance_right_2} allow us to skip processing the item \(A[j]\) if some conditions hold.

\begin{lemma}\label{lm::first_iter}
Let \(i=A[j]\). If in the first iteration for item \(i\) every extension of \(S_{j-1}\) obtained from adding (removing) one copy of \(i\) produces only states that are either dominated or pruned (i.e., no new state is generated), then all subsequent iterations for \(i\) do not generate new states.
\end{lemma}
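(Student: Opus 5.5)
We argue by induction on the iteration index, treating the extension step as an operator on state sets. Fix $i=A[j]$ and let $\delta=(\overline{p}_i,\overline{w}_i)$ be the profit/weight change from one copy. With binary decomposition, iteration $k$ uses a bucket of $m_k$ copies and shifts every state of $S_j^k$ by $m_k\delta$. It is cleaner to first argue in the unit-bucket setting, where each iteration adds one copy, and then note that a bucket of $m$ copies is $m$ consecutive unit shifts. The hypothesis says that for every $s\in S_{j-1}=S_j^0$, the state $s+\delta$ is dominated by some state of $S_j^0$, or is pruned. Pruning means one of three things: the weight leaves $[0,W_{\max}]$, the state-based LP bound $B(s+\delta)<z+1$, or the minimum-profit test fails. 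Consequently $S_j^1=S_j^0$.

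The key claim is that any state reachable after further iterations, namely $s+t\delta$ with $s\in S_j^0$ and $t\ge 2$, is also dominated or pruned. We show this by induction on $t$, using a translation property of dominance. Suppose $s+\delta$ is dominated by $s'\in S_j^0$, meaning $s'_p\ge s_p+\overline{p}_i$ and $s'_w\le s_w+\overline{w}_i$. Then $s+t\delta$ is dominated by $s'+(t-1)\delta$, because both coordinates shift by the same amount. By the induction hypothesis applied to $s'$ with $t-1$ copies, $s'+(t-1)\delta$ is itself either a retained state of $S_j^0$ or is dominated or pruned. Chasing this chain terminates, since dominance strictly improves the pair or we reach a state already in $S_j^0$. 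So $s+t\delta$ is dominated by something in $S_j^0$, or by a pruned state, and we must argue that the latter case propagates.

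The main obstacle is exactly this interaction between dominance and pruning. If $s+t\delta$ is dominated by a state $q$ that was pruned, we need $s+t\delta$ itself to be prunable. I would verify this separately for each pruning rule.

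\emph{Weight bounds.} For a right item $\overline{w}_i>0$ and weights only increase; for a left item $\overline{w}_i<0$ and weights only decrease. So if $s+\delta$ violates the weight range, so does every $s+t\delta$ with $t\ge 1$.

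\emph{Minimum-profit test and LP bound.} These are monotone: a state with no larger profit and no smaller weight has no larger $B(\cdot)$ and no larger $s_p+p_{\mathrm{right}}$. For $B$ this holds because $B$ is non-decreasing in $s_p$ and non-increasing in $s_w$, including across the kink at $W$, since $p_{nr}/w_{nr}\le p_{nl}/w_{nl}$. The delicate point is that $nl$, $nr$ and $p_{\mathrm{right}}$ change between iterations. For example, after the last copy of $i$ the next item changes, which only weakens the bounds in a way favouring pruning or leaves them unchanged. I would handle this by noting that within the iterations for $i$, the bound used on later copies is at most the one used for the first, since fewer copies of $i$ remain and the quantities are nonincreasing. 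The dominance-by-pruned-state case then also yields pruning.

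Combining the two, every candidate state in later iterations is dominated by an element of $S_j^0$ or pruned, so $S_j^{k}=S_j^0$ for all $k$. For buckets with $m_k>1$, the same argument with $t=m_k$ (and sums of buckets) applies, since each bucket shift is a multiple of $\delta$.
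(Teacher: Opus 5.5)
Your route differs from the paper's. The paper only notes that a multi-copy extension is a sequence of single-copy extensions, so a new state at iteration $k$ would have to extend a state generated at iteration $k-1$. Your induction on $t$ via translation invariance of dominance (if $s'$ dominates $s+\delta$, then $s'+(t-1)\delta$ dominates $s+t\delta$) is more explicit and handles the binary buckets properly.

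\textbf{The gap.} There is a case you never treat. When $s+\delta$ is not dominated but \emph{pruned} by the state-based LP bound or by the minimum-profit test, you need $s+t\delta$ ($t\ge 2$) to be pruned as well. Your monotonicity of $B$ and of $s_p+p_{\text{right}}$ under dominance does not give this, for two reasons:
\begin{itemize}
\item $s+t\delta$ and $s+\delta$ are incomparable: for a right item both profit and weight grow, and for a left item both shrink.
\item Your remark that later iterations use a smaller bound only compares bound functions at the same state.
\end{itemize}
You handled this ``along $\delta$'' case only for the weight range. The same case recurs inside your chain whenever some link $s^{(k)}+\delta$ is pruned instead of dominated. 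So your conclusion that $s+t\delta$ is dominated by an element of $S_j^0$ or by a pruned state is not yet justified.

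\textbf{The missing fact.} You need that $B$ is non-increasing along $\delta$. This is not automatic: it fails for a general direction, since adding an item more efficient than $nr$ raises $B$. It must use the convention $nr=i$ (resp.\ $nl=i$) while copies of $i$ remain, together with the efficiency order.
\begin{itemize}
\item For a right item, one step along $\delta$ changes $B$ by $p_i-w_ie_{nr}=0$ while the weight is at most $W$, and by $w_i(e_i-e_{nl})\le 0$ above $W$.
\item For a left item, the changes are $w_i(e_{nr}-e_i)\le 0$ below $W$ and $-p_i+w_ie_{nl}=0$ above $W$.
\end{itemize}
Since $B$ is continuous at weight $W$, this gives $B(s+t\delta)\le B(s+\delta)$. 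At the last iteration $nr$ (resp.\ $nl$) moves to a less (resp.\ more) efficient item, which only lowers $B$ on the relevant side of $W$, and $z$ never decreases. Similarly, $s_p+p_{\text{right}}$ does not increase along $\delta$, provided $p_{\text{right}}$ is reduced as buckets of a right item are processed.

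Conceptually, this is the branch-and-bound fact that $B(s+\delta)$ already bounds every completion using further copies of $i$. The paper's short argument takes this for granted. Because you chose to verify the pruning rules one by one, you must prove it; with it added, your argument is complete.
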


\begin{theorem}\label{th::k_iter}
    If the $k$-th iteration for $i = A[j]$ does not generate any new state, then every subsequent iteration for $i$ does not generate any new state.
\end{theorem}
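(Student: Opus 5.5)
We prove Theorem~\ref{th::k_iter} by induction on the iteration index, with Lemma~\ref{lm::first_iter} as the model for the inductive step. First we fix what ``the $k$-th iteration'' means. With the binary decomposition of Remark~\ref{rmk::decomposition}, iteration $k$ extends $S_j^{k}$ by a bucket of $m_k$ copies. We show that the new states of later buckets are the same kind of combination of copies as those of the $k$-th. Write $\delta=(\overline{p}_i,\overline{w}_i)$ for the effect of one copy. After iteration $k$, every state of $S_j^{k+1}$ has the form $s+c\,\delta$, where $s\in S_{j-1}$ and $0\le c\le M_k:=m_0+\cdots+m_k$, up to dominance and pruning. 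The hypothesis says that every candidate $t+m_k\delta$ with $t\in S_j^k$ is dominated by, or pruned relative to, a state already in $S_j^k$. Equivalently, $S_j^{k+1}$ coincides, up to dominance, with $S_j^k$.

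The key claim is an invariance statement. If adding a block of $m_k$ copies to every state of $S_j^k$ produces nothing new, then adding any further number of copies also produces nothing new. The argument, as in Lemma~\ref{lm::first_iter}, takes a hypothetical new state $t+m\delta$ in a later iteration. We decompose it as a sequence of shifts by $\delta$ applied to a state of $S_j^k$. We then use the hypothesis repeatedly: each intermediate point $t+c\delta$ is weakly dominated by some $t'\in S_j^k$, that is, $t'_p\ge t_p+c\,\overline{p}_i$ and $t'_w\le t_w+c\,\overline{w}_i$. Dominance is preserved under adding the same vector $\delta$. Hence $t+(c+m)\delta$ is dominated by $t'+m\delta$, and by induction on the number of blocks this is dominated by something in $S_j^k$. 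The subtlety is that buckets have sizes $1,2,4,\ldots$ rather than a constant. We handle this by noting that the set reachable after iteration $k$ is closed, up to dominance, under shifts by $m_k\delta$. Since later bucket sizes are at least $m_k$, except possibly the final residual $a$, we would rather argue per single copy. Concretely, we show $\mathcal{D}(S_j^{k}+\delta)\subseteq\mathcal{D}(S_j^k)$, where $\mathcal{D}$ denotes dominance closure. This is the statement of Lemma~\ref{lm::first_iter} applied with $S_j^k$ playing the role of $S_{j-1}$. Closure under $\delta$ then gives closure under any multiple by iterating.

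For pruning we must check that the bounds used are monotone in the same sense. If $t$ is pruned, then $t+\delta$ is pruned. This holds for the state-based LP bound \eqref{eq:state_bound}, because adding a right copy (or removing a left one) at efficiency no better than the current frontier cannot increase $B$. It also holds for the $W_{\max}$ and minimum-profit filters of Section~\ref{sec::max_cap}, since weight moves monotonically in the direction of $\overline{w}_i$ and profit likewise. Dominated states are handled by the translation argument, and infeasible ($s_w\notin[0,2W]$) states only become more infeasible.

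The main obstacle is reducing the multi-copy bucket hypothesis to the single-copy closure. A bucket of $m_k$ copies producing nothing does not obviously imply that one copy produces nothing. Our first attempt is to observe that $S_j^k$ already contains, up to dominance, all states $s+c\delta$ with $c\le M_{k-1}$, and $m_k\le M_{k-1}+1$. Hence any single-copy shift of a state of $S_j^k$ can be rewritten as a shift by $m_k$ of a state with smaller $c$, which is covered by the hypothesis. If the residual bucket breaks this inequality, we would fall back to treating the statement in the per-copy iteration model, where it follows directly from Lemma~\ref{lm::first_iter}.
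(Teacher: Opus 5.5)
Your argument is correct and takes the paper's route. Like the paper, you reduce the bucket statement to the single-copy extension argument of Lemma~\ref{lm::first_iter}, and your final paragraph spells out a step the paper leaves implicit: it uses $m_k = M_{k-1}+1$ to turn the hypothesis on $m_k$ copies into closure of $S_j^k$ under one copy, then applies the lemma with $S_j^k$ as the base set. The residual-bucket fallback is never needed, since $a$ is the last bucket (so the claim is vacuous there) and in any case $a \le M_h$, so $m_k \le M_{k-1}+1$ always holds.
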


While the fixing-by-dominance criteria in Lemma~\ref{lm::first_iter} and Theorem~\ref{th::k_iter} allow us to skip some iterations of item $i = A[j]$, we are also interested in identifying items that either dominate $i$ or are dominated by $i$, so that they can be completely skipped. Lemmas~\ref{lm::dominance_right}, \ref{lm::dominance_left}, and \ref{lm::dominance_right_2} introduce additional fixing-by-dominance criteria. The items that dominate or are dominated by $i$ are obtained by a linear search, and for this reason, our algorithm applies these criteria only when $|S_j| \ge n$. Furthermore, Lemma~\ref{lm::dominance_right_2} requires a linear search over $S_j$ to verify its conditions, and therefore this search is performed only if at least two items dominated by $i$ are found.

\begin{lemma}\label{lm::dominance_right}
    Let \(i = A[j] \ge b\). If the extension of \(S_{j-1}\) by one copy of \(i\) does not generate any new state, then for any item \(i' > i\) such that
    \[
    w_{i'} \ge w_i \quad \text{and} \quad p_{i'} \le p_i \left\lfloor \frac{w_{i'}}{w_i} \right\rfloor,
    \]
    no improved solution can be obtained from including one or more copies of \(i'\). Consequently, the index \(j'\) in the DP recursion that corresponds to \(i'\) can be skipped.
\end{lemma}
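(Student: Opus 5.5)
The idea is to compare, state by state, a solution that uses copies of \(i'\) with one that uses \(\lfloor w_{i'}/w_i\rfloor\) copies of \(i\) instead. Write \(q=\lfloor w_{i'}/w_i\rfloor\ge 1\). Replacing one copy of \(i'\) by \(q\) copies of \(i\) changes the weight by \(q w_i - w_{i'}\le 0\) and the profit by \(q p_i - p_{i'} \ge 0\). So the replacement is weakly dominating, provided enough copies of \(i\) are available.

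\textbf{Step 1: interpret the hypothesis.} Since \(i\ge b\) is a right item, extending \(S_{j-1}\) by one copy of \(i\) maps each state \(s\) to \(s+(p_i,w_i)\). The hypothesis says that every such extended state is dominated by a state already in \(S_{j-1}\), or is pruned by the state-based LP bound or by the capacity and profit limits. By Lemma~\ref{lm::first_iter}, all further iterations for \(i\) also produce nothing new, so \(S_j=S_{j-1}\).

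Iterating this, for every \(s\in S_{j-1}\) and every \(k\ge 1\), the state \(s+k(p_i,w_i)\) is either dominated by some \(s'\in S_{j-1}\) or cannot lead to an improved solution. The induction on \(k\) goes as follows. If \(s+(p_i,w_i)\) is dominated by \(s'\), then \(s+2(p_i,w_i)\) is dominated by \(s'+(p_i,w_i)\), which is again dominated by some \(s''\) or pruned. Pruning is monotone under adding items with efficiency at most that of the next right item, so pruned states stay pruned.

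\textbf{Step 2: the exchange argument.} Take any completion of a state \(s\in S_{j-1}\) that includes \(m\ge 1\) copies of \(i'\) and an arbitrary choice \(x\) of the other unprocessed items. Remove one copy of \(i'\) and add \(q\) copies of "virtual" \(i\) to \(s\). This gives a partial state \(s+q(p_i,w_i)\) plus the remaining items, with weight no larger and profit no smaller.

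By Step 1, \(s+q(p_i,w_i)\) is either pruned, or dominated by some \(s'\in S_{j-1}\). In the pruned case no improved solution arises. In the dominated case, \(s'\) completed with the same remaining items (\(m-1\) copies of \(i'\) plus \(x\)) weakly dominates the original solution. Repeating this \(m\) times gives a state of \(S_{j-1}\), completed without \(i'\), that is at least as good. The dominance argument never actually spends copies of \(i\), so availability of \(i\) is not an issue.

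Hence any improved solution can be taken to use zero copies of \(i'\), and the index of \(i'\) in \(A\) can be skipped.

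\textbf{Main obstacle.} The delicate step is making the pruning part rigorous. We must show that if \(s+q(p_i,w_i)\) is pruned (by \(B\), \(W_{\max}\), or the minimum-profit rule), then \(s\) plus one copy of \(i'\) and any further items is also non-improving. I would first try to argue directly with the LP bound. Since \(i'>i\ge nr\), adding \(i'\) is covered by the fractional relaxation from \(s\), and the bound for the \(i'\) path is at most the bound for the \(q\)-copies-of-\(i\) path, because weight is no larger and profit no smaller. One must also check that states exceeding \(W_{\max}\) remain infeasible under the replacement.

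Care is also needed because dominance within \(S_{j-1}\) is defined relative to the items \(A[j,n]\). Since \(i'\) lies in \(A[j+1,n]\), the completions are shared, which is exactly what the argument uses.
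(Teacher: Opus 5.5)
Your route is the paper's. Its proof also replaces one copy of \(i'\) by \(q=\lfloor w_{i'}/w_i\rfloor\) copies of \(i\) and appeals to Lemma~\ref{lm::first_iter}, and it makes the same leap you do. The dominance induction in your Step~1 is fine, but the clause ``pruned states stay pruned'' fails, and the argument depends on it. After the one-copy extension, the LP bound uses \(nr=i\) only if that was not the last iteration for \(i\). When \(u_i=1\) (every item in the KP), \(nr\) is the next unfixed right item after \(i\), and \(e_i\ge e_{nr}\) may be strict. So extra virtual copies of \(i\) are \emph{not} items of efficiency at most \(e_{nr}\). Concretely, take \(S_{j-1}=\{s\}\), \(W-s_w=10\), \(p_i=10\), \(w_i=5\), \(u_i=1\), \(e_{nr}=1\), \(z=s_p+15\). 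Then \(B(s+(p_i,w_i))=s_p+15<z+1\), so the extension is pruned. Yet \(s+2(p_i,w_i)=(s_p+20,\,s_w+10)\) fits, beats \(z\), and is not dominated by \(s\). Your Step~1 claim is therefore false. The minimum-profit rule breaks the same way, because a virtual copy of \(i\) is not counted in \(p_{\text{right}}\). So the availability of \(i\) does matter, through \(nr\). Your ``main obstacle'' paragraph does not repair this. You correctly note that \(i'\) is covered by the fractional relaxation, but you compare the \(i'\) path with \(B(s+q(p_i,w_i))\), and \(s+q(p_i,w_i)\) was never shown to be pruned.

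The missing idea is to stop the exchange where the dominance chain actually stops, and to absorb the rest of \(i'\) using \(e_{i'}\le e_{nr}\). This holds because \(i'\) is an unprocessed, unfixed right item. Write \(a\preceq b\) when \(b\) dominates \(a\). Let \(s^0=s\) and \(s^{k-1}+(p_i,w_i)\preceq s^k\in S_{j-1}\) for \(k<r\), where \(s^{r-1}+(p_i,w_i)\) is the first extension discarded by a bound. If \(r>q\), then \(s+(p_{i'},w_{i'})\preceq s^q\), and your iteration over the \(m\) copies of \(i'\) goes through. If \(r\le q\), put \(\rho=(p_{i'}-rp_i,\;w_{i'}-rw_i)\), so that \(s+(p_{i'},w_{i'})=s+r(p_i,w_i)+\rho\preceq s^{r-1}+(p_i,w_i)+\rho\). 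Here \(\rho_w\ge 0\) because \(r\le q\), and \(\rho_p\le e_{nr}\rho_w\) because \(e_{i'}\le e_{nr}\le e_i\). So \(\rho\) acts like one more right item of efficiency at most \(e_{nr}\). Every feasible completion by items of \(A[j+1,n]\) is then worth at most \(B(s^{r-1}+(p_i,w_i))<z+1\). The \(W_{\max}\) and minimum-profit cases follow from \(s_w+w_{i'}\ge s^{r-1}_w+w_i\) and \(s_p\le s^{r-1}_p\), with \(p_{\text{right}}\) already counting all copies of \(i'\). Note that the condition \(p_{i'}\le qp_i\) is needed only in the branch \(r>q\).
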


\begin{lemma}\label{lm::dominance_left}
    Let \(i = A[j] < b\). If the extension of \(S_{j-1}\) by one copy of \(i\) does not generate any new state, then for any item \(i' < i\) such that
    \[
    w_{i'} \le w_i \quad \text{and} \quad p_{i'} \ge p_i \left \lceil \frac{w_{i'}}{w_i} \right\rceil = p_i,
    \]
    no improved solution can be obtained from removing one or more copies of \(i'\). Consequently, the index \(j'\) in the DP recursion that corresponds to \(i'\) can be skipped.
\end{lemma}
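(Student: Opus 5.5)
The plan is an exchange argument. Every improved solution that removes copies of \(i'\) will be turned into one that is at least as good and removes fewer copies of \(i'\). Iterating this, the index \(j'\) of \(i'\) can be skipped without losing optimality.

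First note the basic comparison. Since \(w_{i'} \le w_i\), we have \(\lceil w_{i'}/w_i\rceil = 1\), so the hypothesis reads \(p_{i'} \ge p_i\). For any state \(s=(s_p,s_w)\), removing a copy of \(i\) gives \((s_p-p_i,\,s_w-w_i)\). This has profit at least \(s_p-p_{i'}\) and weight at most \(s_w-w_{i'}\). Hence the state obtained by removing \(i'\) from \(s\) is dominated by the state obtained by removing \(i\) from \(s\). Because \(i'<i\) is a left item and left items appear in the good permutation in non-increasing efficiency (\(b-1,b-2,\dots\)), the index \(j'\) of \(i'\) satisfies \(j'>j\). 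So \(i'\) is processed after \(i\), which is what allows the swap to be made on the prefix \(A[1,j]\).

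Next I would use the hypothesis together with Lemma~\ref{lm::first_iter} (equivalently Theorem~\ref{th::k_iter}). Since the first iteration for \(i\) produces no new state, no later iteration does either. So every state obtained from \(S_{j-1}\) by removing any number of copies of \(i\) is dominated by, or pruned relative to, a state of \(S_{j-1}\) itself. A state of \(S_{j-1}\) involves zero removals of \(i\).

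Now take an improved solution \(x\) that removes \(k\ge 1\) copies of \(i'\) and \(m\) copies of \(i\). Its restriction to \(A[1,j-1]\) is dominated by some \(q'\in S_{j-1}\). By the previous paragraph, \(q'\) minus \(m\) copies of \(i\) is dominated by some \(q''\in S_{j-1}\). Replace the prefix of \(x\) by \(q''\), and then remove one copy of \(i\) in place of one copy of \(i'\). This uses one copy of \(i\), so availability is not a concern. Then apply all the remaining operations of \(x\), including the other \(k-1\) removals of \(i'\).

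The first paragraph shows the resulting solution has profit at least that of \(x\) and weight at most that of \(x\). It therefore stays feasible and improving, and it removes only \(k-1\) copies of \(i'\). Its prefix up to \(j\), namely \(q''\) minus one copy of \(i\), is again covered by \(S_{j-1}\). Induction on \(k\) then yields an improving solution that removes no copies of \(i'\). Hence no improved solution needs the removal of \(i'\), and \(j'\) can be skipped.

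The main obstacle is the \emph{pruned} alternative: \(q''\) minus one copy of \(i\) may be discarded by the state-based LP bound or the minimum-profit rule rather than by dominance. I would handle this by checking that \(B(s)\) in \eqref{eq:state_bound} is monotone under dominance. As a function of \(s_w\), it is a piecewise linear, nonincreasing, concave function, because \(p_{nr}/w_{nr} \le p_{nl}/w_{nl}\). It is also increasing in \(s_p\). So if \(B(q''-i) < z+1\), then every state dominated by \(q''-i\) also has bound below \(z+1\); in particular this holds for \(q''-i'\) and all of its descendants. The same monotonicity covers the \(W_{\max}\) and minimum-profit filters. This contradicts \(x\) being improving, so the pruned case cannot occur along \(x\), and the argument closes.
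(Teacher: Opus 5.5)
Your proof is correct and rests on the same two ingredients as the paper's. The first is Lemma~\ref{lm::first_iter}: removals of \(i\) generate nothing beyond \(S_{j-1}\). The second is that \(p_{i'} \ge p_i\) and \(w_{i'} \le w_i\) make removing \(i'\) dominated by removing \(i\).

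The paper states this in a few lines at the level of DP states. Your solution-level exchange, with induction on the number of removed copies of \(i'\) and monotonicity of \(B(s)\) for the pruned case, makes explicit what the paper leaves implicit: that \(i'\) is processed after \(i\), that each swap needs only one copy of \(i\), and that pruning respects dominance.
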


\begin{lemma}\label{lm::dominance_right_2}
Let \( i = A[j] \ge b \), and assume that in the last iteration for item \( i \) some new states are generated. Let \( s \in S_j \) be the state with the minimum weight \( s_w \) such that an additional copy of \( i \) (if such a copy exists, which is not the case since we already performed the last iteration of $i$) could generate a new state by extending \( s \). If the state \( s \) exists, then an item \( i' > i \) with \( w_{i'} \ge w_i \) and \( p_{i'} \le p_i \left\lfloor \frac{w_{i'}}{w_i} \right\rfloor \) can generate a new state only if \( s_w + w_{i'} - \left( \left\lfloor \frac{w_{i'}}{w_i} \right\rfloor - 1 \right) w_i \le W_{\max} \).
\end{lemma}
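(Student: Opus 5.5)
We prove the contrapositive by an exchange argument: if a new, non-dominated state $t$ is produced by extending some state of $S_{j'}$ (the state set inherited when item $i'$ is processed) with copies of $i'$, then the hypothesis forces $s_w + w_{i'} - (\lfloor w_{i'}/w_i\rfloor - 1) w_i \le W_{\max}$.

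Write $q = \lfloor w_{i'}/w_i \rfloor \ge 1$. The conditions $w_{i'} \ge w_i$ and $p_{i'} \le q p_i$ say that one copy of $i'$ weighs at least as much as $q$ copies of $i$ and earns no more profit. Let $r$ be a state of $S_{j'}$ and $t = r + (\text{one copy of } i')$ a new state. The first step, as in Lemma~\ref{lm::dominance_right}, is to trace the part of $r$ involving item $i$. Because $i$ is processed before $i'$, the state $r$ descends from a state of $S_j$. Replacing the copy of $i'$ in $t$ by $q$ copies of $i$ gives a state $t'$ with $t'_p \ge t_p$ and $t'_w \le t_w$. Since $t$ was not dominated, $t'$ cannot be available among the enumerated states. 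The only reason it could be missing, beyond pruning, is that the copies of $i$ it needs exceed what was processed. So $t$ survives only if $t'$ requires copies of $i$ beyond those already enumerated.

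The key step is to quantify this. The $q$ substituted copies of $i$ added to the ancestor must include at least one copy beyond the last processed one, that is, a hypothetical extra copy of $i$ applied to some state of $S_j$. By definition, $s$ is the lightest state of $S_j$ at which such an extra copy would create a new state. I would argue that the ancestor of $r$ in $S_j$, with some $m \le q-1$ further copies of $i$ already absorbed into valid enumeration, must be a state $s'$ on which an extra copy yields a new state. Then $s'_w \ge s_w$. The remaining $q-1$ copies of $i$ in the substitution are weight "slack" that $i'$ must exceed. This gives
\[
t_w \;\ge\; s'_w + w_{i'} - (q-1) w_i \;\ge\; s_w + w_{i'} - (q-1) w_i .
\]
For $t$ to be kept, we need $t_w \le W_{\max}$ (Section~\ref{sec::max_cap}), and the stated inequality follows.

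The main obstacle is making this substitution argument precise. I would first handle the case $q=1$, which is the clean analogue of Lemma~\ref{lm::dominance_right}: a new state from $i'$ requires that $i$ could also have produced a new state from the same base. Then I would generalize by induction on the number of substituted copies, using Theorem~\ref{th::k_iter} to assert monotonicity: once an iteration of $i$ yields nothing, later ones also yield nothing. This justifies that the relevant base state has weight at least $s_w$. Multiple copies of $i'$ reduce to the single-copy case, since the lightest new state arising from $i'$ comes from its first copy.
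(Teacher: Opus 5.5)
Your outline follows the paper's proof. You replace the copy of $i'$ by $q=\lfloor w_{i'}/w_i\rfloor$ copies of $i$, walk the single-copy chain $a, a+i, \dots, a+q\cdot i$ from the ancestor $a\in S_j$, stop at an index $m\le q-1$, compare with $s$, and close with $W_{\max}$.

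The gap is the step you defer. You take the witness to be the last chain state $a+m\cdot i$ still in $S_j$, and you assert that one more copy of $i$ on it creates a new state, so its weight is at least $s_w$. That is false in general. The chain can leave $S_j$ because a \emph{different} state of $S_j$ dominates it, not because copies of $i$ ran out. Your sentence ``the only reason it could be missing, beyond pruning, is that the copies of $i$ it needs exceed what was processed'' is where this slips in.

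Here is an example, with $W_{\max}$ large and $z$ small so nothing is pruned. Take $u_i=1$, $p_i=w_i=10$, $S_{j-1}=\{a,c\}$ with $a=(100,100)$ and $c=(110,105)$, and $p_{i'}=30$, $w_{i'}=31$, so $q=3$. Then $a+i=(110,110)$ is dominated by $c$, so $S_j=\{a,c,c+i\}$, and the extension $a+i'=(130,131)$ is new. The only chain state in $S_j$ is $a$, yet $a+i$ is not new. In fact $s=c+i$, with $s_w=115>a_w$.

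The paper's proof makes the same move when it asserts $s_w\le s^k_w$ for the largest $k$ with $s^k\in S_j$, and that inequality fails here too. Only the final bound, $115\le a_w+(q-1)w_i=120$, survives. Theorem~\ref{th::k_iter} cannot supply the step either: it concerns an entire iteration producing nothing, whereas here the last iteration did produce states and the question is about one specific state.

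The missing idea is to use the dominating state instead of the chain state. Let $m$ be the largest index in $\{0,\dots,q\}$ such that $a+m\cdot i$ is dominated by some $y\in S_j$ (possibly $y=a$, so $m\ge 0$). Since $a+q\cdot i$ dominates the new state $a+i'$, we have $m\le q-1$. Then $y+i$ dominates $a+(m+1)\cdot i$, which no state of $S_j$ dominates, so no state of $S_j$ dominates $y+i$. Its weight is at most $a_w+(m+1)w_i\le a_w+w_{i'}\le W_{\max}$. Comparing LP bounds, using $p_{i'}\le q\,p_i$, $w_{i'}\ge q\,w_i$ and $e_{i'}\le e_{nr}\le e_i\le e_{nl}$, shows $y+i$ is not pruned whenever $a+i'$ is not.

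So $y$ is admissible in the definition of $s$, giving $s_w\le y_w\le a_w+m\,w_i\le a_w+(q-1)w_i$. Combining this with $a_w+w_{i'}\le W_{\max}$ yields the claim, with no induction needed. In the example, $m=2$ and $y=c+i$.

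Finally, argue for extensions of states of $S_j$, as the paper does. If you start from $r\in S_{j'}$, left items processed between $j$ and $j'$ can make $r_w<a_w$. Then $t_w\ge a_w+w_{i'}$ does not follow unless you also use that $W_{\max}$ drops by at least the same amount.
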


\subsection{Heuristics}\label{sec::heuristic}

Designing efficient heuristics for the knapsack problem is challenging, as state-of-the-art exact algorithms are already extremely fast. In practice, heuristics with time complexity above $O(n \log n)$ are rarely competitive, and enumerating DP states may be faster.

We employ three initial heuristics and five primal heuristics during DP enumeration. The \emph{initial heuristics} derive solutions from the extended break solution by: (i) selecting a right item $i$ such that $e \leq d_i$ copies fit; (ii) swapping the least efficient item in the extended break solution with $e \leq d_i$ copies of a right item $i$; and (iii) removing $e \leq d_i$ copies of a left item $i$ to insert one additional copy of the break item~$b$. These heuristics follow the ideas of~\citet{Martello_1999}, run in time complexity $O(n)$, and are used to generate the initial solution (line~2 of Algorithm~\ref{main_algorithm}).

The \emph{Pairing Heuristic} (PH) of~\citet{Martello_1999} is the first primal heuristic. For each item $i$ outside the core $[l_j, r_j]$, it identifies the best state $s \in S_j$ such that $s_w + \overline{w}_i \le W$, where the best state maximizes $s_w$. Since $S_j$ is sorted by weight, this state is found via binary search. We also consider a variant that evaluates every pair $(i_l, i_r)$ of left and right unfixed items outside the core, referred to as the \emph{Two Pairing Heuristic} (TPH).

The third primal heuristic, the \emph{Subset Pairing Heuristic} (SSPH), enumerates all subsets of $k$ selected items in $O(k 2^k)$ time. Given a parameter $\alpha$, we choose the largest $k$ such that $\alpha k 2^k \le |S_j|$, and randomly select $k$ unfixed items outside the core (or all of them if fewer are available). For each non-singleton subset, we perform a pairing step with $S_j$. The resulting complexity is $O(\alpha k 2^k \log |S_j|)$. In our implementation, $\alpha = 20$, yielding overall complexity $O(|S_j| \log |S_j|)$ with a small constant.

The above primal heuristics define the \emph{heavy primal} (HP) heuristic and are triggered so as to limit overhead. PH runs in $O(n \log |S_j|)$ time and is invoked when the number of enumerated states reaches a threshold $T_{\text{HP}}$, initially set to $10n$ and doubled after each call. TPH, with complexity $O(|P| \log |S_j|)$, where $|P|$ denotes the number of pairs $(i_l, i_r)$ considered, and SSPH are executed together with PH when the number of pairs $(i_l, i_r)$ is smaller than $|S_j|/\log |S_j|$ and when $|S_j| \ge n^2$, respectively.

The fourth primal heuristic, the \emph{Sampling Pairing Heuristic} (SPH), mitigates the overhead of executing PH too early while avoiding excessive delay before reaching $T_{\text{HP}}$. SPH applies PH to a subset of $O(\beta)$ items, for a parameter $\beta$. We set $\beta = \lceil |S_j|/50 \rceil$, ensuring $O(|S_j|)$ candidates with a small constant. Let $\gamma = \lfloor n/\beta \rfloor$, $\gamma_{\text{left}} = \min(l_j, \gamma)$, and $\gamma_{\text{right}} = \min(n - r_j, \gamma)$. If $\gamma_{\text{left}}$ or $\gamma_{\text{right}}$ is too small, either the corresponding interval is short, or PH has already been (or will soon be) executed. We therefore use a minimum size parameter $\kappa = 3$. If $\gamma_{\text{left}} \ge \kappa$, we partition the left interval into blocks of size $\gamma_{\text{left}}$ and select one random item per block (ignoring items with $u_i = 0$). The same is done on the right with $\gamma_{\text{right}}$. SPH runs in $O(|S_j| \log |S_j|)$ time with a very small constant; in benchmark instances, it is significantly faster than extending the next DP state, which costs $O(|S_j|)$ and is less cache-friendly. Hence, applying SPH after each item evaluation introduces negligible overhead.

The fifth primal heuristic, the \emph{Greedy Completion Heuristic} (GCH), starts from the incumbent solution and greedily fills the remaining capacity using items in $[r, n]$, where $r$ is the first item after the rightmost selected item in the incumbent. It runs in $O(n)$ time and is invoked after each new incumbent is found, as well as initially using the solution from line~2 of Algorithm~\ref{main_algorithm}.

\section{Computational Experiments}~\label{sec::experiments}
This section presents computational experiments comparing our proposed algorithm, \texttt{RECORD}, with \texttt{COMBO} and \texttt{BOUKNAP}. The implementations of \texttt{COMBO} and \texttt{BOUKNAP} were obtained from Pisinger’s website (\url{https://hjemmesider.diku.dk/~pisinger/codes.html}), using the revised versions from 2002 and 1999, respectively. All experiments were conducted on a machine running Ubuntu 22.04.1 LTS (64-bit), using C++14 with GCC 11.4.0 compiled with flags \texttt{-O3} and \texttt{-march=native}. The hardware consists of an Intel\textsuperscript{\textregistered}\ Xeon\textsuperscript{\textregistered}\ CPU E5-2630 v4 @ 2.20~GHz with 64~GB of RAM. The \texttt{RECORD} source code and experimental results are available at \url{https://gitlab.com/renanfernandofranco/record}.

\texttt{COMBO} pre-allocates all memory for the DP states in advance using a fixed parameter, \texttt{MAX STATES}, which we set to $2^{31}\approx 2 \cdot 10^9$ in our experiments, corresponding to approximately 48 GB of RAM\@. We notice that allocating such a large amount of memory does not affect runtime, since modern operating systems reserve it as virtual memory in a single constant-time operation and allocate physical memory only upon access. In contrast, the other algorithms allocate memory dynamically as needed, with no imposed memory limits.

In the literature, two benchmark sets are particularly relevant: the widely used set proposed by \citet{Pisinger_2005} and the more challenging instances introduced by \citet{Jooken_2022}. In addition, \citet{Miles_2021} applied instance space analysis using a diverse collection of features to characterize where the instances from \citet{Pisinger_2005} lie in the resulting feature space, and proposed new instances to improve the benchmark’s coverage. Although this approach is appealing and yields greater diversity with respect to the selected features, its results indicate that the new instances are not harder than the original ones. Moreover, several of the proposed instances are very easy or even trivial, which contributes to broader space coverage but is of limited practical interest. We believe this can be explained by the fact that genuinely hard knapsack instances require complex structural properties (e.g., the instances later proposed by~\citet{Jooken_2022}) that are difficult to identify or classify through a posteriori analyses relying solely on item profits and weights. In light of this, and since~\citet{Miles_2021} only proposed small instances with 100 items, we focus in this section on the other two benchmarks and report results for the Smith–Miles' benchmark only in the appendices.

Another observation is that we adapted \texttt{COMBO} to use 128-bit integers when computing upper bounds, instead of the original double-precision (64-bit) floating-point type. This modification was necessary because we observed incorrect outputs on the benchmark from~\citet{Jooken_2022}. These errors can be attributed to the extensive enumeration of states with large profits and weights, in which the computation of upper bounds involves values with more than 20 significant decimal places and therefore cannot be represented with full numerical precision using the \texttt{double} type. Although the instances from~\citet{Pisinger_2005} involve large profits and weights, we observed no incorrect results. This can be attributed to the very small initial absolute optimality gap in this benchmark, which directly impacts the numerical precision required for upper-bound computations, and to the less intensive state enumeration process required by these instances, which reduces the risk of rounding errors causing the algorithm to prune optimal solutions. \texttt{RECORD} follows the same approach for computing upper bounds. This way, it is possible to ensure numerical safety for both algorithms across all studied instances.

The following sections briefly describe the benchmark instances from the literature and present computational experiments comparing our solver with \texttt{COMBO} and \texttt{BOUKNAP}. A more detailed study assessing the impact of each feature of our algorithm is provided in Appendix~\ref{ap:feature}.

\subsection{Instances}
The Pisinger dataset~\citep{Pisinger_2005} for the KP consists of the following instance classes:
\emph{uncorrelated}, \emph{weakly correlated}, \emph{strongly correlated}, \emph{inverse strongly correlated}, \emph{almost strongly correlated}, \emph{subset sum}, \emph{uncorrelated with similar weights}, \emph{uncorrelated span}, \emph{weakly correlated span}, \emph{strongly correlated span}, \emph{multiple strongly correlated}, \emph{profit ceiling}, and \emph{circle}. The number of items ranges from $20$ to $10^4$, and profits and weights are bounded by $cR$, where $R$ is a range parameter and $c \in \mathbb{Z}_{+}$ is a small class-dependent constant (see \citet{Pisinger_2005}).

The first seven classes are taken from the literature and extended to include large coefficients, which can significantly affect the runtime of KP algorithms. The first six classes have $R \in \{10^3, 10^4, 10^5, 10^6, 10^7\}$, while \emph{uncorrelated with similar weights} uses $R \in \{10^5, 10^7, 10^8\}$. The remaining six classes are newly designed to be challenging even with small coefficients and use $R = 10^3$. For each class, $n$, and $R$, $H = 100$ instances are provided, and the knapsack capacity of the $h$-th instance ($h = 1, \dots, H$) is defined as $W = \frac{h}{H+1} \sum_{j=1}^n w_j$. This results in a total of 31,800 instances.

The Jooken dataset~\citep{Jooken_2022} comprises 3,240 instances defined by the number of items $n \in \{400,600,800,1000,1200\}$, the knapsack capacity $W \in \{10^6,10^8,10^{10}\}$, and additional parameters controlling the instance structure. One such parameter is the number of groups $g$, which determines how items are partitioned during instance generation, with items in the same group having similar profits and weights. In our experiments, we observed that $g$ noticeably influences practical difficulty. Additionally, it is worth noting that, although the Jooken dataset is challenging for specialized knapsack solvers, the numerical issues inherent to such instances make them practically unsolvable by some existing general-purpose mixed-integer programming~(MIP) solvers. For these instances, we observe that it is important to consider higher-precision arithmetic or operate with rational numbers, which is not the case for some existing MIP solvers.

\subsection{Results in the Pisinger's benchmark for KP}~\label{sec::results_kp}
In Table~\ref{tab::pisinger}, we report results for the Pisinger KP benchmark comparing \texttt{COMBO} and \texttt{RECORD}. Results for \texttt{BOUKNAP} are omitted, as it is dominated by \texttt{COMBO}. Computational experiments involving \texttt{BOUKNAP} and BKP instances are presented in the next section.

Within the 1200-second time limit, both \texttt{COMBO} and \texttt{RECORD} solved all instances. Results are grouped by class and number of items, where $\#\text{inst}$ denotes the number of instances per group. Due to space constraints and non-homogeneous grouping, we report geometric averages. Average runtimes and standard deviations are given in \textbf{milliseconds}. The column \emph{Wins} indicates the number of instances in which a solver is faster. The column \emph{Avg ratio} reports the average instance-wise runtime ratio between \texttt{COMBO} and \texttt{RECORD}; values greater than 1.05 (favoring \texttt{RECORD}) are highlighted in bold. We boldface the larger number of wins and any average time that is at most 5\% worse than the best average time in that row. As runtimes are subject to measurement noise, particularly at the millisecond scale, we consider both solvers equivalent on an instance if their runtimes differ by less than 5\%. If an algorithm solves an instance in less than 100 ms, we perform 19 additional runs and report the median over 20 executions.

\begin{table}[!htbp]
\caption{Comparison between COMBO and RECORD on Pisinger classes. \label{tab::pisinger}}
\centering
\begin{subtable}[t]{0.49\textwidth}
\centering
\renewcommand{\arraystretch}{0.92}
\resizebox{\textwidth}{!}{%
\begin{tabular}{ccc|rrrrrr|r}
\toprule
\multicolumn{3}{c}{} & \multicolumn{3}{c}{\textbf{COMBO}} & \multicolumn{3}{c}{\textbf{RECORD}} & \\
\cmidrule(lr){4-6} \cmidrule(lr){7-9}
\textbf{Class} & n & $\#\text{inst}$ & \makecell{Avg\\time\\(ms)} & \makecell{Std\\time\\(ms)} & \makecell{Wins} & \makecell{Avg\\time\\(ms)} & \makecell{Std\\time\\(ms)} & \makecell{Wins} & \makecell{Avg\\ratio} \\
\midrule
\multirow{8}{*}{\makecell{Uncorrelated}} & 50 & 500 & 0.04 & 0.00 & 0 & \textbf{0.01} & 0.00 & \textbf{500} & \textbf{2.59} \\
 & 100 & 500 & 0.04 & 0.00 & 0 & \textbf{0.02} & 0.00 & \textbf{499} & \textbf{2.17} \\
 & 200 & 500 & 0.05 & 0.01 & 0 & \textbf{0.03} & 0.01 & \textbf{500} & \textbf{1.84} \\
 & 500 & 500 & 0.06 & 0.01 & 3 & \textbf{0.05} & 0.01 & \textbf{486} & \textbf{1.36} \\
 & 1000 & 500 & 0.09 & 0.02 & 25 & \textbf{0.08} & 0.03 & \textbf{406} & \textbf{1.18} \\
 & 2000 & 500 & \textbf{0.14} & 0.04 & 125 & \textbf{0.14} & 0.05 & \textbf{249} & 1.05 \\
 & 5000 & 500 & \textbf{0.29} & 0.10 & \textbf{367} & 0.33 & 0.12 & 43 & 0.87 \\
 & 10000 & 500 & \textbf{0.56} & 0.24 & \textbf{363} & 0.64 & 0.28 & 51 & 0.87 \\
\hline
\multirow{8}{*}{\makecell{Weakly\\correlated}} & 50 & 500 & 0.05 & 0.01 & 4 & \textbf{0.02} & 0.01 & \textbf{489} & \textbf{1.94} \\
 & 100 & 500 & 0.06 & 0.02 & 17 & \textbf{0.03} & 0.02 & \textbf{446} & \textbf{1.64} \\
 & 200 & 500 & 0.07 & 0.03 & 59 & \textbf{0.05} & 0.04 & \textbf{398} & \textbf{1.40} \\
 & 500 & 500 & 0.12 & 0.06 & 113 & \textbf{0.10} & 0.08 & \textbf{334} & \textbf{1.15} \\
 & 1000 & 500 & 0.19 & 0.13 & 156 & \textbf{0.18} & 0.15 & \textbf{264} & \textbf{1.07} \\
 & 2000 & 500 & \textbf{0.31} & 0.28 & \textbf{235} & \textbf{0.32} & 0.32 & 178 & 0.98 \\
 & 5000 & 500 & \textbf{0.69} & 0.68 & \textbf{335} & 0.78 & 0.80 & 83 & 0.88 \\
 & 10000 & 500 & \textbf{1.29} & 1.33 & \textbf{356} & 1.49 & 1.50 & 86 & 0.87 \\
\hline
\multirow{8}{*}{\makecell{Strongly\\correlated}} & 50 & 500 & 0.25 & 739.44 & 103 & \textbf{0.08} & 2.22 & \textbf{374} & \textbf{3.27} \\
 & 100 & 500 & 1.15 & 1082.46 & 72 & \textbf{0.19} & 10.49 & \textbf{420} & \textbf{6.23} \\
 & 200 & 500 & 3.83 & 1209.59 & 25 & \textbf{0.33} & 11.25 & \textbf{472} & \textbf{11.54} \\
 & 500 & 500 & 4.70 & 1575.45 & 14 & \textbf{0.47} & 5.09 & \textbf{471} & \textbf{10.10} \\
 & 1000 & 500 & 4.71 & 1895.37 & 82 & \textbf{0.66} & 2.74 & \textbf{387} & \textbf{7.18} \\
 & 2000 & 500 & 3.59 & 932.26 & 52 & \textbf{0.79} & 1.16 & \textbf{409} & \textbf{4.53} \\
 & 5000 & 500 & 3.07 & 488.48 & 33 & \textbf{1.38} & 1.19 & \textbf{413} & \textbf{2.22} \\
 & 10000 & 500 & 4.46 & 4300.39 & 16 & \textbf{2.20} & 1.22 & \textbf{446} & \textbf{2.03} \\
\hline
\multirow{8}{*}{\makecell{Inverse\\strongly\\correlated}} & 50 & 500 & 0.22 & 594.09 & 52 & \textbf{0.06} & 1.80 & \textbf{422} & \textbf{3.48} \\
 & 100 & 500 & 0.80 & 835.21 & 44 & \textbf{0.14} & 7.41 & \textbf{438} & \textbf{5.67} \\
 & 200 & 500 & 2.46 & 1017.35 & 25 & \textbf{0.26} & 6.06 & \textbf{469} & \textbf{9.40} \\
 & 500 & 500 & 2.64 & 1751.48 & 59 & \textbf{0.45} & 4.35 & \textbf{421} & \textbf{5.90} \\
 & 1000 & 500 & 2.69 & 2632.23 & 113 & \textbf{0.60} & 2.93 & \textbf{357} & \textbf{4.47} \\
 & 2000 & 500 & 2.77 & 2698.74 & 98 & \textbf{0.85} & 1.35 & \textbf{360} & \textbf{3.27} \\
 & 5000 & 500 & 2.81 & 5152.88 & 134 & \textbf{1.59} & 1.13 & \textbf{319} & \textbf{1.77} \\
 & 10000 & 500 & 4.10 & 2155.53 & 117 & \textbf{2.66} & 1.26 & \textbf{331} & \textbf{1.54} \\
\hline
\multirow{8}{*}{\makecell{Almost\\strongly\\correlated}} & 50 & 500 & \textbf{0.09} & 0.28 & \textbf{318} & \textbf{0.09} & 0.28 & 171 & \textbf{1.07} \\
 & 100 & 500 & \textbf{0.23} & 0.65 & \textbf{290} & \textbf{0.23} & 0.65 & 180 & 0.98 \\
 & 200 & 500 & \textbf{0.42} & 1.26 & \textbf{282} & \textbf{0.43} & 1.29 & 182 & 0.98 \\
 & 500 & 500 & \textbf{0.55} & 0.67 & \textbf{341} & 0.64 & 0.62 & 119 & 0.86 \\
 & 1000 & 500 & \textbf{0.72} & 0.89 & \textbf{407} & 0.92 & 0.91 & 58 & 0.79 \\
 & 2000 & 500 & \textbf{0.94} & 0.96 & \textbf{455} & 1.31 & 0.95 & 28 & 0.71 \\
 & 5000 & 500 & \textbf{1.60} & 0.85 & \textbf{483} & 2.63 & 1.23 & 13 & 0.61 \\
 & 10000 & 500 & \textbf{2.89} & 1.33 & \textbf{487} & 4.87 & 2.29 & 8 & 0.59 \\
\hline
\multirow{8}{*}{\makecell{Subset\\sum}} & 50 & 500 & 3.50 & 649.18 & 39 & \textbf{0.30} & 20.67 & \textbf{458} & \textbf{11.48} \\
 & 100 & 500 & 3.95 & 823.83 & 38 & \textbf{0.30} & 2.24 & \textbf{460} & \textbf{13.18} \\
 & 200 & 500 & 3.00 & 837.98 & 68 & \textbf{0.35} & 3.55 & \textbf{424} & \textbf{8.62} \\
 & 500 & 500 & 2.26 & 777.30 & 59 & \textbf{0.33} & 3.63 & \textbf{438} & \textbf{6.74} \\
 & 1000 & 500 & 1.62 & 648.31 & 61 & \textbf{0.38} & 2.85 & \textbf{431} & \textbf{4.20} \\
 & 2000 & 500 & 1.20 & 619.38 & 59 & \textbf{0.44} & 1.64 & \textbf{431} & \textbf{2.74} \\
 & 5000 & 500 & 1.06 & 487.31 & \textbf{271} & \textbf{0.83} & 1.69 & 212 & \textbf{1.28} \\
 & 10000 & 500 & \textbf{1.10} & 159.75 & \textbf{339} & 1.28 & 2.08 & 135 & 0.86 \\
\hline
\multirow{8}{*}{\makecell{Uncorrelated\\with\\Similar\\Weights}} & 50 & 300 & 0.04 & 0.01 & 29 & \textbf{0.02} & 0.02 & \textbf{245} & \textbf{1.78} \\
 & 100 & 300 & 0.07 & 0.04 & 40 & \textbf{0.04} & 0.03 & \textbf{246} & \textbf{1.55} \\
 & 200 & 300 & 0.12 & 0.13 & 53 & \textbf{0.08} & 0.06 & \textbf{240} & \textbf{1.50} \\
 & 500 & 300 & 0.37 & 0.85 & 47 & \textbf{0.21} & 0.14 & \textbf{225} & \textbf{1.77} \\
 & 1000 & 300 & 0.65 & 1.38 & 66 & \textbf{0.44} & 0.27 & \textbf{193} & \textbf{1.48} \\
 & 2000 & 300 & \textbf{0.97} & 1.35 & \textbf{217} & 1.03 & 0.57 & 75 & 0.95 \\
 & 5000 & 300 & \textbf{2.14} & 1.28 & \textbf{260} & 2.85 & 0.86 & 37 & 0.75 \\
 & 10000 & 300 & \textbf{3.57} & 0.84 & \textbf{293} & 5.51 & 1.36 & 4 & 0.65 \\
\hline
\bottomrule
\end{tabular}
}
\end{subtable}
\hfill
\begin{subtable}[t]{0.49\textwidth}
\centering
\renewcommand{\arraystretch}{0.92}
\resizebox{\textwidth}{!}{%
\begin{tabular}{ccc|rrrrrr|r}
\toprule
\multicolumn{3}{c}{} & \multicolumn{3}{c}{\textbf{COMBO}} & \multicolumn{3}{c}{\textbf{RECORD}} & \\
\cmidrule(lr){4-6} \cmidrule(lr){7-9}
\textbf{Class} & n & $\#\text{inst}$ & \makecell{Avg\\time\\(ms)} & \makecell{Std\\time\\(ms)} & \makecell{Wins} & \makecell{Avg\\time\\(ms)} & \makecell{Std\\time\\(ms)} & \makecell{Wins} & \makecell{Avg\\ratio} \\
\midrule
\multirow{9}{*}{\makecell{Uncorrelated\\span}} & 20 & 100 & 0.04 & 0.00 & 0 & \textbf{0.02} & 0.01 & \textbf{100} & \textbf{2.34} \\
 & 50 & 100 & 0.05 & 0.01 & 7 & \textbf{0.02} & 0.02 & \textbf{89} & \textbf{2.19} \\
 & 100 & 100 & 0.07 & 0.04 & 12 & \textbf{0.03} & 0.04 & \textbf{84} & \textbf{2.31} \\
 & 200 & 100 & 0.15 & 0.12 & 11 & \textbf{0.05} & 0.08 & \textbf{85} & \textbf{3.01} \\
 & 500 & 100 & 0.63 & 0.88 & 2 & \textbf{0.11} & 0.21 & \textbf{97} & \textbf{5.89} \\
 & 1000 & 100 & 1.68 & 4.55 & 2 & \textbf{0.19} & 0.46 & \textbf{97} & \textbf{8.78} \\
 & 2000 & 100 & 4.88 & 20.63 & 1 & \textbf{0.34} & 1.00 & \textbf{99} & \textbf{14.25} \\
 & 5000 & 100 & 16.87 & 96.96 & 5 & \textbf{0.82} & 2.80 & \textbf{94} & \textbf{20.66} \\
 & 10000 & 100 & 56.08 & 472.76 & 2 & \textbf{1.52} & 5.70 & \textbf{98} & \textbf{36.82} \\
\hline
\multirow{9}{*}{\makecell{Weakly\\correlated\\span}} & 20 & 100 & 0.04 & 0.02 & 3 & \textbf{0.02} & 0.01 & \textbf{94} & \textbf{1.80} \\
 & 50 & 100 & 0.06 & 0.13 & 13 & \textbf{0.04} & 0.06 & \textbf{75} & \textbf{1.51} \\
 & 100 & 100 & 0.10 & 0.07 & 23 & \textbf{0.07} & 0.04 & \textbf{62} & \textbf{1.44} \\
 & 200 & 100 & 0.28 & 0.76 & 12 & \textbf{0.14} & 0.11 & \textbf{78} & \textbf{1.93} \\
 & 500 & 100 & 1.20 & 2.47 & 4 & \textbf{0.34} & 0.22 & \textbf{93} & \textbf{3.50} \\
 & 1000 & 100 & 3.55 & 8.31 & 0 & \textbf{0.61} & 0.49 & \textbf{99} & \textbf{5.82} \\
 & 2000 & 100 & 13.83 & 37.69 & 2 & \textbf{1.43} & 1.04 & \textbf{98} & \textbf{9.69} \\
 & 5000 & 100 & 55.78 & 271.39 & 4 & \textbf{3.14} & 3.27 & \textbf{96} & \textbf{17.78} \\
 & 10000 & 100 & 242.06 & 896.92 & 1 & \textbf{7.91} & 5.80 & \textbf{99} & \textbf{30.61} \\
\hline
\multirow{9}{*}{\makecell{Strongly\\correlated\\span}} & 20 & 100 & 0.04 & 0.01 & 5 & \textbf{0.02} & 0.02 & \textbf{85} & \textbf{1.71} \\
 & 50 & 100 & 0.06 & 0.04 & 8 & \textbf{0.04} & 0.04 & \textbf{78} & \textbf{1.45} \\
 & 100 & 100 & 0.10 & 0.19 & 13 & \textbf{0.07} & 0.10 & \textbf{73} & \textbf{1.48} \\
 & 200 & 100 & 0.24 & 0.48 & 7 & \textbf{0.11} & 0.14 & \textbf{92} & \textbf{2.10} \\
 & 500 & 100 & 1.20 & 3.16 & 2 & \textbf{0.33} & 0.36 & \textbf{97} & \textbf{3.68} \\
 & 1000 & 100 & 3.88 & 10.49 & 0 & \textbf{0.65} & 0.66 & \textbf{99} & \textbf{5.99} \\
 & 2000 & 100 & 16.07 & 74.62 & 0 & \textbf{1.44} & 1.78 & \textbf{99} & \textbf{11.17} \\
 & 5000 & 100 & 80.81 & 328.32 & 1 & \textbf{4.21} & 3.65 & \textbf{99} & \textbf{19.21} \\
 & 10000 & 100 & 363.85 & 2192.45 & 1 & \textbf{9.44} & 11.66 & \textbf{99} & \textbf{38.55} \\
\hline
\multirow{9}{*}{\makecell{Multiple\\strongly\\correlated}} & 20 & 100 & 0.04 & 0.00 & 14 & \textbf{0.02} & 0.01 & \textbf{85} & \textbf{1.70} \\
 & 50 & 100 & 0.06 & 0.06 & 33 & \textbf{0.04} & 0.04 & \textbf{61} & \textbf{1.45} \\
 & 100 & 100 & 0.11 & 0.14 & \textbf{52} & \textbf{0.09} & 0.17 & 47 & \textbf{1.22} \\
 & 200 & 100 & 0.25 & 0.50 & 47 & \textbf{0.22} & 0.71 & \textbf{51} & \textbf{1.15} \\
 & 500 & 100 & 0.55 & 1.24 & 38 & \textbf{0.46} & 1.04 & \textbf{47} & \textbf{1.20} \\
 & 1000 & 100 & 0.91 & 2.75 & 40 & \textbf{0.82} & 1.69 & \textbf{53} & \textbf{1.12} \\
 & 2000 & 100 & 1.98 & 7.86 & 30 & \textbf{1.54} & 3.54 & \textbf{53} & \textbf{1.29} \\
 & 5000 & 100 & 7.66 & 25.82 & 31 & \textbf{4.46} & 6.20 & \textbf{63} & \textbf{1.72} \\
 & 10000 & 100 & 20.22 & 55.00 & 22 & \textbf{8.62} & 8.48 & \textbf{72} & \textbf{2.34} \\
\hline
\multirow{9}{*}{\makecell{Profit\\ceiling}} & 20 & 100 & 0.07 & 0.03 & 14 & \textbf{0.05} & 0.03 & \textbf{74} & \textbf{1.41} \\
 & 50 & 100 & 0.12 & 0.08 & 18 & \textbf{0.07} & 0.09 & \textbf{72} & \textbf{1.74} \\
 & 100 & 100 & 0.18 & 0.26 & 29 & \textbf{0.08} & 0.20 & \textbf{66} & \textbf{2.25} \\
 & 200 & 100 & 0.26 & 0.93 & 28 & \textbf{0.10} & 0.35 & \textbf{67} & \textbf{2.69} \\
 & 500 & 100 & 0.84 & 5.81 & 17 & \textbf{0.28} & 1.04 & \textbf{80} & \textbf{3.03} \\
 & 1000 & 100 & 0.76 & 20.56 & 10 & \textbf{0.26} & 1.56 & \textbf{87} & \textbf{2.98} \\
 & 2000 & 100 & 2.02 & 66.15 & 14 & \textbf{0.50} & 2.23 & \textbf{79} & \textbf{4.03} \\
 & 5000 & 100 & 7.52 & 501.18 & 31 & \textbf{1.37} & 4.36 & \textbf{61} & \textbf{5.50} \\
 & 10000 & 100 & 12.63 & 1961.76 & 35 & \textbf{2.36} & 6.96 & \textbf{59} & \textbf{5.35} \\
\hline
\multirow{9}{*}{\makecell{Circle}} & 20 & 100 & 0.04 & 0.00 & 7 & \textbf{0.02} & 0.01 & \textbf{85} & \textbf{1.79} \\
 & 50 & 100 & 0.05 & 0.02 & \textbf{52} & \textbf{0.04} & 0.04 & 40 & \textbf{1.22} \\
 & 100 & 100 & \textbf{0.10} & 0.10 & \textbf{71} & \textbf{0.10} & 0.14 & 29 & 0.96 \\
 & 200 & 100 & \textbf{0.20} & 0.24 & \textbf{81} & 0.24 & 0.31 & 19 & 0.83 \\
 & 500 & 100 & \textbf{0.68} & 1.19 & \textbf{82} & 0.86 & 1.10 & 14 & 0.79 \\
 & 1000 & 100 & \textbf{1.37} & 3.47 & \textbf{51} & 1.53 & 2.62 & 36 & 0.89 \\
 & 2000 & 100 & \textbf{3.46} & 7.06 & 36 & \textbf{3.39} & 4.19 & \textbf{59} & 1.02 \\
 & 5000 & 100 & 14.73 & 28.00 & 24 & \textbf{9.45} & 8.99 & \textbf{74} & \textbf{1.56} \\
 & 10000 & 100 & 29.28 & 46.13 & 23 & \textbf{16.31} & 13.70 & \textbf{76} & \textbf{1.79} \\
\hline
\textbf{Total} & -- & 31800 & 0.68 & 1167.12 & 9270 & 0.31 & 4.62 & 20601 & 2.20 \\
\hline
\bottomrule
\end{tabular}
}
\end{subtable}
\end{table}

Among the classes where \texttt{RECORD} is not dominant, four deserve particular attention: \emph{uncorrelated}, \emph{weakly correlated}, \emph{almost strongly correlated}, and \emph{uncorrelated with similar weights}. We observed that \texttt{RECORD} suffered from overhead associated with the implementation choices required to incorporate the new techniques and features. In these classes, the computing time of \texttt{RECORD} presented a linear correlation with the number of items. On the other hand, the computing time of \texttt{COMBO} grew by a factor smaller than~2 even when the number of items doubled. A plausible explanation is that \texttt{COMBO} relies on design choices that reduce cache misses, resulting in an apparently sublinear scaling. Although \texttt{RECORD} is slower, such classes are easy for both solvers, with all instances solved in under 6 milliseconds on average. Another class in which \texttt{RECORD} performed slightly worse is the \emph{circle} class, for which the worst performance is observed on instances with between 200 and 1000 items.

For all remaining classes, \texttt{RECORD} clearly outperformed \texttt{COMBO}, with speedups ranging from a few times to dozens of times compared to \texttt{COMBO} runtime. Notably, in the \emph{strongly correlated} and \emph{inverse strongly correlated} classes, the speedups stem from \texttt{RECORD} having superior heuristics and avoiding the solution of the SBKP, which is harder than the original instance. In the \emph{subset sum} class, it also benefits from the proposed heuristics. In the \emph{uncorrelated span}, \emph{weakly correlated span}, and \emph{strongly correlated span} classes, we notice the positive impact of the item aggregation, multiplicity reduction, and fixing-by-dominance, since such instances contain many identical items or items that are integer multiples of each other. Finally, the superior performance in the \emph{profit ceiling} class is mainly due to the new divisibility bound.

Figure~\ref{fig::figure_combo_RECORD} presents two comparison plots. The left plot compares the average runtimes of \texttt{COMBO} and \texttt{RECORD} when the Pisinger instances are grouped by class, number of items, and range parameter. Under this aggregation, \texttt{RECORD} is at most twice as slow, and its average runtime never exceeds 20 ms. In contrast, \texttt{COMBO} is slower in several groups by factors larger than 2 and sometimes greater than 10, with runtimes reaching the order of seconds.

The right plot shows the cumulative percentage of solved instances as a function of runtime, without grouping. For any given time limit, \texttt{RECORD} has solved more instances, with only a few requiring between 100 ms and 1 second. In contrast, \texttt{COMBO} reaches runtimes of up to 100 seconds.

\begin{figure}[hb!t]
    \centering
    \caption{\texttt{COMBO} Time vs Time Ratio \texttt{COMBO}/\texttt{RECORD}}~\label{fig::figure_combo_RECORD}
    \includegraphics[width=0.99\linewidth]{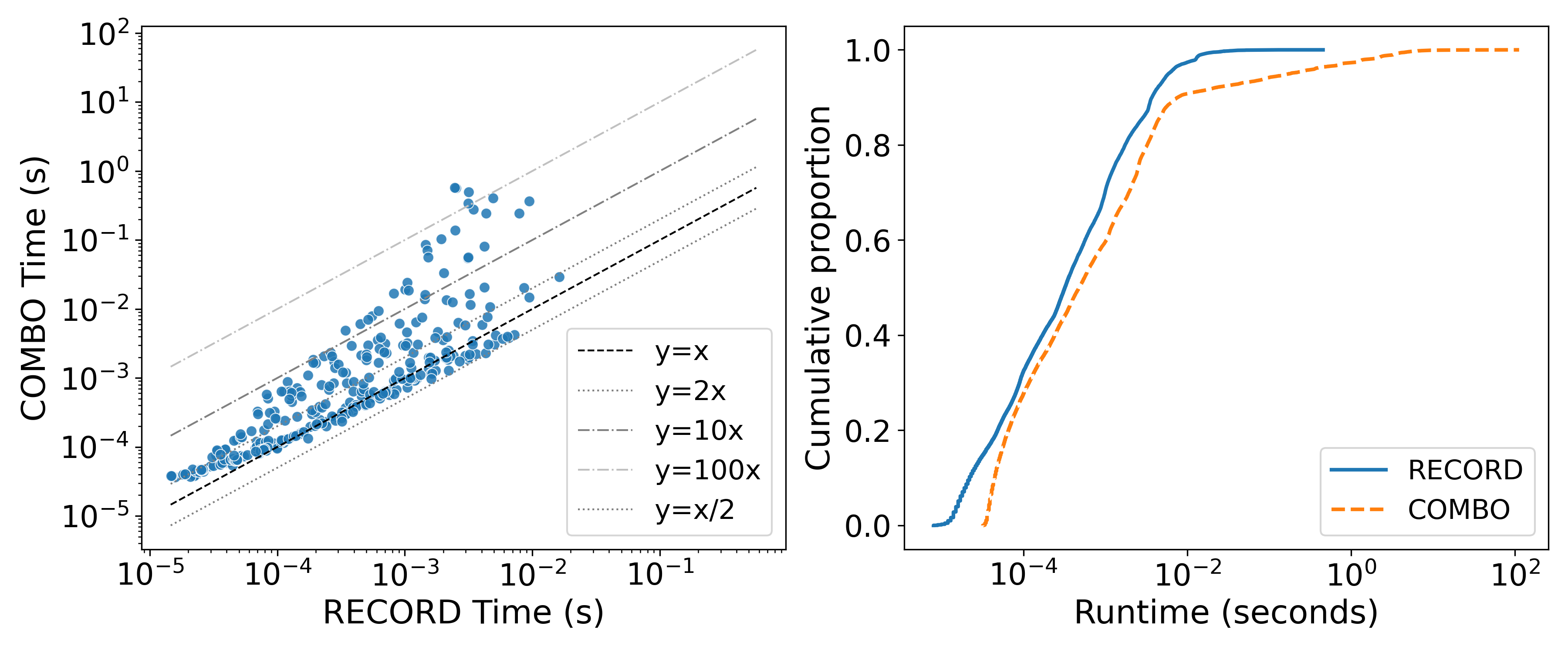}
\end{figure}

\subsection{Results in the Pisinger's benchmark for BKP}~\label{sec::results_bkp}
Seen that \texttt{BOUKNAP} was primarily designed to handle high item availabilities, so we next propose BKP instances that allow the solver to exploit its main features. To this end, we modified David Pisinger’s instance generator, available on his website. Item availabilities were drawn uniformly at random from $[1,20]$, and duplicate items were avoided, except in the three span classes where they were required to reach the desired number of items. For each class, we generated 100 instances for each pair $(n,R) \in {(10^2,10^3),(10^2,10^6),(10^3,10^4),(10^3,10^6),(10^4,10^5),(10^4,10^6)}$. The generation procedure also includes the last six classes with large coefficients.

For \texttt{COMBO}, each BKP instance is converted into a KP instance using the binary decomposition described in Remark~\ref{rmk::decomposition}, except for the classes \emph{Weakly correlated}, \emph{Strongly correlated}, \emph{Inverse strongly correlated}, and \emph{Uncorrelated with similar weights}, where we apply a linear reduction. This choice is motivated by the fact that binary decomposition can adversely affect the surrogate relaxation used by \texttt{COMBO}, leading to significantly worse performance for these classes.

Table~\ref{tab::bounded} compares \texttt{BOUKNAP}, \texttt{COMBO}, and \texttt{RECORD} under a time limit of 20 minutes. Instances are grouped by $n$ and $R$. For each solver, we report the number of instances solved to optimality and the corresponding absolute gap at the time limit. Since \texttt{RECORD} solves all instances, and \texttt{COMBO} solves all instances in the left portion of the table, their optimality and gap columns are omitted there. Runtimes exceeding 100 milliseconds are reported in scientific notation, and average times are highlighted in bold when they are at most 5\% worse than the best average time in the row.

\begin{table}[htbp]
\caption{Comparison between COMBO, BOUKNAP, and RECORD on bounded benchmark classes. \label{tab::bounded}}
\centering
\begin{subtable}[t]{0.46\textwidth}
\centering
\resizebox{\linewidth}{!}{%
\begin{tabular}{l c c c c c c c}
\toprule
\multicolumn{3}{c}{} & \multicolumn{3}{c}{\textbf{BOUKNAP}} & \multicolumn{1}{c}{\textbf{COMBO}} & \multicolumn{1}{c}{\textbf{RECORD}} \\
\cmidrule(lr){4-6} \cmidrule(lr){7-7} \cmidrule(lr){8-8}
Class & $N$ & $R$ & \#OPT & \makecell{Abs\\Gap} & \makecell{Avg\\Time \\(ms)} & \makecell{Avg\\Time \\(ms)} & \makecell{Avg\\Time \\(ms)} \\
\midrule
\multirow{6}{*}{\makecell{Uncorrelated}} & $10^{2}$ & $10^{3}$ & 100 & 0 & \textbf{0.03} & 0.06 & \textbf{0.03} \\
 & $10^{2}$ & $10^{6}$ & 100 & 0 & 0.04 & 0.06 & \textbf{0.03} \\
 & $10^{3}$ & $10^{4}$ & 100 & 0 & 0.26 & 0.24 & \textbf{0.15} \\
 & $10^{3}$ & $10^{6}$ & 100 & 0 & 0.25 & 0.24 & \textbf{0.15} \\
 & $10^{4}$ & $10^{5}$ & 100 & 0 & 2.54 & 2.62 & \textbf{1.47} \\
 & $10^{4}$ & $10^{6}$ & 100 & 0 & 2.65 & 2.72 & \textbf{1.42} \\
\hline
\multirow{6}{*}{\makecell{Weakly\\correlated}} & $10^{2}$ & $10^{3}$ & 100 & 0 & 0.11 & 0.11 & \textbf{0.09} \\
 & $10^{2}$ & $10^{6}$ & 100 & 0 & 0.13 & 0.12 & \textbf{0.09} \\
 & $10^{3}$ & $10^{4}$ & 100 & 0 & 1.10 & 0.72 & \textbf{0.63} \\
 & $10^{3}$ & $10^{6}$ & 100 & 0 & 1.42 & 0.84 & \textbf{0.70} \\
 & $10^{4}$ & $10^{5}$ & 100 & 0 & 12.81 & 8.37 & \textbf{6.25} \\
 & $10^{4}$ & $10^{6}$ & 100 & 0 & 15.81 & 9.86 & \textbf{7.31} \\
\hline
\multirow{6}{*}{\makecell{Strongly\\correlated}} & $10^{2}$ & $10^{3}$ & 100 & 0 & 1.19 & 0.48 & \textbf{0.18} \\
 & $10^{2}$ & $10^{6}$ & 100 & 0 & 1.3e2 & 66.02 & \textbf{2.46} \\
 & $10^{3}$ & $10^{4}$ & 100 & 0 & 1.9e2 & 2.07 & \textbf{0.85} \\
 & $10^{3}$ & $10^{6}$ & 100 & 0 & 3.6e4 & 25.32 & \textbf{11.10} \\
 & $10^{4}$ & $10^{5}$ & 100 & 0 & 2.9e4 & 18.23 & \textbf{3.61} \\
 & $10^{4}$ & $10^{6}$ & 56 & 3.3e4 & 3.7e5 & 23.02 & \textbf{10.56} \\
\hline
\multirow{6}{*}{\makecell{Inverse\\strongly\\correlated}} & $10^{2}$ & $10^{3}$ & 100 & 0 & 1.17 & 0.35 & \textbf{0.15} \\
 & $10^{2}$ & $10^{6}$ & 100 & 0 & 1.3e2 & 25.62 & \textbf{1.44} \\
 & $10^{3}$ & $10^{4}$ & 100 & 0 & 1.8e2 & 2.50 & \textbf{0.81} \\
 & $10^{3}$ & $10^{6}$ & 100 & 0 & 2.3e4 & 14.98 & \textbf{5.95} \\
 & $10^{4}$ & $10^{5}$ & 100 & 0 & 3.5e4 & 20.35 & \textbf{3.85} \\
 & $10^{4}$ & $10^{6}$ & 62 & 2.5e4 & 2.2e5 & 22.97 & \textbf{9.33} \\
\hline
\multirow{6}{*}{\makecell{Almost\\strongly\\correlated}} & $10^{2}$ & $10^{3}$ & 100 & 0 & 0.52 & 0.50 & \textbf{0.24} \\
 & $10^{2}$ & $10^{6}$ & 100 & 0 & 1.00 & 1.07 & \textbf{0.40} \\
 & $10^{3}$ & $10^{4}$ & 100 & 0 & 29.43 & 2.58 & \textbf{1.12} \\
 & $10^{3}$ & $10^{6}$ & 100 & 0 & 39.32 & 3.80 & \textbf{1.26} \\
 & $10^{4}$ & $10^{5}$ & 100 & 0 & 2.1e2 & 68.68 & \textbf{6.87} \\
 & $10^{4}$ & $10^{6}$ & 100 & 0 & 6.2e2 & 1.2e2 & \textbf{9.00} \\
\hline
\multirow{6}{*}{\makecell{Subset\\sum}} & $10^{2}$ & $10^{3}$ & 100 & 0 & 0.66 & 0.21 & \textbf{0.10} \\
 & $10^{2}$ & $10^{6}$ & 100 & 0 & 8.3e2 & 45.50 & \textbf{9.11} \\
 & $10^{3}$ & $10^{4}$ & 100 & 0 & 5.44 & 0.76 & \textbf{0.30} \\
 & $10^{3}$ & $10^{6}$ & 100 & 0 & 1.1e3 & \textbf{2.52} & 5.31 \\
 & $10^{4}$ & $10^{5}$ & 100 & 0 & 84.31 & 3.41 & \textbf{1.76} \\
 & $10^{4}$ & $10^{6}$ & 100 & 0 & 9.4e2 & \textbf{4.20} & 7.86 \\
\hline
\multirow{6}{*}{\makecell{Uncorrelated\\with\\Similar\\Weights}} & $10^{2}$ & $10^{3}$ & 100 & 0 & 0.09 & 0.29 & \textbf{0.04} \\
 & $10^{2}$ & $10^{6}$ & 100 & 0 & 0.14 & 0.32 & \textbf{0.05} \\
 & $10^{3}$ & $10^{4}$ & 100 & 0 & 6.03 & 1.78 & \textbf{0.42} \\
 & $10^{3}$ & $10^{6}$ & 100 & 0 & 6.76 & 2.04 & \textbf{0.47} \\
 & $10^{4}$ & $10^{5}$ & 100 & 0 & 6.9e2 & 75.61 & \textbf{8.25} \\
 & $10^{4}$ & $10^{6}$ & 100 & 0 & 6.6e2 & 93.51 & \textbf{20.42} \\
\hline
\bottomrule
\end{tabular}
}
\end{subtable}%
\hfill%
\begin{subtable}[t]{0.54\textwidth}
\centering
\resizebox{\linewidth}{!}{%
\begin{tabular}{l c c c c c c c c c}
\toprule
\multicolumn{3}{c}{} & \multicolumn{3}{c}{\textbf{BOUKNAP}} & \multicolumn{3}{c}{\textbf{COMBO}} & \multicolumn{1}{c}{\textbf{RECORD}} \\
\cmidrule(lr){4-6} \cmidrule(lr){7-9} \cmidrule(lr){10-10}
Class & $N$ & $R$ & \#OPT & \makecell{Abs\\Gap} & \makecell{Avg\\Time \\(ms)} & \#OPT & \makecell{Abs\\Gap} & \makecell{Avg\\Time \\(ms)} & \makecell{Avg\\Time \\(ms)} \\
\midrule
\multirow{6}{*}{\makecell{Uncorrelated\\span}} & $10^{2}$ & $10^{3}$ & 100 & 0 & 0.81 & 100 & 0 & 0.59 & \textbf{0.03} \\
 & $10^{2}$ & $10^{6}$ & 100 & 0 & 0.79 & 100 & 0 & 0.55 & \textbf{0.04} \\
 & $10^{3}$ & $10^{4}$ & 100 & 0 & 89.28 & 100 & 0 & 15.46 & \textbf{0.20} \\
 & $10^{3}$ & $10^{6}$ & 100 & 0 & 1.1e2 & 100 & 0 & 24.96 & \textbf{0.31} \\
 & $10^{4}$ & $10^{5}$ & 100 & 0 & 9.7e3 & 100 & 0 & 9.0e2 & \textbf{1.85} \\
 & $10^{4}$ & $10^{6}$ & 100 & 0 & 8.0e3 & 100 & 0 & 9.0e2 & \textbf{1.75} \\
\hline
\multirow{6}{*}{\makecell{Weakly\\correlated\\span}} & $10^{2}$ & $10^{3}$ & 100 & 0 & 1.07 & 100 & 0 & 1.03 & \textbf{0.13} \\
 & $10^{2}$ & $10^{6}$ & 100 & 0 & 1.19 & 100 & 0 & 1.10 & \textbf{0.17} \\
 & $10^{3}$ & $10^{4}$ & 100 & 0 & 1.3e2 & 100 & 0 & 77.97 & \textbf{1.38} \\
 & $10^{3}$ & $10^{6}$ & 100 & 0 & 1.2e2 & 100 & 0 & 67.38 & \textbf{1.14} \\
 & $10^{4}$ & $10^{5}$ & 99 & 58.14 & 1.5e4 & 100 & 0 & 5.6e3 & \textbf{17.30} \\
 & $10^{4}$ & $10^{6}$ & 100 & 0 & 1.3e4 & 100 & 0 & 4.3e3 & \textbf{11.50} \\
\hline
\multirow{6}{*}{\makecell{Strongly\\correlated\\span}} & $10^{2}$ & $10^{3}$ & 100 & 0 & 1.02 & 100 & 0 & 1.05 & \textbf{0.14} \\
 & $10^{2}$ & $10^{6}$ & 100 & 0 & 1.32 & 100 & 0 & 1.22 & \textbf{0.17} \\
 & $10^{3}$ & $10^{4}$ & 100 & 0 & 1.7e2 & 100 & 0 & 1.1e2 & \textbf{1.72} \\
 & $10^{3}$ & $10^{6}$ & 100 & 0 & 1.5e2 & 100 & 0 & 84.55 & \textbf{1.63} \\
 & $10^{4}$ & $10^{5}$ & 100 & 0 & 1.3e4 & 100 & 0 & 4.9e3 & \textbf{13.93} \\
 & $10^{4}$ & $10^{6}$ & 98 & 2.9e2 & 1.5e4 & 99 & 1.3e2 & 9.0e3 & \textbf{20.21} \\
\hline
\multirow{6}{*}{\makecell{Multiple\\strongly\\correlated}} & $10^{2}$ & $10^{3}$ & 100 & 0 & 0.49 & 100 & 0 & 0.34 & \textbf{0.20} \\
 & $10^{2}$ & $10^{6}$ & 100 & 0 & 10.16 & 100 & 0 & 4.27 & \textbf{2.24} \\
 & $10^{3}$ & $10^{4}$ & 100 & 0 & 59.88 & 100 & 0 & 33.78 & \textbf{11.03} \\
 & $10^{3}$ & $10^{6}$ & 100 & 0 & 1.0e4 & 100 & 0 & 3.9e3 & \textbf{1.0e3} \\
 & $10^{4}$ & $10^{5}$ & 100 & 0 & 1.1e4 & 100 & 0 & 5.2e3 & \textbf{1.1e3} \\
 & $10^{4}$ & $10^{6}$ & 73 & 3.8e4 & 2.3e5 & 92 & 1.3e4 & 1.1e5 & \textbf{1.7e4} \\
\hline
\multirow{6}{*}{\makecell{Profit\\ceiling}} & $10^{2}$ & $10^{3}$ & 100 & 0 & 0.97 & 100 & 0 & 0.75 & \textbf{0.19} \\
 & $10^{2}$ & $10^{6}$ & 100 & 0 & 2.1e3 & 100 & 0 & 1.1e3 & \textbf{1.2e2} \\
 & $10^{3}$ & $10^{4}$ & 100 & 0 & 51.36 & 100 & 0 & 22.44 & \textbf{2.79} \\
 & $10^{3}$ & $10^{6}$ & 82 & 0.29 & 1.5e4 & 85 & 0.25 & 5.5e3 & \textbf{1.9e2} \\
 & $10^{4}$ & $10^{5}$ & 75 & 0.42 & 9.4e3 & 76 & 0.41 & 2.3e3 & \textbf{65.27} \\
 & $10^{4}$ & $10^{6}$ & 73 & 0.49 & 6.9e4 & 77 & 0.37 & 1.2e4 & \textbf{3.8e2} \\
\hline
\multirow{6}{*}{\makecell{Circle}} & $10^{2}$ & $10^{3}$ & 100 & 0 & 0.42 & 100 & 0 & 0.29 & \textbf{0.24} \\
 & $10^{2}$ & $10^{6}$ & 100 & 0 & 3.56 & 100 & 0 & \textbf{1.28} & \textbf{1.27} \\
 & $10^{3}$ & $10^{4}$ & 100 & 0 & 69.41 & 100 & 0 & \textbf{30.28} & \textbf{30.39} \\
 & $10^{3}$ & $10^{6}$ & 100 & 0 & 2.0e3 & 100 & 0 & \textbf{6.5e2} & 9.7e2 \\
 & $10^{4}$ & $10^{5}$ & 100 & 0 & 1.1e4 & 100 & 0 & \textbf{3.8e3} & 4.4e3 \\
 & $10^{4}$ & $10^{6}$ & 84 & 9.2e4 & 1.5e5 & 100 & 0 & \textbf{4.5e4} & 4.8e4 \\
\hline
Total &  &  & 7602 & 2.4e3 & 73.21 & 7729 & 1.7e2 & 14.65 & \textbf{2.41} \\
\bottomrule
\end{tabular}
}
\end{subtable}
\end{table}

The results show that \texttt{COMBO} is generally stronger than \texttt{BOUKNAP} on this BKP benchmark. This is expected in classes where surrogate relaxation plays a major role, since \texttt{BOUKNAP} does not exploit this feature. Surprisingly, except for a few cases, this advantage persists across all classes, indicating that binary decomposition alone is sufficient to make \texttt{COMBO} competitive on BKP benchmarks. Overall, \texttt{COMBO} solves 7,729 out of 7,800 instances, while \texttt{BOUKNAP} solves 7,602.

\texttt{RECORD} delivers the best performance, solving all 7,800 instances and achieving the lowest average runtime in most groups, often by one to three orders of magnitude in hard classes such as \emph{Strongly correlated span} and \emph{Profit ceiling}. The most difficult cases are concentrated in the last six classes as $R$ increases. For instance, in \emph{Profit ceiling} with $n = 10^2$, the average runtime of \texttt{RECORD} increases from 0.19 ms at $R = 10^3$ to $1.2 \times 10^2$ ms at $R = 10^6$, while for $n = 10^4$ both \texttt{BOUKNAP} and \texttt{COMBO} leave many instances unsolved.
\subsection{Results on Jooken Benchmark}
Table~\ref{tab::new_instances} reports results on the dataset from~\citet{Jooken_2022} with a 1200-second time limit per instance. Columns include the number of instances (\#inst), average runtime and standard deviation (seconds), number of wins (at least 5\% faster in obtaining an optimal solution), average absolute gap (avg abs gap), number of timeouts (TL), and the number of timeouts for which the optimal value was known but not recovered within the time limit ($\text{TL}^*$). Column \emph{Ties} reports instances where the performance difference between \texttt{COMBO} and \texttt{RECORD} is below 5\%, and the last column shows the average runtime ratio between \texttt{COMBO} and \texttt{RECORD}. If both solvers reach the time limit on a given instance, it is classified as a tie, regardless of any differences in their gaps. The \texttt{BOUKNAP} solver was excluded from these experiments because only a negligible number of items have non-unit availabilities.

\begin{table}[!htbp]
\caption{Comparison between COMBO and RECORD on Jooken benchmark. \label{tab::new_instances}}
\centering
\renewcommand{\arraystretch}{0.9}
\resizebox{\textwidth}{!}{%
\begin{tabular}{cccc|rrrrrr|rrrrrr|rr}
\toprule
\multicolumn{4}{c}{} & \multicolumn{6}{c}{\textbf{COMBO}} & \multicolumn{6}{c}{\textbf{RECORD}} & \\
\cmidrule(lr){5-10} \cmidrule(lr){11-16}
$n$ & Capacity & $g$ & $\#\text{inst}$ & \makecell{Avg\\time\\(s)} & \makecell{Std\\time\\(s)} & \makecell{Wins} & \makecell{Avg\\abs\\gap} & \makecell{TL} & \makecell{TL$^{*}$} & \makecell{Avg\\time\\(s)} & \makecell{Std\\time\\(s)} & \makecell{Wins} & \makecell{Avg\\abs\\gap} & \makecell{TL} & \makecell{TL$^{*}$} & \makecell{Ties} & \makecell{Avg\\Time\\Ratio} \\
\midrule
\multirow{6}{*}{400} & $10^{6}$ & 2--6 & 108 & 0.039 & 0.385 & 0 & 0 & 0 & 0 & \textbf{0.007} & 0.044 & \textbf{107} & 0 & 0 & 0 & 1 & \textbf{5.64} \\
 & $10^{6}$ & 10--14 & 108 & 0.003 & 0.050 & 5 & 0 & 0 & 0 & \textbf{0.001} & 0.017 & \textbf{101} & 0 & 0 & 0 & 2 & \textbf{2.54} \\
 & $10^{8}$ & 2--6 & 108 & 0.103 & 2.137 & 0 & 0 & 0 & 0 & \textbf{0.018} & 0.325 & \textbf{107} & 0 & 0 & 0 & 1 & \textbf{5.72} \\
 & $10^{8}$ & 10--14 & 108 & 10.277 & 86.595 & 0 & 0 & 0 & 0 & \textbf{0.874} & 8.218 & \textbf{108} & 0 & 0 & 0 & 0 & \textbf{11.76} \\
 & $10^{10}$ & 2--6 & 108 & 0.120 & 2.676 & 0 & 0 & 0 & 0 & \textbf{0.022} & 0.446 & \textbf{107} & 0 & 0 & 0 & 1 & \textbf{5.56} \\
 & $10^{10}$ & 10--14 & 108 & 340.162 & 564.099 & 0 & 5.3e+3 & 30 & 1 & \textbf{38.750} & 302.402 & \textbf{105} & 0 & 2 & 2 & 3 & \textbf{8.78} \\
\hline
\multirow{6}{*}{600} & $10^{6}$ & 2--6 & 108 & 0.094 & 0.844 & 0 & 0 & 0 & 0 & \textbf{0.012} & 0.076 & \textbf{105} & 0 & 0 & 0 & 3 & \textbf{7.54} \\
 & $10^{6}$ & 10--14 & 108 & 0.003 & 0.056 & 7 & 0 & 0 & 0 & \textbf{0.001} & 0.011 & \textbf{97} & 0 & 0 & 0 & 4 & \textbf{2.59} \\
 & $10^{8}$ & 2--6 & 108 & 0.381 & 7.635 & 0 & 0 & 0 & 0 & \textbf{0.041} & 0.882 & \textbf{108} & 0 & 0 & 0 & 0 & \textbf{9.37} \\
 & $10^{8}$ & 10--14 & 108 & 19.180 & 186.517 & 0 & 0 & 0 & 0 & \textbf{2.095} & 18.189 & \textbf{107} & 0 & 0 & 0 & 1 & \textbf{9.15} \\
 & $10^{10}$ & 2--6 & 108 & 0.409 & 9.058 & 0 & 0 & 0 & 0 & \textbf{0.043} & 1.001 & \textbf{108} & 0 & 0 & 0 & 0 & \textbf{9.39} \\
 & $10^{10}$ & 10--14 & 108 & 501.667 & 495.775 & 0 & 3.1e+5 & 65 & 16 & \textbf{116.903} & 438.380 & \textbf{90} & 2.8e+3 & 16 & 1 & 18 & \textbf{4.29} \\
\hline
\multirow{6}{*}{800} & $10^{6}$ & 2--6 & 108 & 0.130 & 0.697 & 0 & 0 & 0 & 0 & \textbf{0.021} & 0.104 & \textbf{108} & 0 & 0 & 0 & 0 & \textbf{6.08} \\
 & $10^{6}$ & 10--14 & 108 & 0.003 & 0.080 & 6 & 0 & 0 & 0 & \textbf{0.001} & 0.028 & \textbf{98} & 0 & 0 & 0 & 4 & \textbf{2.14} \\
 & $10^{8}$ & 2--6 & 108 & 0.499 & 8.235 & 0 & 0 & 0 & 0 & \textbf{0.073} & 1.555 & \textbf{108} & 0 & 0 & 0 & 0 & \textbf{6.80} \\
 & $10^{8}$ & 10--14 & 108 & 8.255 & 105.364 & 1 & 0 & 0 & 0 & \textbf{1.680} & 21.841 & \textbf{105} & 0 & 0 & 0 & 2 & \textbf{4.91} \\
 & $10^{10}$ & 2--6 & 108 & 0.553 & 9.818 & 0 & 0 & 0 & 0 & \textbf{0.079} & 1.946 & \textbf{108} & 0 & 0 & 0 & 0 & \textbf{7.04} \\
 & $10^{10}$ & 10--14 & 108 & 564.362 & 473.364 & 0 & 5.7e+5 & 67 & 2 & \textbf{187.623} & 459.900 & \textbf{87} & 3.2e+3 & 21 & 4 & 21 & \textbf{3.01} \\
\hline
\multirow{6}{*}{1000} & $10^{6}$ & 2--6 & 108 & 0.220 & 1.016 & 2 & 0 & 0 & 0 & \textbf{0.029} & 0.131 & \textbf{106} & 0 & 0 & 0 & 0 & \textbf{7.49} \\
 & $10^{6}$ & 10--14 & 108 & 0.004 & 0.110 & 14 & 0 & 0 & 0 & \textbf{0.002} & 0.037 & \textbf{88} & 0 & 0 & 0 & 6 & \textbf{1.85} \\
 & $10^{8}$ & 2--6 & 108 & 0.926 & 14.298 & 0 & 0 & 0 & 0 & \textbf{0.113} & 2.606 & \textbf{108} & 0 & 0 & 0 & 0 & \textbf{8.18} \\
 & $10^{8}$ & 10--14 & 108 & 14.065 & 132.758 & 0 & 0 & 0 & 0 & \textbf{2.439} & 27.549 & \textbf{106} & 0 & 0 & 0 & 2 & \textbf{5.77} \\
 & $10^{10}$ & 2--6 & 108 & 1.119 & 22.358 & 0 & 0 & 0 & 0 & \textbf{0.139} & 3.660 & \textbf{108} & 0 & 0 & 0 & 0 & \textbf{8.08} \\
 & $10^{10}$ & 10--14 & 108 & 628.109 & 457.368 & 0 & 4.7e+5 & 72 & 6 & \textbf{266.016} & 496.513 & \textbf{70} & 3.3e+3 & 37 & 14 & 38 & \textbf{2.36} \\
\hline
\multirow{6}{*}{1200} & $10^{6}$ & 2--6 & 108 & 0.362 & 1.405 & 0 & 0 & 0 & 0 & \textbf{0.041} & 0.158 & \textbf{107} & 0 & 0 & 0 & 1 & \textbf{8.82} \\
 & $10^{6}$ & 10--14 & 108 & 0.003 & 0.048 & 10 & 0 & 0 & 0 & \textbf{0.001} & 0.019 & \textbf{91} & 0 & 0 & 0 & 7 & \textbf{1.96} \\
 & $10^{8}$ & 2--6 & 108 & 1.472 & 25.571 & 0 & 0 & 0 & 0 & \textbf{0.155} & 4.862 & \textbf{108} & 0 & 0 & 0 & 0 & \textbf{9.48} \\
 & $10^{8}$ & 10--14 & 108 & 8.628 & 131.636 & 3 & 0 & 0 & 0 & \textbf{2.136} & 24.409 & \textbf{102} & 0 & 0 & 0 & 3 & \textbf{4.04} \\
 & $10^{10}$ & 2--6 & 108 & 1.768 & 36.833 & 0 & 0 & 0 & 0 & \textbf{0.191} & 6.435 & \textbf{108} & 0 & 0 & 0 & 0 & \textbf{9.27} \\
 & $10^{10}$ & 10--14 & 108 & 724.194 & 419.011 & 0 & 3.2e+5 & 81 & 11 & \textbf{322.036} & 491.908 & \textbf{66} & 7.7e+3 & 42 & 12 & 42 & \textbf{2.25} \\
\hline
\textbf{Total} & -- & -- & 3240 & 0.932 & 378.167 & 48 & 5.6e+4 & 315 & 36 & \textbf{0.176} & 262.368 & \textbf{3032} & 5.7e+2 & \textbf{118} & 33 & 160 & \textbf{5.29} \\
\bottomrule
\end{tabular}
}
\end{table}

As observed by~\citet{Jooken_2022}, the most challenging instances are those with a knapsack capacity of $10^{10}$. Furthermore, instance difficulty tends to increase with the number of groups $g$; in particular, instances with $10$ or $14$ groups tend to be the hardest. For brevity, we group the instances by the number of items and knapsack capacity, forming two categories: one with $g \in \{2, 6\}$ and another with $g \in \{10, 14\}$.

As observed, \texttt{RECORD} outperforms \texttt{COMBO} in all instances from this benchmark. In particular, \texttt{COMBO} left 315 instances unsolved, whereas \texttt{RECORD} left only 118, and it achieved a time ratio greater than 1 across all categories. As expected, the instances with $W = 10^{10}$ are more difficult. Additionally, our solver shows a greater improvement in time ratio for $g \in \{2, 6\}$, which is reasonable since, with fewer item groups, each group contains more items, increasing the likelihood of applying the fixing-by-dominance results from Section~\ref{sec::dominance}.
\section{Conclusion}\label{sec::conclusion}
We presented \texttt{RECORD}, a new exact dynamic programming solver for the Knapsack Problem and the Bounded Knapsack Problem. It integrates core-based dynamic programming and SR with several acceleration mechanisms, including multiplicity reduction, refined fixing-by-dominance techniques, new primal heuristics, and an enhanced divisibility bound, resulting in a substantially more efficient exact method.

Across the three experimental sets, \texttt{RECORD} delivered the best overall performance while remaining competitive on easy instances. In the Pisinger KP benchmark, both \texttt{RECORD} and \texttt{COMBO} solved all instances within the 1200-second limit, with \texttt{RECORD} showing large gains on the hardest classes. In the generated BKP benchmark, \texttt{RECORD} solved all 7,800 instances, compared with 7,729 for \texttt{COMBO} and 7,602 for \texttt{BOUKNAP}, and achieved the lowest average time in most groups. On the benchmark of~\citet{Jooken_2022}, \texttt{RECORD} reduced the number of unsolved instances from 315 to 118 and improved average runtime in all categories (overall ratio 5.29), indicating robust behavior across markedly different instance structures.

As future work, we plan to extend our methods to other knapsack variants, such as multidimensional or multiple knapsack problems, and to study the integration of \texttt{RECORD} into broader exact and heuristic frameworks in which the BKP arises as a subproblem.

\appendix

\section{Proofs from the Main Text}

\subsection{Proof of Lemma~\ref{lm::trivial_div}}\label{apx:proof:lm-trivial-div}
\begin{proof}
Since each item \(i \in I_\text{left}\) has \(d_i - u_i\) copies in every improved solution, we construct a residual knapsack using the items in \(I_\text{res}\), where each \(i \in I_\text{res}\) has availability \(u_i\), and the residual capacity is \(\tilde{W}\). It follows that the knapsack capacity of this residual problem can be tightened to the greatest integer \(W'\) such that \(W' \leq \tilde{W}\) and \(W'\) is a multiple of \(\gcd(I_\text{res})\). This implies that every improved solution has capacity at most \(\overline{W}\).
\end{proof}

\subsection{Proof of Theorem~\ref{th::div_rule2}}\label{apx:proof:th-div-rule2}
\begin{proof}
Let \(h \in I^{1}_{\text{left}}\) and consider a modified BKP instance obtained by reducing the availability of \(h\) by one.
By hypothesis, every improved solution for the original instance contains at least \(|I^{1}_{\text{left}}| - 1\) unfixed copies of items in \(I^{1}_{\text{left}}\).
Notice that in the modified instance, all unfixed copies of items in \(I^{1}_{\text{left}} \setminus \{h\}\) are in every improved solution and can therefore be fixed.

As a result, the modified instance has a strictly smaller residual set of items with unfixed copies, yielding a reduced set \(I_{\text{res}}\).
By Lemma~\ref{lm::trivial_div}, we compute a tight knapsack capacity \(\overline{W}_h\) for this instance.
If the corresponding FBKP relaxation with capacity \(\overline{W}_h\) has an optimal value strictly smaller than \(z + 1\), then the modified instance admits no improved solution for the original instance. Consequently, if an improved solution exists, all copies of \(h\) must belong to it, and we can set \(u_h = 0\).
\end{proof}

\subsection{Proof of Theorem~\ref{thm:I2-pairwise}}\label{apx:proof:thm-I2-pairwise}
\begin{proof}
Observe that the first inequality follows directly from the condition \(u_i = 1\). However, the second inequality does not necessarily hold when \(d_i = 1\). We now prove that, assuming the second inequality holds, the statement of the theorem follows. Let \(U\) denote the optimal FBKP value and define the single-copy loss
\[
\delta(i) = U - \mathrm{WB}(i,1).
\]
For any \(i \in I^{1}_{\text{left}}\), it follows that
\(\mathrm{WB}(i,1) = U - \delta(i) \ge z + 1\) and
\(\mathrm{WB}(i,2) = U - 2\delta(i) < z + 1\).

Assume by contradiction that there exist distinct items \(i, i' \in I^{1}_{\text{left}}\) such that
\[
U - \delta(i) - \delta(i') \ge z + 1.
\]
Without loss of generality, we assume that \(\delta(i) \le \delta(i')\). This way,
\[
\mathrm{WB}(i,2) = U - 2\delta(i)
\ge U - \delta(i) - \delta(i')
\ge z + 1,
\]
contradicting \(\mathrm{WB}(i,2) < z + 1\). It follows that no such items $i$ and $i'$ exist.
\end{proof}

\subsection{Proof of Theorem~\ref{th:mr}}\label{apx:proof:th-mr}
\begin{proof}
Let \(x=(x_1,\dots,x_n)\) be a feasible solution for \(I\), where each $x_j$ represents the number of copies of $j \in I$ included in the solution. Let \(\tilde x\) for \(\tilde I\) be defined as:
\[
\tilde x_i = x_i + 2x_{i'},\qquad
\tilde x_{i'} = 0,\qquad
\tilde x_j = x_j \quad\text{for } j\notin\{i,i'\}.
\]
Notice that \(\tilde x_i \le d_i + 2d_{i'} = \tilde d_i\), while total weight and profit are preserved because each copy of \(i'\) contributes the same weight and profit as two copies of \(i\). It follows that \(\tilde x\) is feasible for \(\tilde I\).

Conversely, let \(\tilde x\) be a feasible solution for \(\tilde I\). We may build a solution \(x\) for \(I\) as follows: for all \(j\notin\{i,i'\}\) we set \(x_j=\tilde x_j\) and select
\[
x_{i'} \;=\; \max\!\Big\{0,\;\big\lceil(\tilde x_i - d_i)/2\big\rceil\Big\},
\qquad
x_i \;=\; \tilde x_i - 2x_{i'}.
\]
Since \(\tilde x_i \le \tilde d_i = d_i + 2d_{i'}\), we have \(0 \le x_{i'} \le d_{i'}\). By the choice of \(x_{i'}\) we also have \(0 \le x_i \le d_i\) and \(x_i + 2x_{i'} = \tilde x_i\). Therefore, the total weight and profit of \(x\) are equal to those of \(\tilde x\). It follows that \(x\) is feasible for \(I\). Since this mapping in both directions preserves feasibility and objective value, \(I\) and \(\tilde I\) are equivalent.
\end{proof}

\subsection{Proof of Corollary~\ref{cor:mr-generalized}}\label{apx:proof:cor-mr-generalized}
\begin{proof}
Let $k \in \mathbb{Z}_+$ with $k \ge 2$. If there exist items $i$ and $i'$ such that $p_{i'} = k p_i$, $w_{i'} = k w_i$, $d_i \ge k-1$, and $d_{i'} \ge 1,$ then each copy of $i'$ can be replaced by $k$ copies of $i$ without affecting feasibility or the total profit.
\end{proof}

\subsection{Proof of Lemma~\ref{lm::first_iter}}\label{apx:proof:lm-first-iter}
\begin{proof}
Any extension that adds (or removes) \(d \ge 2\) copies of item \(i\) can be decomposed into \(d\) successive single-copy extensions.
Consequently, a new state \(s\) is generated at iteration \(k\), with \(1 < k \le d\), only if it extends a state \(s'\) generated at iteration \(k-1\).
Such a state \(s'\) must correspond to the extension of some \(s'' \in S_{j-1}\) by \(k-1\) copies of item \(i\).
Since no new state is produced in the first iteration for item \(i\), it follows that no subsequent iteration can generate new states.
\end{proof}

\subsection{Proof of Theorem~\ref{th::k_iter}}\label{apx:proof:th-k-iter}
\begin{proof}
Considering the binary decomposition technique, the $k$-th iteration performs extensions using up to $2^{k}$ copies of item $i$, while the next iteration uses at most $2^{k+1}$ copies. Notice that such a value may be smaller if $k+1$ is the last iteration.

After iteration $k$, the algorithm has considered all possible extensions from a state $s \in S_{j-1}$ using between $0$ and $2^{k+1}-1$ copies of item $i$ (possibly fewer if $k$ is the last iteration). Hence, if a new state $s'$ could be generated in some later iteration $k' > k$, that state would necessarily involve at least $2^{k+1}$ copies of $i$.

However, since the $k$-th iteration did not generate any new state, all states corresponding to up to $2^{k+1}-1$ copies of $i$ must be either dominated or pruned by the LP bound. By the single-copy extension argument in Lemma~\ref{lm::first_iter}, any state containing more copies would also be dominated or pruned, implying that $s'$ cannot exist and no subsequent iteration can generate a new state.
\end{proof}

\subsection{Proof of Lemma~\ref{lm::dominance_right}}\label{apx:proof:lm-dominance-right}
\begin{proof}
According to Lemma~\ref{lm::first_iter}, to add \(h = \lfloor \frac{w_{i'}}{w_i} \rfloor \) copies of item \(i\) cannot generate any new state. Since item \(i'\) is not more efficient than \(h\) copies of \(i\) in terms of weight and profit (i.e., \(w_{i'} \ge h\,w_i\) and \(p_{i'} \le h\,p_i\)), every state obtained from adding one or more copies of \(i'\) is either dominated or pruned by the LP bound. Therefore, no improved solution can include any copy of \(i'\), and we can set \(S_{j'} = S_{j'-1}\).
\end{proof}

\subsection{Proof of Lemma~\ref{lm::dominance_left}}\label{apx:proof:lm-dominance-left}
\begin{proof}
According to Lemma~\ref{lm::first_iter}, to remove \(\left\lceil \frac{w_{i'}}{w_i} \right\rceil = 1\) copy of item \(i\) cannot generate any new state. Since item \(i'\) is at least as efficient as item \(i\), every state obtained from removing one or more copies of \(i'\) is either dominated or pruned by the LP bound. Therefore, no improved solution can be obtained from removing a copy of \(i'\), and we can set \(S_{j'} = S_{j'-1}\).
\end{proof}

\subsection{Proof of Lemma~\ref{lm::dominance_right_2}}\label{apx:proof:lm-dominance-right-2}
\begin{proof}
Recall that any new state must satisfy the capacity limit \(W_{\max}\). Let \(h = \lfloor w_{i'}/w_i \rfloor\) and \(s' \in S_j\) be a state whose extension by \(i'\) generates a new state \(s''\). Let \(\{s' = s^0, \ldots, s^h\}\) be the sequence of states obtained from performing up to \(h\) single-copy extensions of \(s'\) using \(i\). Let \(k\) be the largest index such that \(s^k \in S_j\); clearly \(k < h\) since \(s^h\) dominates \(s''\) and $s''$ is a new state. It follows that,
\[
s^k_w = s''_w - w_{i'} + k w_i \le W_{\max} - w_{i'} + k w_i
\quad \Rightarrow \quad
s^k_w + w_{i'} - k w_i \le W_{\max}.
\]
If the condition holds for $k$, then it also holds for $k+1$. Therefore, the weakest condition is when $k = h - 1$. By hypothesis, the state $s$ exists, and thus $s_w \le s_w^{k}$. This results in the following necessary condition:
\[
s_w + w_{i'} - \bigl(\lfloor w_{i'}/w_i \rfloor - 1\bigr) w_i \le W_{\max}.
\]
\end{proof}

\section{Building the permutation $A^*$}~\label{sec::permutation}
One way to identify the break item \( b \) in linear time is to use an adapted version of the median-finding algorithm, as shown by~\citet{Balas_1980}. For improved practical performance, we instead use a variant of the quickselect algorithm, as also adopted by~\citet{Martello_1999}.

We keep two indices, \( l \) and \( r \), delimiting the current search interval \( [l, r] \). Items in the range \( [1, l) \) belong to the break solution, items in the range \( [r+1, n] \) do not belong to it, and items in the range \( [l, r] \) are uncertain. We define \( ws \) as the weight sum of all items (including all their availabilities) in the range \( [1, l) \). While the interval \( [l, r] \) remains large, a pivot item \( p \) in the interval is selected uniformly at random, and the items in \( [l, r] \) are partitioned using Hoare's algorithm~\citep{cormen_2009} with a comparator that prioritizes items with higher efficiency and breaks ties by larger weights. We notice that breaking ties by sending smaller items to the end often improves the performance of our heuristics. Let \( i \) denote the final position of item \( p \).

Hoare’s algorithm performs a two-way partition, resulting in low constant factors and few swaps. Although two-way partitioning may yield unbalanced partitions when equal elements are not handled explicitly, Hoare’s algorithm mitigates this issue by allowing elements equal to the pivot to appear on either side of the partition index \( i \), which typically leads to balanced partitions in practice.

During the partitioning, we keep track of the weight sum of all items in the left partition \( [l, i - 1] \). If the total weight of \( [l, i] \) (i.e., including the pivot) plus \( ws \) does not exceed the capacity \( W \), all items in \( [l, i] \) are added to the break solution, and their weights are added to the partial sum \( ws \). The interval \( [l, i - 1] \) is then stored for later sorting, and the search continues in the right subinterval \( [i + 1, r] \). Otherwise, the break item belongs to \( [l, i - 1] \). The interval \( [i + 1, r] \) is stored for later sorting, and the search continues in \( [l, i - 1] \). If the interval is small (fewer than $10$ elements), we sort it using \emph{insertion sort} and search on it linearly to identify the break item~$b$. Notice that the algorithm above, from the first iteration where \( l = 1 \) and \( r = n \) to the identification of the break item \( b \), runs in time complexity \( O(n) \) on average.

As a result, in addition to the break item \( b \), we obtain the break solution \( (\widehat{p}, \widehat{w}) \) and a set of intervals that still need to be sorted. These intervals can be classified as either left or right relative to the break item. Naturally, the intervals are relatively sorted: the first left interval contains items that precede those in the second left interval, and the same holds for the right intervals. These intervals are sorted when needed.

Suppose that the current core is \( [l_j, r_j] \) and we need to evaluate a left item. We call \emph{lazy sorting algorithm} the procedure for finding the next unfixed left (or right). An immediate candidate for the next unfixed left item is the item at position \( i = l_j - 1 \). The item \( i \) may be in an unsorted interval, which can be detected in time complexity \( O(1) \) by checking whether \( i \) lies in the last left interval \( [l_{\text{last}}, r_{\text{last}}] \). If this is the case, we locate the correct item for position \( i \) by continuing the sorting process with an algorithm similar to the one used to find \( b \). The latter runs in \( O(r_{\text{last}} - l_{\text{last}} + 1) \) and replaces \( [l_{\text{last}}, r_{\text{last}}] \) with smaller left intervals. If the interval is small, we simply sort it directly using insertion sort.

We observe that the above algorithm may sort items that could be fixed by fixing techniques, which leads to unnecessary computational effort. To address this issue, we check whether the item at position~$i$ can be completely fixed by the weaker upper bound presented in Section~\ref{sec::weak_upper}, which has time complexity~$O(1)$. If so, we proceed to position $i-1$, and so on. If we find a position~$i$ where the item cannot be completely fixed and it lies in an unsorted interval~$[l, i]$, we apply the weaker upper bound to all items in $[l, i]$, and split it into $[l, r]$ and $[r+1, i]$. The interval $[r+1, i]$ contains only fully fixed items, so we can perform the lazy sort only on~$[l, r]$, since the candidate item position is now~$r$.

Calling the lazy sorting algorithm for the right item is analogous. Thus, our ordering~$A^*$ is created by selecting unfixed left and right items in an alternating fashion, whenever both sides are not yet fully processed. Preliminary computational experiments showed this algorithm improved the linear relaxation bound more quickly. In contrast, \texttt{COMBO} adopts a different strategy: it also alternates between left and right items, but if the item at position $l-1$ can be fixed, it then evaluates $r+1$ instead of continuing to $l-2$. In other words, \texttt{COMBO} does not enforce the processing of unfixed left and right items in an interleaved manner.

It is important to mention that if our algorithm performs a constant number of calls to the lazy sorting algorithm before proving the optimality of the incumbent solution, the time complexity to obtain~\(A^*\) is \(O(n)\) on average, since only a constant number of quickselect calls are performed; in other words, it is not necessary to sort all items completely.

\section{Recovering the solution}~\label{sec::recover}
Keeping every evaluated state with an explicit pointer to its parent would make solution recovery a trivial process, but it is very memory-consuming. State-of-the-art solvers were designed in an era when memory was scarce, and therefore they discard the previous state set \(S_{j-1}\) after computing \(S_j\). To achieve similarly low memory usage, we also discard previous states in our algorithm, which indeed may require a sophisticated procedure to reconstruct the optimal solution. While this choice necessitates a more elaborate procedure to reconstruct an optimal solution, it is necessary to handle the challenging instances proposed by~\citet{Jooken_2022}.

Each state \(s\) keeps its profit \(s_p\), weight \(s_w\), and a 64-bit mask \(s_m\). At index \(j\) of the DP recursion, we keep a change vector $
V = \bigl((i_0,m_0),(i_1,m_1),\dots,(i_{63},m_{63})\bigr)$, which saves the items and their number of copies used in the last 64 iterations. For each state \(s\), bit \(k\) of \(s_m\) is set if and only if \(s\) is created by extending its ancestor with \(m_k\) copies of item \(i_k\).

When an improved solution is found, we save three elements: the incumbent state \(s^{\text{inc}}\), the current change vector \(V_{\text{inc}}\), and the current core \([l_{\text{inc}},r_{\text{inc}}]\). If the incumbent is obtained from a heuristic, the core is defined as the smallest interval containing \([l_j,r_j]\) and all items used by the solution.

After the DP recursion is solved, we recover parts of the optimal solution implied by \(V_{\text{inc}}\) and \([l_{\text{inc}},r_{\text{inc}}]\) in the following way:
\begin{itemize}
  \item For each \((i_k,m_k)\in V_{\text{inc}}\) with \(i_k < b\), disregard \(m_k\) copies of item \(i_k\) in the solution.
  \item For each \((i_k,m_k)\in V_{\text{inc}}\) with \(i_k \ge b\), fix \(m_k\) copies of item \(i_k\) in the solution.
  \item All copies of items \(i < l_{\text{inc}}\) are fixed in the solution, while all items \(i > r_{\text{inc}}\) are discarded.
\end{itemize}

Once the total value of the fixed items is equal to \(s_p^{\text{inc}}\), the solution is completely recovered. Otherwise, the fixed items define a prefix solution \((p_{\text{pref}},w_{\text{pref}})\), and a reduced knapsack instance is solved to recover the remaining items. The reduced instance has knapsack capacity \( W' = s_w^{\text{inc}} - w_{\text{pref}} \) and an initial upper bound \( s_p^{\text{inc}} - p_{\text{pref}} \), which allows us to terminate as soon as a state representing an optimal solution is found. An additional observation not explored by \texttt{COMBO} is that we may assume an incumbent solution with value \( z = s_p^{\text{inc}} - p_{\text{pref}} - 1 \) is known. This strong bound allows us to prune many states that cannot lead to an optimal solution.

In the preliminary experiments, we noticed that the reduced instances are typically much easier than the original ones. Although in the worst case we may need to solve \(O(n)\) reduced instances, in practice the number is small (usually fewer than four), and the total recovery time rarely exceeds 30\% of the overall computing time.

\section{Surrogate Relaxation with Cardinality Constraints}~\label{ap::sr}
The state-based LP bound usually results in small relative gaps for the BKP, although the absolute gaps may still be large when item profits are high. Moreover, even in instances with large initial gaps, it typically decreases quickly after a few items have been enumerated. There are instances for which the state-based LP bound does not improve quickly, even after enumerating many of the items. An example of an instance where this happens is when every optimal integer solution may contain exactly \( K \) items, while an LP solution \( x^\text{LP} \) may take a much larger value, with \( \sum_{i = 1}^n x^\text{LP}_i = K' \notin \mathbb{Z}_+ \) and \( K < K' < K + 1 \).

Following~\citet{Martello_1997} for the KP, if an oracle can guarantee that at least one optimal solution has between $L$ and $K$ items, then the following cardinality constraints can be added to the BKP\@:
\begin{align}
    \sum_{i=1}^n x_i &\ge L \text{ (minimum cardinality)},~\label{eq:min-card}\\
    \sum_{i=1}^n x_i &\le K \text{ (maximum cardinality)}.~\label{eq:max-card}
\end{align}

Identifying tight values of $L$ and $K$ is not straightforward for general BKP instances. Next, we introduce two bounds proposed by~\citet{Martello_1997} that may help with this objective.
Given an incumbent value $z$, any improved solution must satisfy
\[
    N_{\min} \;=\; \min \left\{ \sum_{i=1}^n x_i \;:\;
    \sum_{i=1}^n p_i x_i \ge z+1,\; x_i \le d_i,\; x_i \in \mathbb{Z}^+, \forall i \in \{1, \ldots, n\} \right\}.
\]
Conversely, every feasible solution must satisfy
\[
    N_{\max} \;=\; \max \left\{ \sum_{i=1}^n x_i \;:\;
    \sum_{i=1}^n w_i x_i \le W,\; x_i \le d_i,\; x_i \in \mathbb{Z}^+, \forall i \in \{1, \ldots, n\} \right\}.
\]
Both \( N_{\min} \) and \( N_{\max} \) can be computed in linear time using an adapted version of the quickselect algorithm. A first option is to set \( L = N_{\min} \) and \( K = N_{\max} \), but additional alternatives are presented at the end of this section.

Incorporating cardinality constraints directly into the DP recurrence is challenging, since they transform the problem into a two- or three-constraint KP\@. One solution found in the literature is to relax them to obtain stronger upper bounds through \emph{Lagrangian relaxation}~\citep{Martello_1997} and \emph{surrogate relaxation}~\citep{Martello_1999}. Since both cardinality constraints are structurally analogous, we focus on the one given in~\eqref{eq:max-card}.
In the Lagrangian relaxation, the maximum-cardinality constraint is dualized and included in the objective function with a multiplier~$\lambda$. This results in the Lagrangian Bounded Knapsack Problem~(LBKP).
In contrast, the SR aggregates both the maximum-cardinality and capacity constraints into a single inequality using a multiplier~$\mu$, resulting in the following Surrogate Bounded Knapsack Problem (SBKP):
\begin{align}
    \text{maximize} \quad & \sum_{i = 1}^{n} p_i x_i &&~\label{eq:sbkp-obj} \\
    \text{subject to} \quad & \sum_{i=1}^n w_i x_i + \mu \sum_{i=1}^n x_i \;\le\; W + K \mu, &&~\label{eq:sbkp-cap} \\
    & x_i \leq d_i, & \text{for all } i \in \{1, \ldots, n\}, &&~\label{eq:sbkp-bounds} \\
    & x_i \in \mathbb{Z}_+, & \text{for all } i \in \{1, \ldots, n\}. &&~\label{eq:sbkp-int}
\end{align}
Determining the multipliers that give the tightest upper bounds for LBKP and SBKP is computationally expensive. However, we can relax the integrality constraints of both problems, resulting in the Lagrangian Fractional Bounded Knapsack Problem~(LFBKP) and Surrogate Fractional Bounded Knapsack Problem~(SFBKP). Since both LFBKP and SFBKP are convex, they can be solved with binary search or gradient procedures.

It is important to note that the optimal multipliers $\lambda^*$ and $\mu^*$ for LFBKP and SFBKP, respectively, yield identical upper bounds. These bounds are stronger than the LP bound only if the FBKP solution~\eqref{eq:optimal_fract} violates their respective cardinality constraints \eqref{eq:min-card} or \eqref{eq:max-card}. Moreover, since benchmark instances in the literature use integer parameters and the optimal multipliers do not need to be integers,~\citet{Martello_1999} commented that the best integer-valued multiplier is often sufficient in practice, conveniently avoiding numerical precision issues associated with floating-point arithmetic. The computational experiments have confirmed this observation: the best integer multipliers $\lambda^*_{\text{int}}$ and $\mu^*_{\text{int}}$ almost always produce the same upper bounds, with only a few exceptions. Moreover, once $\lambda^*_{\text{int}}$ or $\mu^*_{\text{int}}$ is obtained, we can solve LBKP or SBKP using these multipliers and obtain an even tighter bound. As shown by~\citet{Martello_1999}, the resulting BKP is typically easier to solve, and its solution is often feasible, or even optimal, for the original problem. In particular, optimality is guaranteed when the cardinality constraint is satisfied with equality, and even when this does not occur, a feasible solution may still improve the incumbent one.

Although LFBKP and SFBKP are equivalent, no such claim can be made for the corresponding integer problems, LBKP and SBKP, when using the multipliers $\lambda^*_{\text{int}}$ and $\mu^*_{\text{int}}$, respectively. With this in mind, we implemented both approaches and compared them computationally. Our computational experiments indicate that there is no dominance between them, but SBKP provides stronger bounds than LBKP far more frequently than the converse. These findings are consistent with the fact that SR dominates Lagrangian relaxation in the integer case when optimal multipliers are used~\citep{Glover_1975}. That is, this dominance manifests empirically even when using suboptimal multipliers $\lambda^*_{\text{int}}$ and $\mu^*_{\text{int}}$, leading the surrogate approach to exhibit a consistent tendency to perform better.

After that, we decided to keep our algorithm with only the SR, following~\citet{Martello_1999}, with the optimal integer multiplier $\mu^{*}_{\text{int}}$ computed via binary search, as proposed in their work. Let $p_{\max} = \max_i p_i$ and $w_{\max} = \max_i w_i$.
For the maximum-cardinality constraint, we have that
$\mu^{*}_{\text{int}} \in [0,\, p_{\max} w_{\max}]$.
For the minimum-cardinality one, we can represent it in the form of constraint~\eqref{eq:sbkp-cap} with a nonpositive multiplier, which implies
$\mu^{*}_{\text{int}} \in [-w_{\max},\, 0]$.

Let $\Gamma(\mu)$ denote the minimum value of $\sum_{i = 1}^n x_i$ for an optimal SFBKP solution $x$ under multiplier~$\mu$.
To compute it efficiently, we notice that ties between items of equal efficiency can be broken by choosing items with smaller weight values first.
For $\mu \ge 0$,~\citet{Martello_1999} proved that $\Gamma(\mu)$ is monotonically non-increasing as $\mu$ increases.
This property extends naturally to $\mu < 0$.
Moreover, any item with non-positive effective weight $w_i + \mu \le 0$ can always be taken, since its profits are positive and the knapsack capacity is increased by $-(w_i + \mu)$.

We consider the binary search procedure of \citet{Martello_1999} to calculate $\mu^{*}_{\text{int}}$.
For the maximum-cardinality constraint, we initialize $\mu_1 \gets 0$ and
$\mu_2 \gets p_{\max} w_{\max}$, and compute $\mu' = \lfloor (\mu_1 + \mu_2) / 2 \rfloor$. If \(\Gamma(\mu') = K\), we obtain the optimal multiplier.
If \(\Gamma(\mu') > K\), we set \(\mu_2 \gets \mu' - 1\), and if \(\Gamma(\mu') < K\), we set \(\mu_1 \gets \mu' + 1\). The search ends when \([ \mu_1, \mu_2 ]\) becomes empty, and \(\mu^{*}_{\text{int}}\) is the value of \(\mu'\) that provides the tightest bound.

It is important to mention that we observed an issue during the computational experiments in large instances. The product $p_{\max} w_{\max}$ may overflow a 64-bit integer. Moreover, solving SFBKPs using a higher-precision arithmetic (e.g., 128-bit integers) introduces noticeable overhead. \texttt{COMBO} addresses this issue by setting $\mu_2 \gets w_{\max}$, a choice that may lead to weaker upper bounds. We overcome this issue by using a binary search with \emph{expanding bounds}.
We initialize $\mu_1 \gets 0$ and $\mu_2 \gets w_{\max}$, and solve the SFBKP for $\mu_2$.
If $\Gamma(\mu_2) < K$, then $\mu^{*}_{\text{int}} \ge \mu_2$, so we set
$\mu_1 \gets \mu_2 + 1$ and update $\mu_2 \leftarrow 2\mu_2$.
We continue doubling $\mu_2$ while $\Gamma(\mu_2) < K$ and
\[
    \mu_2 \;\le\;
    \frac{2^{63} - 1}{4 \sum_{i=1}^n d_i},
\]
which serves as a safety threshold to avoid overflow.
Once the expansion phase stops, we proceed with the standard binary search on the resulting interval $[\mu_1,\, \mu_2]$. The overall algorithm has time complexity $O(n \log(p_{\max} w_{\max}))$. The optimal value of SFBKP for $\mu^{*}_{\text{int}}$ is denoted $UB^{SF}$ and consists of an upper bound for the original problem. If \(UB^{SF} > z\), then the current incumbent solution cannot be optimal, and we create a candidate SBKP instance with the multiplier \(\mu^{*}_{\text{int}}\), which is solved using our own algorithm without SRs and multiplicity reduction.

We also consider some additional tricks. There are classes of instances, such as \emph{strongly correlated} and \emph{inverse strongly correlated}, for which there exists a constant
$C \in \mathbb{Z}$ such that $p_i = w_i + C$ for all items~$i$.
In these instances, $C$ tends to be exactly the optimal multiplier
$\mu^{*}_{\text{int}}$.
Thus, we simply test whether $C$ is optimal by comparing it with its two neighbors $C-1$ and $C+1$.
If \( C \) is the best among these three values, then the optimal multiplier is obtained in \( O(n) \). In this case, we additionally observed that solving the corresponding SBKP instance is typically unproductive, as it reduces to a subset-sum problem, which is generally significantly harder than the original BKP, and it almost always provides an upper bound that coincides with that of SFBKP. Consequently, we decide not to solve the SBKP for these instances.
\section{Feature Behavior Comparison}~\label{ap:feature}
We evaluate ablated versions of the algorithm, each obtained by disabling one feature or group of features: \emph{completion features} (maximum capacity and minimum profit bounds), \emph{guarded extension}, the new \emph{divisibility bound}, \emph{skip subs iter} (abbrev.\ for skipping subsequent iterations; Theorem~\ref{th::k_iter}), \emph{multiplicity reduction}, \emph{item aggregation with multiplicity reduction}, \emph{fixing-by-dominance} (Lemmas~\ref{lm::dominance_right}--\ref{lm::dominance_right_2}), \emph{item aggregation with multiplicity reduction and fixing-by-dominance}, and the heuristics \emph{TPH}, \emph{SSPH}, \emph{SPH}, and \emph{GCH}.

Since multiplicity reduction depends on item aggregation, it cannot be disabled independently; therefore, we also consider a variant in which both are disabled. Moreover, several features exhibit partial redundancy. To illustrate this, we include a variant where item aggregation, multiplicity reduction, and fixing-by-dominance are all disabled simultaneously.

Experiments are conducted on the Pisinger KP benchmark and on all Jooken instances with $W = 10^6$. Instances are grouped by class and number of items. The complete solver and each variant are executed 20 times, and the median runtime is reported. The table presents the ratio between the runtime of each variant and that of the complete solver. Ratios within 5\% of one are rounded to $1$, and rows consisting entirely of ones are omitted.

Overall, each feature contributes positively to performance. Although fixing-by-dominance alone has a limited impact on the Pisinger instances, disabling it together with item aggregation and multiplicity reduction leads to a substantially larger degradation than disabling only the latter two, revealing partial redundancy. Similar interactions are observed among the heuristics and between the new divisibility bound and other fixing techniques.

\begin{table}[htbp]
\centering
\small
\caption{Time ratio comparison when disabling groups of features from \texttt{RECORD}.}
\label{tab:feature_ablation_time_ratio_all}
\resizebox{0.94\textwidth}{!}{%
\begin{tabular}{r r r r r r r r r r r r r r}
\hline
Class & N & \makecell{completion\\features} & \makecell{guarded\\extension} & \makecell{divisibility\\bounds} & \makecell{skip\\subs\\ iter} & \makecell{multiplicity\\reduction} & \makecell{item agg\\mult red} & \makecell{dominance\\ fixing} & \makecell{item agg\\mult red\\dom fix} & \makecell{TPH} & \makecell{SSPH} & \makecell{SPH} & \makecell{GCH} \\
\hline
\multirow{2}{*}{\makecell{Uncorrelated}} & 50 & 1 & 1.05 & 1 & 1 & 1 & 1 & 1 & 1 & 1 & 1 & 1 & 1 \\
 & 200 & 1 & 1.05 & 1 & 1.06 & 1 & 1 & 1 & 1 & 1 & 1 & 1 & 1 \\
\hline
\noalign{\vskip 1.5pt}
\multirow{6}{*}{\makecell{Strongly\\correlated}} & 100 & 1 & 1 & 1 & 1 & 1 & 1 & 1 & 1 & 1.06 & 1 & 1 & 1 \\
 & 200 & 1 & 1 & 1 & 1 & 1 & 1 & 1 & 1 & 1.05 & 1 & 1.05 & 1 \\
 & 1000 & 1 & 1 & 1 & 1 & 1 & 1 & 1 & 1 & 1 & 1 & 1.18 & 1 \\
 & 2000 & 1 & 1 & 1 & 1 & 1 & 1 & 1 & 1 & 1 & 1 & 1.23 & 1 \\
 & 5000 & 1 & 1 & 1 & 1 & 1 & 1 & 1 & 1 & 1 & 1 & 1.43 & 1 \\
 & 10000 & 1 & 1 & 1 & 1 & 1 & 1 & 1 & 1 & 1 & 1 & 1.44 & 1 \\
\hline
\noalign{\vskip 1.5pt}
\multirow{5}{*}{\makecell{Inverse\\strongly\\correlated}} & 200 & 1 & 1 & 1 & 1 & 1 & 1 & 1 & 1 & 1 & 1 & 1.05 & 1 \\
 & 1000 & 1 & 1 & 1 & 1 & 1 & 1 & 1 & 1 & 1 & 1 & 1.17 & 1 \\
 & 2000 & 1 & 1 & 1 & 1 & 1 & 1 & 1 & 1 & 1 & 1 & 1.20 & 1 \\
 & 5000 & 1 & 1 & 1 & 1 & 1 & 1 & 1 & 1 & 1 & 1 & 1.35 & 1 \\
 & 10000 & 1 & 1 & 1 & 1 & 1 & 1 & 1 & 1 & 1 & 1 & 1.38 & 1 \\
\hline
\noalign{\vskip 1.5pt}
\multirow{8}{*}{\makecell{Subset\\sum}} & 50 & 1.16 & 1 & 1 & 1 & 1 & 1 & 1 & 1 & 1.13 & 1.55 & 1 & 1 \\
 & 100 & 1 & 1 & 1 & 1 & 1 & 1 & 1 & 1 & 1.19 & 1.14 & 1.08 & 1 \\
 & 200 & 1 & 1 & 1 & 1 & 1 & 1 & 1 & 1 & 1 & 1.06 & 1.21 & 1 \\
 & 500 & 1 & 1 & 1 & 1 & 1 & 1 & 1 & 1 & 1 & 1 & 1.23 & 1.06 \\
 & 1000 & 1 & 1 & 1 & 1 & 1 & 1 & 1 & 1 & 1 & 1 & 1.43 & 1 \\
 & 2000 & 1 & 1 & 1 & 1 & 1 & 1 & 1 & 1 & 1 & 1 & 1.38 & 1 \\
 & 5000 & 1 & 1 & 1 & 1 & 1 & 1 & 1 & 1 & 1 & 1 & 1.40 & 1 \\
 & 10000 & 1 & 1 & 1 & 1 & 1 & 1 & 1 & 1 & 1 & 1 & 1.45 & 1 \\
\hline
\noalign{\vskip 1.5pt}
\multirow{9}{*}{\makecell{Uncorrelated\\span}} & 20 & 1 & 1 & 1 & 1.22 & 1 & 1 & 1 & 1 & 1 & 1 & 1 & 1 \\
 & 50 & 1.08 & 1 & 1 & 1.06 & 1 & 1 & 1 & 1 & 1 & 1 & 1 & 1 \\
 & 100 & 1.05 & 1 & 1 & 2.51 & 1 & 1 & 1 & 1.05 & 1 & 1 & 1 & 1 \\
 & 200 & 1.06 & 1 & 1 & 1.28 & 1 & 1.05 & 1 & 1.08 & 1 & 1 & 1 & 1 \\
 & 500 & 1 & 1 & 1 & 1.09 & 1 & 1.12 & 1 & 1.18 & 1 & 1 & 1 & 1 \\
 & 1000 & 1 & 1 & 1 & 2.41 & 1 & 1.33 & 1 & 1.63 & 1 & 1 & 1 & 1 \\
 & 2000 & 1 & 1 & 1 & 1.50 & 1 & 1.27 & 1 & 1.59 & 1 & 1 & 1 & 1 \\
 & 5000 & 1 & 1 & 1 & 1.10 & 1 & 1.45 & 1 & 1.79 & 1 & 1 & 1 & 1 \\
 & 10000 & 1.28 & 1.29 & 1.28 & 3.06 & 1.30 & 2.24 & 1.25 & 2.66 & 1.27 & 1.27 & 1.23 & 1.26 \\
\hline
\noalign{\vskip 1.5pt}
\multirow{9}{*}{\makecell{Weakly\\correlated\\span}} & 20 & 1.09 & 1 & 1 & 1 & 1 & 1 & 1 & 1 & 1 & 1 & 1 & 1 \\
 & 50 & 1.14 & 1 & 1 & 1 & 1 & 1 & 1 & 1 & 1 & 1 & 1 & 1 \\
 & 100 & 1.13 & 1 & 1 & 1 & 1 & 1.07 & 1 & 1.09 & 1 & 1 & 1 & 1 \\
 & 200 & 1.12 & 1 & 1 & 1.08 & 1 & 1.22 & 1 & 1.32 & 1 & 1 & 1 & 1 \\
 & 500 & 1.11 & 1 & 1 & 1.08 & 1.05 & 1.59 & 1 & 2.11 & 1 & 1 & 1 & 1 \\
 & 1000 & 1.06 & 1 & 1 & 1.09 & 1 & 2.86 & 1 & 3.61 & 1 & 1 & 1 & 1 \\
 & 2000 & 1.07 & 1 & 1 & 1.11 & 1.05 & 2.43 & 1 & 5.19 & 1.05 & 1 & 1 & 1 \\
 & 5000 & 1.06 & 1 & 1 & 1.11 & 1.06 & 4.42 & 1 & 9.74 & 1 & 1 & 1 & 1 \\
 & 10000 & 1 & 1 & 1 & 1.15 & 1.06 & 6.77 & 1 & 15.82 & 1 & 1 & 1 & 1 \\
\hline
\noalign{\vskip 1.5pt}
\multirow{9}{*}{\makecell{Strongly\\correlated\\span}} & 20 & 1.18 & 1 & 1 & 1 & 1 & 1 & 1 & 1.07 & 1 & 1 & 1 & 1 \\
 & 50 & 1.17 & 1 & 1 & 1 & 1 & 1 & 1 & 1.07 & 0.94 & 1 & 1 & 1 \\
 & 100 & 1.17 & 1 & 1 & 1.06 & 1 & 1.08 & 1 & 1.14 & 1 & 1 & 1 & 1 \\
 & 200 & 1.12 & 1 & 1 & 1.07 & 1 & 1.16 & 1 & 1.30 & 1 & 1 & 1 & 1 \\
 & 500 & 1.11 & 1 & 1 & 1.12 & 1 & 1.48 & 1 & 2.02 & 1 & 1 & 1 & 1 \\
 & 1000 & 1.07 & 1 & 1 & 1.12 & 1 & 2.53 & 1 & 3.82 & 1 & 1 & 1 & 1 \\
 & 2000 & 1.07 & 1.05 & 1 & 1.16 & 1.05 & 2.17 & 1 & 5.40 & 1 & 1 & 1 & 1 \\
 & 5000 & 1.07 & 1 & 1 & 1.17 & 1.05 & 4.13 & 1 & 12.85 & 1 & 1 & 1 & 1 \\
 & 10000 & 1.06 & 1 & 1 & 1.19 & 1.06 & 5.92 & 1 & 18.59 & 1 & 1 & 1 & 1 \\
\hline
\noalign{\vskip 1.5pt}
\multirow{8}{*}{\makecell{Multiple\\strongly\\correlated}} & 50 & 1 & 1 & 1 & 1 & 1 & 1 & 1 & 1.05 & 1 & 1 & 1.07 & 1 \\
 & 100 & 1 & 1 & 1 & 1 & 1 & 1 & 1 & 1.08 & 1 & 1 & 1 & 1 \\
 & 200 & 1 & 1.06 & 1 & 1 & 1 & 1 & 1 & 1.05 & 1 & 1 & 1 & 1 \\
 & 500 & 1 & 1.11 & 1 & 1 & 1 & 1 & 1 & 1.08 & 1 & 1 & 1 & 1 \\
 & 1000 & 1 & 1.12 & 1 & 1.05 & 1 & 1.11 & 1 & 1.19 & 1 & 1 & 1 & 1 \\
 & 2000 & 1 & 1.16 & 1 & 1.05 & 1 & 1.21 & 1 & 1.28 & 1 & 1 & 1 & 1 \\
 & 5000 & 1 & 1.19 & 1 & 1.10 & 1 & 1.62 & 1 & 1.80 & 1 & 1 & 1 & 1 \\
 & 10000 & 1 & 1.18 & 1 & 1.13 & 1 & 2.24 & 1 & 2.66 & 1 & 1 & 1 & 1 \\
\hline
\noalign{\vskip 1.5pt}
\multirow{9}{*}{\makecell{Profit\\ceiling}} & 20 & 1.18 & 1 & 1.20 & 1 & 1 & 1 & 1 & 1 & 0.91 & 1 & 1 & 1 \\
 & 50 & 1.15 & 1 & 1.12 & 1 & 1 & 1 & 1 & 1 & 1 & 1 & 1 & 1 \\
 & 100 & 1 & 1 & 1.51 & 1 & 1 & 1 & 1 & 1 & 1 & 1 & 1 & 1.08 \\
 & 200 & 1 & 1.07 & 1.52 & 1 & 1 & 1 & 1 & 1 & 1 & 1 & 1.18 & 1 \\
 & 500 & 1 & 1.12 & 1.66 & 1 & 1 & 1.07 & 1 & 1.09 & 1 & 1 & 1 & 1 \\
 & 1000 & 1 & 1.09 & 1.87 & 1 & 1 & 1.08 & 1 & 1.12 & 1 & 1 & 1 & 1 \\
 & 2000 & 1 & 1.13 & 1.49 & 1 & 1 & 1.18 & 1 & 1.20 & 1 & 1 & 1 & 1 \\
 & 5000 & 1 & 1.11 & 1.85 & 1 & 1 & 1.28 & 1 & 1.31 & 1 & 1 & 1 & 1 \\
 & 10000 & 1 & 1.11 & 1.86 & 1 & 1 & 1.33 & 1 & 1.37 & 1 & 1 & 1 & 1 \\
\hline
\noalign{\vskip 1.5pt}
\multirow{5}{*}{\makecell{Circle}} & 500 & 1 & 1.08 & 1 & 1 & 1 & 1 & 1 & 1 & 1 & 1 & 1 & 1 \\
 & 1000 & 1 & 1.09 & 1 & 1 & 1 & 1.11 & 1 & 1.13 & 1 & 1 & 1 & 1 \\
 & 2000 & 1 & 1.07 & 1 & 1.08 & 1 & 1.18 & 1 & 1.21 & 1 & 1 & 1 & 1 \\
 & 5000 & 1 & 1.13 & 1 & 1.18 & 1 & 1.66 & 1 & 1.76 & 1 & 1 & 1 & 1 \\
 & 10000 & 1 & 1.19 & 1.11 & 1.16 & 1 & 1.99 & 1 & 2.15 & 1 & 1 & 1 & 1 \\
\hline
\noalign{\vskip 1.5pt}
\multirow{5}{*}{\makecell{Jooken}} & 400 & 1.43 & 1 & 1 & 1 & 1 & 1 & 2.90 & 2.92 & 1 & 1 & 1 & 1 \\
 & 600 & 1.52 & 1 & 1 & 1 & 1 & 1 & 3.85 & 3.79 & 1 & 1 & 1.08 & 1 \\
 & 800 & 1.49 & 1 & 1 & 1 & 1 & 1 & 4.02 & 3.99 & 1 & 1 & 1.05 & 1 \\
 & 1000 & 1.51 & 1 & 1 & 1 & 1 & 1 & 3.17 & 3.15 & 1 & 1 & 1 & 1 \\
 & 1200 & 1.60 & 1 & 1 & 1 & 1 & 1 & 4.36 & 4.36 & 1 & 1 & 1.05 & 1.09 \\
\hline
\noalign{\vskip 1.5pt}
\end{tabular}
}
\end{table}

\end{document}